\begin{document}

\title{Büchi-Elgot-\!Trakhtenbrot Theorem for Higher- Dimensional Automata}

\author{%
  Amazigh Amrane \\ EPITA Research Laboratory (LRE), France
  \and
  Hugo Bazille \\ EPITA Research Laboratory (LRE), France
  \and
  Emily Clement \\ CNRS, LIPN UMR 7030, Université Sorbonne Paris Nord,
  France
  \vspace*{0.5cm}
  \and
  Uli Fahrenberg \\ EPITA Research Laboratory (LRE), France
  \and
  Marie Fortin \\ Université Paris Cité, CNRS, IRIF,
  France
  \and
  Krzysztof Ziemiański \\ University of Warsaw, Poland
}

\maketitle

\runninghead{A.~Amrane et al}{Büchi-Elgot-\!Trakhtenbrot Theorem for Higher-Dimensional Automata}

\begin{abstract}
In this paper we explore languages of higher-dimensional automata
(HDAs) from an algebraic and logical point of view. Such languages are sets of
finite width-bounded interval pomsets with interfaces (ipomsets) closed
under order extension.
We show that ipomsets can be represented as equivalence classes of words over a particular alphabet, called step sequences.
We introduce an automaton model that recognize such languages.
Doing so allows us to lift the classical Büchi-Elgot-\!Trakhtenbrot Theorem to languages of HDAs: we prove that a set of interval ipomsets is the language of an HDA if and only if it is simultaneously MSO-definable, of bounded width, and closed under order refinement.

\keywords{pomset with interfaces, interval order, non-interleaving con-
currency, higher-dimensional automaton, monadic second-order logic, Büchi-Elgot-\!Trakhtenbrot Theorem}
\end{abstract}


\section{Introduction}

Connections between logic and automata play a key role in several areas of 
theoretical computer science -- logic being used to specify the behaviours of
automata models in formal verification, and  automata being used to prove the 
decidability of various logics.
The first and most well-known result of this kind is the equivalence in expressive
power of finite automata and monadic second-order logic (MSO) over finite 
words, proved independently by B\"uchi \cite{Buechi60}, Elgot \cite{Elgot1961} and
Trakhtenbrot \cite{Trakhtenbrot62} in the 60's.
This was soon extended to infinite words \cite{Buchi1962} as well as finite and
infinite trees \cite{ThatcherW68,Rabin1969,Doner70}.

Finite automata over words are a simple model of sequential systems with a finite
memory, each word accepted by the automaton corresponding to an execution
of the system.
For concurrent systems, executions may be represented as \emph{pomsets} (partially ordered multisets or, equivalently, labelled partially ordered sets). 
Several classes of pomsets and matching automata models have been defined
in the literature, corresponding to different communication models or different
views of concurrency.
In that setting, logical characterisations of classes of automata in the spirit of the
B\"uchi-Elgot-Trakhtenbrot theorem have been obtained for several cases, such as
asynchronous automata and Mazurkiewicz traces \cite{Zielonka87,tho90traces}, 
branching automata and series-parallel pomsets \cite{Kuske00, Bedon15}, step transition systems  and local trace languages \cite{DBLP:conf/apn/FanchonM09,DBLP:conf/concur/KuskeM00},
or communicating finite-state machines and message sequence charts
\cite{GKM06}.

\emph{Higher-dimensional automata} (\emph{HDAs}) \cite{Pratt91-geometry,Glabbeek91-hda} 
are another automaton-based model of concurrent systems.
Initially studied from a geometrical or categorical point of view, the language
theory of HDAs has become another focus for research in the past few years
\cite{DBLP:journals/mscs/FahrenbergJSZ21}.
Languages of HDAs are sets of \emph{interval pomsets with
  interfaces} (\emph{ipomsets})~\cite{DBLP:journals/iandc/FahrenbergJSZ22}.
The idea is that each event in the execution of an HDA corresponds to an
interval of time where some process is active.

\begin{figure}[tbp]
	\centering
	\begin{tikzpicture}[x=1cm]
		\def\possh{-1.3}
		\begin{scope}[shift={(8,0)}]
			\def\hw{0.3}
			\draw[-](0,0)--(0,1.7);
			\fill[fill=green!50!white,-](-0.025,1.2)--(1.2,1.2)--(1.2,1.2+\hw)--(-0.025,1.2+\hw);
			\draw[-] (0,1.2)--(1.2,1.2)--(1.2,1.2+\hw)--(0,1.2+\hw);
			\filldraw[fill=pink!50!white,-](1.3,0.7)--(1.9,0.7)--(1.9,0.7+\hw)--(1.3,0.7+\hw)--(1.3,0.7);
			\filldraw[fill=blue!20!white,-](0.5,0.2)--(1.7,0.2)--(1.7,0.2+\hw)--(0.5,0.2+\hw)--(0.5,0.2);
			\draw[-](2.2,0)--(2.2,1.7);
			\node at (0.6,1.2+\hw*0.5) {$a$};
			\node at (1.6,0.7+\hw*0.5) {$b$};
			\node at (1.1,0.2+\hw*0.5) {$c$};
		\end{scope}
		\begin{scope}[shift={(8,\possh)}]
			\node (a) at (0.4,0.7) {$\ibullet a$};
			\node (c) at (0.4,-0.7) {$c$};
			\node (b) at (1.8,0) {$b$};
			\path (a) edge (b);
			\path[densely dashed, gray]  (b) edge (c) (a) edge (c);
		\end{scope}
		\begin{scope}[shift={(9,2.1*\possh)}]
			\node (a) at (0,0) {$%
				\loset{\ibullet a \ibullet \\ \pibullet c \ibullet}
				\loset{\ibullet a \pibullet \\ \ibullet c \ibullet}
				\loset{\pibullet b \ibullet \\ \ibullet c \ibullet}
				\loset{\ibullet b \\ \ibullet c}$};
		\end{scope}
		\begin{scope}[shift={(4,0)}]
			\def\hw{0.3}
			\draw[-](0,0)--(0,1.7);
			\fill[fill=green!50!white,-](-0.025,1.2)--(1.2,1.2)--(1.2,1.2+\hw)--(-0.025,1.2+\hw);
			\draw[-] (0,1.2)--(1.2,1.2)--(1.2,1.2+\hw)--(0,1.2+\hw);
			\filldraw[fill=pink!50!white,-](1.3,0.7)--(1.9,0.7)--(1.9,0.7+\hw)--(1.3,0.7+\hw)--(1.3,0.7);
			\filldraw[fill=blue!20!white,-](0.5,0.2)--(1.1,0.2)--(1.1,0.2+\hw)--(0.5,0.2+\hw)--(0.5,0.2);
			\draw[-](2.2,0)--(2.2,1.7);
			\node at (0.6,1.2+\hw*0.5) {$a$};
			\node at (1.6,0.7+\hw*0.5) {$b$};
			\node at (0.8,0.2+\hw*0.5) {$c$};
		\end{scope}
		\begin{scope}[shift={(4,\possh)}]
			\node (a) at (0.4,0.7) {$\ibullet a$};
			\node (c) at (0.4,-0.7) {$c$};
			\node (b) at (1.8,0) {$b$};
			\path (a) edge (b) (c) edge (b);
			\path[densely dashed, gray]  (a) edge (c);
		\end{scope}
		\begin{scope}[shift={(5,2.1*\possh)}]
			\node (a) at (0,0) {$%
				\loset{\ibullet a \ibullet \\ \pibullet c \ibullet}
				\loset{\ibullet a \\ \ibullet c}
				b\ibullet\, \ibullet b$};
		\end{scope}
		\begin{scope}[shift={(0.0,0)}]
			\def\hw{0.3}
			\draw[-](0,0)--(0,1.7);
			\fill[fill=green!50!white,-](-0.025,1.2)--(0.4,1.2)--(0.4,1.2+\hw)--(-0.025,1.2+\hw);
			\draw[-](0,1.2)--(0.4,1.2)--(0.4,1.2+\hw)--(0,1.2+\hw);
			\filldraw[fill=pink!50!white,-](1.3,0.7)--(1.9,0.7)--(1.9,0.7+\hw)--(1.3,0.7+\hw)--(1.3,0.7);
			\filldraw[fill=blue!20!white,-](0.5,0.2)--(1.1,0.2)--(1.1,0.2+\hw)--(0.5,0.2+\hw)--(0.5,0.2);
			\draw[-](2.2,0)--(2.2,1.7);
			\node at (0.2,1.2+\hw*0.5) {$a$};
			\node at (1.6,0.7+\hw*0.5) {$b$};
			\node at (0.8,0.2+\hw*0.5) {$c$};
		\end{scope}
		\begin{scope}[shift={(0.0,\possh)}]
			\node (a) at (0.4,0.7) {$\ibullet a$};
			\node (c) at (0.4,-0.7) {$c$};
			\node (b) at (1.8,0) {$b$};
			\path (a) edge (b) (c) edge (b) (a) edge (c);
		\end{scope}
		\begin{scope}[shift={(1,2.1*\possh)}]
			\node (a) at (0,0) {$\ibullet a\, c\ibullet\, \ibullet c\, b\ibullet\, \ibullet b$};
		\end{scope}
	\end{tikzpicture}
	\caption{Activity intervals of events (top),
		corresponding ipomsets (middle),
		and representation as step sequences (bottom).
		Full arrows indicate precedence order;
		dashed arrows indicate event order;
		bullets indicate interfaces.}
	\label{fi:iposets1}
\end{figure}
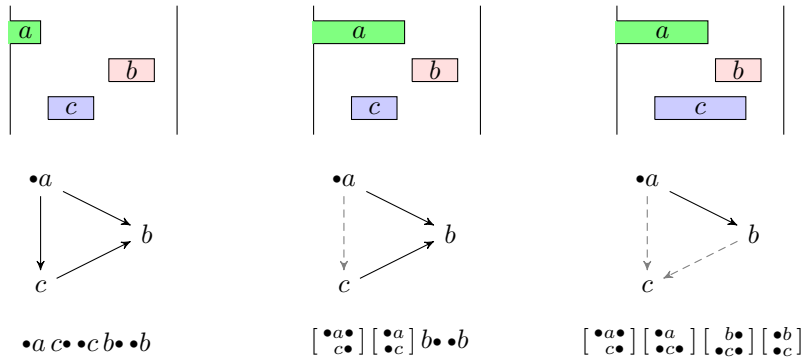

Examples with three activity intervals labelled $a$, $b$, and $c$ are shown in the top of Figure~\ref{fi:iposets1}.
These events are then partially ordered as follows: two events are ordered if
the first one ends before the second one starts, and they are concurrent if
they overlap.
This gives rise to a pomset as shown in the middle of Figure~\ref{fi:iposets1}.
We allow some events
to be started before the beginning (this is the case for the $a$-labelled events
in Figure~\ref{fi:iposets1}), and some events might never be
terminated. Such events define the \emph{interfaces}:
events which are already active in the beginning form the source interface,
those continuing beyond the end, the target interface.


Interval-order pomsets as introduced in \cite{%
  journals/mpsy/Fishburn70,
  book/Fishburn85},
with or without interfaces, are important in other areas of concurrency theory \cite{%
  DBLP:journals/iandc/JanickiY17,
  DBLP:series/sci/2022-1020,
  DBLP:journals/fuin/JanickiK19}
and distributed computing \cite{%
  DBLP:journals/cacm/Lamport78,
  DBLP:journals/dc/Lamport86,
  DBLP:journals/jacm/CastanedaRR18},
as well as relativity theory \cite{Wiener14}.
Compared to, for example, series-parallel pomsets \cite{%
  DBLP:journals/tcs/Gischer88,
  DBLP:journals/tcs/BloomE96a,
  DBLP:journals/mscs/BloomE97},
their algebraic theory is, however, much less developed.
Based on the antichain representations of \cite{%
  book/Fishburn85,
  DBLP:journals/fuin/JanickiK19}
and picking up on ideas in \cite{%
  DBLP:journals/fuin/FahrenbergZ24,
  AMRANE2025115156,
  conf/apn/AmraneBCF24},
we develop here the algebraic theory of interval ipomsets.

We prove that the category of interval ipomsets is isomorphic to one of \emph{step sequences},
which are equivalence classes of words on special discrete ipomsets under a natural relation.
The bottom line of Figure \ref{fi:iposets1} shows the step sequences corresponding to the respective ipomsets.
We also introduce an automaton model for such step sequences, called \emph{ST-automata}
and based on work in \cite{%
  DBLP:journals/lmcs/FahrenbergJSZ24,
  AMRANE2025115156,
  conf/apn/AmraneBCF24,
  DBLP:journals/lites/Fahrenberg22,
  DBLP:conf/adhs/Fahrenberg18},
and show that any HDA may be translated to an ST-automaton with equivalent language.

If we shorten some intervals in an interval representation as in Figure~\ref{fi:iposets1},
then some events which were concurrent become ordered.
Such introduction of precedence order (in Figure~\ref{fi:iposets1}, from right to left)
is called order refinement, and its inverse
(removing precedence order by prolonging intervals; the left-to-right direction in Figure~\ref{fi:iposets1})
is called \emph{subsumption}.
These notions are important in the theory of HDAs, as their languages are closed under subsumption.
We also develop the algebraic theory of subsumptions,
using elementary subsumptions on step sequences.

Several theorems of classical automata theory have already been ported
to higher-dimensional automata, including a Kleene theorem \cite{DBLP:journals/lmcs/FahrenbergJSZ24}
and a Myhill-Nerode theorem~\cite{DBLP:journals/fuin/FahrenbergZ24}.
Here we extend this basic theory of HDAs by exploring the relationship between HDAs and monadic second-order logic.
We prove that a set of interval ipomsets is the language of an HDA if and only if
it is simultaneously MSO-definable, of bounded width, and closed under subsumption.

To do so,
we extend the correspondence between step sequences and interval ipomsets
to the logic side, by showing that a language of interval ipomsets is MSO-definable if and only if the corresponding language of step sequences is
MSO-definable. More specifically, we construct an MSO interpretation of
interval ipomsets into step sequences
(or rather into their representatives), and of
canonical representatives of step sequences into interval ipomsets.
We then use these translations
in order to leverage the classical B\"uchi-Elgot-Trakhtenbrot over words. To go from the language of an HDA to a regular language of words, we also rely on our translation from HDAs to ST-automata. In the other direction, we go through rational expressions and make use of the Kleene theorem for HDAs \cite{DBLP:journals/lmcs/FahrenbergJSZ24}.

Preliminary versions of these results have been presented at DLT 2024 \cite{conf/dlt/AmraneBFF24} and RAMICS 2024 \cite{ABCFZ24}. Among other changes, here we simplify the presentation of the B\"uchi-Elgot-Trakhtenbrot theorem from \cite{conf/dlt/AmraneBFF24} by making explicit the relation between MSO over interval ipomsets and MSO over ST-sequences (Theorem~\ref{th:MSOeq}), as well as relying on the ST-automata from \cite{ABCFZ24}.
We also correct an error in Section~3 of \cite{ABCFZ24}.
There, Lemma 16 takes a subsumption and splits off an elementary subsumption
which removes precisely one pair from the precedence order;
but naive application of the lemma does not ensure that the resulting intermediate pomset is an interval order.
We work around this problem by proving what is now Lemma \ref{le:subselm-complete}
(Lemma 24 in \cite{ABCFZ24}) without making use of the above lemma.

The paper is organised as follows.
Pomsets with interfaces, interval orders and step decompositions  are defined in Section~\ref{se:ipomsets-step}. 
Two isomorphic categories of interval pomsets with interfaces are defined in Section~\ref{se:catiipoms} along with an isomorphic representation based on decompositions called step sequences.
We characterize ipomset subsumptions in Section~\ref{se:step-subsumptions} through their step sequence representations.
In Section~\ref{se:hda-st-a}, we transfer the isomorphisms from Section~\ref{se:catiipoms} to the operational setting by introducing a translation from HDAs to a class of state-labeled automata, called ST-automata, that preserves languages up to these isomorphisms.
Finally, we explore ipomsets and step sequences from a logical point of view in Section~\ref{sec:MSO} and show that, again up to isomorphism, monadic second-order logic has the same expressive power over step sequences and ipomsets when a width bound is fixed.

\section{Pomsets with Interfaces}
\label{se:ipomsets-step}
\label{sse:ipomsets}


In this section we introduce pomsets with interfaces (ipomsets),
which play the role of words for higher-dimensional automata.
We also recall interval orders
and develop the fact that an ipomset is interval if and only if its admits a step decomposition.
We fix an alphabet $\Sigma$, finite or infinite, throughout this paper.

We first define concrete ipomsets, which are finite labelled sets ordered by two relations.
Later on we will take isomorphism classes and call these \emph{ipomsets}.
The two relations on concrete ipomsets are the \emph{precedence} order $<$,
which is a strict partial order
(\ie asymmetric, transitive and thus irreflexive)
and denotes temporal precedence of events,
and the \emph{event} order,
which is an acyclic relation
(\ie such that its transitive closure is a strict partial order)
that restricts to a total order on each $<$-antichain.
The latter is needed to distinguish concurrent events, in particular in the presence of autoconcurrency, and may be seen as a form of process identity.
Concrete ipomsets also contain \emph{sources} and \emph{targets}
which we will use later to define the \emph{gluing} of such structures.

\begin{definition}
  \label{de:ipoms}
  A \emph{concrete ipomset} (over $\Sigma$) is a structure
  $(P, {<}, {\evord}, S, T, \lambda)$
  consisting of the following:
  \begin{itemize}
  \item a finite set $P$ of \emph{events};
  \item a strict partial order $<$ on $P$ called the \emph{precedence order};
  \item an acyclic relation ${\evord}\subseteq P\times P$ called the \emph{event order};
  \item a subset $S\subseteq P$ called the \emph{source set};
  \item a subset $T\subseteq P$ called the \emph{target set}; and
  \item a \emph{labeling} $\lambda: P\to \Sigma$.
  \end{itemize}
  We require that
  \begin{itemize}
  \item for every $x,y\in P$ exactly one of the following holds:
  \[x<y,\qquad y<x,\qquad x\evord y,\qquad y\evord x,\qquad x=y;\]
  \item events in $S$ are $<$-minimal in $P$, \ie for all $x\in S$ and $y\in P$, $y\not< x$; and
  \item events in $T$ are $<$-maximal in $P$, \ie for all $x\in T$ and $y\in P$, $x\not< y$.
  \end{itemize}
\end{definition}

We may add subscripts ``${}_P$'' to the elements above if necessary
and omit any empty substructures from the signature.
We will also often use the notation $\ilo{S}{P}{T}$ instead of $(P, {<}, {\evord}, S, T, \lambda)$
if no confusion may arise.

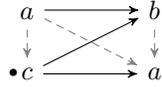
\begin{figure}[tbp]
  \centering
  \begin{tikzpicture}[x=1.2cm, y=1.2cm]
    \node (a) at (0.4,1.4) {$a\vphantom{d}$};
    \node (c) at (0.4,0.7) {$c$};
    \node at (0.25,0.7) {$\ibullet$};
    \node (b) at (1.8,1.4) {$b\vphantom{d}$};
    \node (d) at (1.8,0.7) {$a$};
    \path (a) edge (b) (c) edge (d) (c) edge (b);
    \path[densely dashed, gray] (b) edge (d) (a) edge (d) (a) edge (c);
  \end{tikzpicture}
  \caption{An ipomset, \cf Example~\ref{ex:n}.}
  \label{fi:n}
\end{figure}

\begin{remark}
  This definition of an ipomset is slightly different but equivalent 
  to the definitions given in 
  \cite{DBLP:journals/mscs/FahrenbergJSZ21, DBLP:journals/lmcs/FahrenbergJSZ24}.
  Here we drop the transitivity condition for the event order,
  which allows for a more natural notion of isomorphism.
  We will get back to this issue in Section \ref{se:ipoms-trans}.
\end{remark}

\begin{example}
  \label{ex:n}
  Figure~\ref{fi:n} depicts a concrete ipomset $P=\{x_1, x_2, x_3, x_4\}$ with four events
  labelled by $\lambda(x_1)= \lambda(x_4)=a$, $\lambda(x_2)=b$ and $\lambda(x_3)=c$.
  (We do not show the names of events, only their labels.)
  Its precedence order is given by $x_1<x_2$, $x_3<x_2$ and $x_3<x_4$
  and its event order by $x_1\evord x_3$, $x_1\evord x_4$ and $x_2\evord x_4$.
  The sources are $S=\{x_3\}$ and the targets $T=\emptyset$.
  (We denote these using ``$\ibullet$''.)
  We think of events in $S$ as being already active at the beginning of $P$,
  and the ones in $T$ (here there are none) continue beyond the ipomset~$P$.
\end{example}

A bijection $f:P\to Q$ between concrete ipomsets is an \emph{isomorphism}
if it preserves and reflects the structure;
that is,
\begin{itemize}
\item $f(S_P)=S_Q$, $f(T_P)=T_Q$, $\lambda_Q\circ f=\lambda_P$,
\item $f(x)<_Q f(y)$ iff $x<_P y$, and $x\evord_P y$ iff $f(x)\evord_Q f(y)$.
\end{itemize}
We write $P\simeq Q$ if $P$ and $Q$ are isomorphic.
The following result extends \cite[Lem.\@ 34]{DBLP:journals/mscs/FahrenbergJSZ21}. 

\begin{lemma}
\label{le:UniqueIso}
	There is at most one isomorphism between any two concrete ipomsets.
\end{lemma}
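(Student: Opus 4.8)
The plan is to show that any isomorphism between concrete ipomsets is completely determined by the combinatorial data, so that there is no freedom of choice. The key observation is that the relation in Definition~\ref{de:ipoms} requiring that exactly one of $x<y$, $y<x$, $x\evord y$, $y\evord x$, $x=y$ hold for every pair $x,y$ means that the union $R = {<}\,\cup\,{\evord}$, together with equality, is a \emph{tournament}: for distinct $x,y$ exactly one of $x\mathrel{R}y$ or $y\mathrel{R}x$ holds. Since any isomorphism $f$ preserves both $<$ and $\evord$, it preserves $R$; so it suffices to prove that a finite set equipped with such a tournament $R$ (arising from an ipomset structure) has no nontrivial automorphism, and more precisely that any two such structures admit at most one isomorphism.

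First I would argue that $R$ is acyclic, hence its transitive closure is a strict total order (a tournament whose transitive closure is a partial order is automatically a linear order). Indeed, $<$ is already transitive, $\evord$ is acyclic by hypothesis, and one checks that no cycle can mix the two relations: the sources/targets conditions and the antichain-totality of $\evord$ prevent an alternating cycle — I would spell this out, as this is the technical heart. Concretely, if $x_1 \mathrel{R} x_2 \mathrel{R} \cdots \mathrel{R} x_n \mathrel{R} x_1$ were a shortest cycle, then consecutive edges cannot both be $<$ (transitivity would shorten it) nor interact badly with $\evord$; tracking which edges are $<$ and which are $\evord$ along the cycle and using that $\evord$ restricted to any $<$-antichain is a total (hence transitive on that antichain) order yields a contradiction. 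Once $R$ has a total order as transitive closure, call it $\sqsubset$; any isomorphism $f$ must preserve $R$ and therefore preserve $\sqsubset$ (transitive closure is preserved by any relation-isomorphism). But there is exactly one order-isomorphism between two finite linear orders of the same size, namely the unique monotone bijection. Hence $f$ is forced, proving uniqueness.

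The main obstacle I expect is the acyclicity claim for $R = {<}\,\cup\,{\evord}$: neither relation alone need control the other, and one must use all three axioms (the trichotomy-like condition, the source/target minimality/maximality, and the fact that $\evord$ is total on $<$-antichains) to rule out mixed cycles. A clean way to organise this: take a minimal cycle, observe it has length $\ge 3$, show it cannot contain two consecutive $<$-edges (transitivity), cannot contain two consecutive $\evord$-edges incident to a common $<$-antichain in the wrong way, and then derive that the vertices of the cycle form a $<$-antichain on which $\evord$ — being a total order — is acyclic, contradiction. With acyclicity in hand the rest is the standard fact that finite tournaments with transitive closure a partial order are linear orders, and linear orders of equal finite cardinality have a unique isomorphism; I would state these as one-line consequences rather than reprove them.
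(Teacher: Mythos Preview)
Your approach has a genuine gap: the relation $R = {<} \cup {\evord}$ need \emph{not} be acyclic. Consider the three-element ipomset $P=\{a,b,c\}$ with ${<}=\{(a,b)\}$, ${\evord}=\{(b,c),(c,a)\}$, and $S=T=\emptyset$. All axioms of Definition~\ref{de:ipoms} are satisfied: $<$ is trivially a strict partial order; the transitive closure of $\evord$ is $\{(b,c),(c,a),(b,a)\}$, which is a strict partial order, so $\evord$ is acyclic; and for each pair of distinct elements exactly one of the five alternatives holds. Yet $a\mathrel{R}b\mathrel{R}c\mathrel{R}a$ is a $3$-cycle in $R$. Your sketched argument for acyclicity breaks down precisely here: the cycle contains two consecutive $\evord$-edges, but its three vertices do \emph{not} form a $<$-antichain (since $a<b$), so the fact that $\evord$ is a total order on $<$-antichains yields no contradiction. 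The source/target axioms are vacuous when $S=T=\emptyset$ and cannot help either.

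The paper's proof sidesteps global acyclicity entirely. It proceeds by induction on $|P|$: the set $P_{\min}$ of $<$-minimal elements is a $<$-antichain, so $\evord$ restricted to it \emph{is} a total order, and hence there is a unique $\evord$-least element $x_0\in P_{\min}$. Any isomorphism must send $x_0$ to the corresponding element of $Q$; removing $x_0$ and inducting finishes the argument. This uses the antichain-totality of $\evord$ only locally, where it actually holds, rather than attempting to linearise the whole structure at once.
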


\begin{proof}
	Let $f,g:P\to Q$ be isomorphisms of concrete ipomsets.
	By induction on $n=|P|=|Q|$ we will show that $f=g$. 
	For $n=1$ this is obvious,
	so assume that $n>1$. 
	The set of $<$-minimal elements $P_{\min}$ of $P$
	is totally $\evord$-ordered and thus
	there is a unique $\evord$-minimal element $x_0\in P_{\min}$.
	Similarly, there is a unique $\evord$-minimal element $y_0\in Q_{\min}$.
	Thus, $f(x_0)=g(x_0)=y_0$.
	The restrictions $f\rest {P\setminus\{x_0\}}$ and $g\rest {P\setminus\{x_0\}}$
	are isomorphisms of concrete ipomsets
	$P\setminus \{x_0\}\to Q\setminus\{y_0\}$
	of cardinality $n-1$ and hence they are equal.
	Eventually, $f=g$.
\end{proof}

\begin{definition}
	An \emph{ipomset} is an isomorphism class of concrete ipomsets.
\end{definition}

Thanks to Lemma \ref{le:UniqueIso} we 
can switch freely between ipomsets and their concrete representatives,
which we will do without further notice whenever convenient.
Furthermore, we may always choose representatives in isomorphism classes
such that isomorphisms become equalities:

\begin{lemma}
  \label{le:isoeq}
  For any concrete ipomsets $P$ and $Q$ with $T_P\simeq S_Q$ there exists $Q'\simeq Q$
  such that $T_P=S_{Q'}=P\cap Q'$.
\end{lemma}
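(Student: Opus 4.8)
The plan is to realise $Q'$ as a renamed copy of $Q$ in which the source events are \emph{literally} the target events of $P$, while every other event has been chosen outside of $P$. The renaming of the source events is forced on us: by the hypothesis $T_P\simeq S_Q$ together with Lemma~\ref{le:UniqueIso} there is exactly one isomorphism $\phi: S_Q\to T_P$ between the sub-ipomset carried by $S_Q$ (with the precedence order, event order and labelling induced from $Q$) and the sub-ipomset carried by $T_P$ (induced from $P$), and $\phi$ dictates how each event of $S_Q$ must be renamed.

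For the remaining events I would take a fresh copy. Since $Q\setminus S_Q$ is a set, for cardinality reasons there is a set $R$ with $R\cap P=\emptyset$ together with a bijection $\psi: Q\setminus S_Q\to R$. Put $Q'=T_P\cup R$; this union is disjoint because $T_P\subseteq P$ and $R\cap P=\emptyset$. Define $g:Q\to Q'$ by $g\rest {S_Q}=\phi$ and $g\rest {Q\setminus S_Q}=\psi$. As $\phi$ and $\psi$ are bijections onto $T_P$ and $R$ respectively, while $\{S_Q,\,Q\setminus S_Q\}$ partitions $Q$ and $\{T_P,\,R\}$ partitions $Q'$, the map $g$ is a bijection.

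Now I would transport the entire structure of $Q$ along $g$: declare $x<_{Q'}y$ iff $g^{-1}(x)<_Q g^{-1}(y)$, $x\evord_{Q'}y$ iff $g^{-1}(x)\evord_Q g^{-1}(y)$, $S_{Q'}=g(S_Q)=T_P$, $T_{Q'}=g(T_Q)$ and $\lambda_{Q'}=\lambda_Q\circ g^{-1}$. By construction $g$ is then an isomorphism of concrete ipomsets, so $Q'\simeq Q$; in particular every requirement of Definition~\ref{de:ipoms} is preserved and reflected along $g$, so $Q'$ is a genuine concrete ipomset (for instance $S_{Q'}=g(S_Q)$ consists of $<_{Q'}$-minimal events because $S_Q$ consists of $<_Q$-minimal events). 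Finally $S_{Q'}=T_P$ by construction and $P\cap Q'=P\cap(T_P\cup R)=(P\cap T_P)\cup(P\cap R)=T_P\cup\emptyset=T_P$, which is the assertion.

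There is essentially no hard step: the only point requiring a moment's care is the set-theoretic choice of $R$ disjoint from $P$ and the verification that amalgamating $\phi$ and $\psi$ produces a bijection; everything else is the routine observation that each clause of Definition~\ref{de:ipoms} transfers verbatim along a bijection. I would also note --- not needed for the statement but convenient afterwards for defining gluing --- that, because $\phi$ is an isomorphism, the structure that $T_P=S_{Q'}$ inherits from $Q'$ coincides with the one it inherits from $P$.
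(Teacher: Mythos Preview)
Your proof is correct and follows essentially the same route as the paper's: rename the source events of $Q$ to literally be the target events of $P$ via the unique isomorphism, leave the remaining events (up to renaming) alone, and transport the structure of $Q$ along the resulting bijection. In fact you are slightly more careful than the paper: the paper sets $q_i'=q_i$ for the non-source events without explicitly ensuring these lie outside $P$, whereas your choice of a fresh set $R$ with $R\cap P=\emptyset$ makes the verification of $P\cap Q'=T_P$ watertight.
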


\begin{proof}
  Write $P=\{p_1,\dotsc, p_n\}$ such that $T_P=\{p_1,\dotsc, p_k\}$ for some $k\ge 1$.
  (The lemma is trivially true for $T_P=\emptyset$.)
  Let $f: T_P\to S_Q$ be the (unique) isomorphism
  and enumerate $Q=\{q_1,\dotsc, q_m\}$ such that $f(p_i)=q_i$ for $i=1,\dotsc, k$.
  Define $Q'=\{q_1',\dotsc, q_m'\}$ by $q_i'=p_i$ for $i\le k$ and $q_i'=q_i$ for $i>k$,
  together with a bijection $g: Q'\to Q$ given by $g(q_i')=q_i$.
  Then $g$ introduces partial orders $<_{Q'}$ and $\evord_{Q'}$ on $Q'$ by
  $x <_{Q'} y$ iff $g(x) <_Q g(y)$ and $x \evord_{Q'} y$ iff $g(x) \evord_Q g(y)$.
  Setting $S_{Q'}=g(S_Q)$ and $T_{Q'}=g(T_Q)$ ensures that $g$ is an isomorphism and $T_P=S_{Q'}=P\cap Q'$.
\end{proof}

A \emph{pomset} is an ipomset $P$ without interfaces, \ie with $S_P=T_P=\emptyset$.

We introduce discrete ipomsets and some subclasses of these which will be important in what follows:
conclists will be the objects in the \emph{category} of ipomsets to be defined below,
and starters and terminators will be used in ipomset decompositions.

\begin{definition}
  An ipomset $(U, {<}, {\evord}, S, T, \lambda)$ is
  \begin{itemize}
  \item \emph{discrete} if ${<}$ is empty (hence $\evord$ is total);
  \item a \emph{conclist} (short for ``concurrency list'') if it is a discrete pomset ($S=T=\emptyset$);
    \hfill --  $\square$
  \item a \emph{starter} if it is discrete and $T=U$;
    \hfill -- $\St$
  \item a \emph{terminator} if it is discrete and $S=U$; and
    \hfill -- $\Te$
  \item an \emph{identity} if it is both a starter and a terminator.
    \hfill -- $\Id$
  \end{itemize}
  As already indicated above, we denote
  by $\square$ the set of conclists,
  by $\St$ the set of starters, and
  by $\Te$ the set of terminators.
  We write $\Id=\St\cap \Te$,
  $\Omega=\St\cup \Te$, $\St_+=\St\setminus \Id$, and $\Te_+=\Te\setminus \Id$.
\end{definition}

A conclist is, thus, a pomset of the form $(U, \emptyset, \evord, \emptyset, \emptyset, \lambda)$.
%
The \emph{source interface} of an ipomset $(P, {<}, {\evord}, S, T, \lambda)$ as above is the conclist $(S,  {\evord\rest{S\times S}}, \lambda\rest{S})$
where ``${}\rest{}$'' denotes restriction;
the \emph{target interface} of $P$ is the conclist $(T, {\evord\rest{T\times T}}, \lambda\rest{T})$.

A starter $\ilo{S}{U}{U}$ is an ipomset which starts the events in $U\setminus S$ and lets all events continue.
Similarly, a terminator $\ilo{U}{U}{T}$ is an ipomset in which all events are already active at the beginning
and the events in $U\setminus T$ are terminated.

We call a starter or terminator \emph{elementary} if $|S| = |P| - 1$, resp.\ $|T| = |P| - 1$, that is, if
it starts or terminates exactly one event.
In the following, a discrete ipomset will be represented by a vertical ordering of its elements following the event order, with elements in the source (resp.\ target) set preceded (resp.\ followed) by the symbol $\ibullet$.
For example, the discrete ipomset 
\[
  (\{x_1, x_2, x_3\}, \emptyset, \{(x_i,x_j)\mid i<j\}, \{x_1, x_2\}, \{x_2, x_3\},\{(x_1, a), (x_2, b), (x_3, c)\})
\]
is represented by $\loset{\ibullet a\pibullet \\ \ibullet b \ibullet \\ \pibullet c \ibullet}$.

The \emph{width} $\wid(P)$ of an ipomset $P$ is the cardinality of a maximal $<$-antichain.
(So, for example, the above ipomset has width $3$.)

We recall the definition of the gluing of ipomsets,
an operation that extends concatenation of words and serial composition of pomsets \cite{DBLP:journals/tcs/Gischer88}.
The intuition is that in a gluing $P*Q$,
the events of $P$ precede those of $Q$, \emph{except} for events which are in the target interface of $P$.
These events are continued in $Q$, across the gluing;
hence we require the target interface of $P$ to be isomorphic to the source interface of $Q$.
The underlying set of $P*Q$ is then given as the union of the two, but counting the continuing events only once.

\begin{definition} 
  \label{de:gluing}
  Let $P$ and $Q$ be two concrete ipomsets with $T_P\simeq S_Q$.
  The \emph{gluing} of $P$ and $Q$ is defined
  as $P*Q=(R, {<}, {\evord}, S, T, \lambda)$ where:
  \begin{enumerate}
  \item $R=(P\sqcup Q)_{x=f(x)}$, the quotient of the disjoint union under the unique isomorphism $f: T_P\to S_Q$;
  \item ${<}=
    \{(i(x), i(y))\mid x<_P y\}
    \cup \{(j(x), j(y))\mid x<_Q y\}
    \cup \{(i(x), j(y))\mid x\in P\setminus T_P, y\in Q\setminus S_Q\}$,
    where $i: P\to R$ and $j: Q\to R$ are the canonical injections;
  \item ${\evord}=(\{(i(x), i(y)\mid x\evord_P y\}\cup \{(j(x), j(y)\mid x\evord_Q y\})$;
  \item $S=i(S_P)$; $T=j(T_Q)$; and
  \item $\lambda(i(x))=\lambda_P(x)$, $\lambda(j(x))=\lambda_Q(x)$.
  \end{enumerate}
\end{definition}

\begin{figure}[tbp]
	\centering
	\begin{tikzpicture}[x=.35cm, y=.5cm]
			\begin{scope}
					\node (1) at (2,2) {$\vphantom{bd}a$};
					\node (2) at (0,0) {$b$};
					\node (3) at (4,0) {$\vphantom{bd}c\ibullet$};
					\path (2) edge (3);
					\path (1) edge[densely dashed, gray] (2);
					\path (3) edge[densely dashed, gray] (1);
					\node (ast) at (5.5,1) {$\ast$};
					\node (4) at (7,2) {$d$};
					\node (5) at (7,0) {$\vphantom{bd}\ibullet c$};
					\path (4) edge[densely dashed, gray] (5);
					\node (equals) at (8.5,1) {$=$};
					\node (6) at (10,2) {$\vphantom{bd}a$};
					\node (7) at (10,0) {$b$};
					\node (8) at (14,2) {$d$};
					\node (9) at (14,0) {$\vphantom{bd}c$};
					\path (6) edge (8);
					\path (7) edge (9);
					\path (7) edge (8);
					\path (8) edge[densely dashed, gray] (9);
					\path (9) edge[densely dashed, gray] (6);
					\path (6) edge[densely dashed, gray] (7);
				\end{scope}
		\end{tikzpicture}
	\caption{Gluing composition of ipomsets.}
	\label{fi:compositions}
\end{figure}
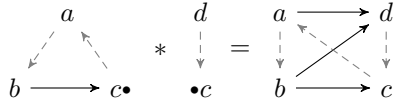

Figure \ref{fi:compositions} shows an example.
The relation $<$ is automatically transitive.
Note that composition is not cancellative:
for example, $a\ibullet * \loset{\ibullet a\\\pibullet  a} = a\ibullet * \loset{\pibullet  a\\\ibullet a} = \loset{a\\a}$.

Gluings of isomorphic ipomsets are isomorphic.
On isomorphism classes,
gluing is associative, and
ipomsets in $\Id$ are identities for $*$.
The next lemma extends Lemma~\ref{le:isoeq} and follows directly from it;
it shows that when gluing ipomsets, we may choose representatives such that isomorphisms become equalities.

\begin{lemma}
  \label{le:isoeqglu}
  For any concrete ipomsets $P$ and $Q$ with $T_P\simeq S_Q$, there exist $Q'\simeq Q$ and $R\simeq P*Q$
  such that
  $T_P=S_{Q'}=P\cap Q'$,
  $R=P\cup Q'$,
  ${<_R}={<_P}\cup {<_{Q'}}\cup (P\setminus Q')\times (Q'\setminus P)$,
  ${\evord_R}={\evord_P}\cup {\evord_{Q'}}$,
  $S_R=S_P$, and
  $T_R=T_{Q'}$. 
\end{lemma}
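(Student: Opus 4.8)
The plan is to apply Lemma~\ref{le:isoeq} and then read off the conclusion directly from Definition~\ref{de:gluing}. First I would invoke Lemma~\ref{le:isoeq} to obtain $Q'\simeq Q$ with $T_P=S_{Q'}=P\cap Q'$. Since gluings of isomorphic ipomsets are isomorphic (as noted after Definition~\ref{de:gluing}), we have $P*Q'\simeq P*Q$, so it suffices to set $R=P*Q'$ and check that it has the stated shape.

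The next step is to observe that for these particular representatives the quotient in item~(1) of Definition~\ref{de:gluing} is literal: the unique isomorphism $f\colon T_P\to S_{Q'}$ is the identity on $T_P=S_{Q'}$, so the pushout $(P\sqcup Q')_{x=f(x)}$ is canonically identified with the plain union $P\cup Q'$, and under this identification the canonical injections $i\colon P\to R$ and $j\colon Q'\to R$ are the inclusions $P\hookrightarrow P\cup Q'$ and $Q'\hookrightarrow P\cup Q'$. Hence $R=P\cup Q'$.

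With $i$ and $j$ being inclusions, each clause of Definition~\ref{de:gluing} rewrites without translation. Clause~(2) gives ${<_R}={<_P}\cup{<_{Q'}}\cup\{(x,y)\mid x\in P\setminus T_P,\ y\in Q'\setminus S_{Q'}\}$; since $T_P=S_{Q'}=P\cap Q'$ we have $P\setminus T_P=P\setminus Q'$ and $Q'\setminus S_{Q'}=Q'\setminus P$, so the third term is exactly $(P\setminus Q')\times(Q'\setminus P)$. Clause~(3) gives ${\evord_R}={\evord_P}\cup{\evord_{Q'}}$, clause~(4) gives $S_R=i(S_P)=S_P$ and $T_R=j(T_{Q'})=T_{Q'}$, and clause~(5) says $\lambda_R$ restricts to $\lambda_P$ on $P$ and to $\lambda_{Q'}$ on $Q'$. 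This is precisely the asserted description of $R$.

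The argument is essentially bookkeeping; the only point requiring a little care is the identification in the second step, namely verifying that the set-theoretic quotient defining the carrier of the gluing really is the plain union $P\cup Q'$ once the overlap $T_P=S_{Q'}=P\cap Q'$ has been arranged, so that the injections $i,j$ are honest inclusions and every clause of Definition~\ref{de:gluing} can be read off verbatim. Everything else follows by direct substitution.
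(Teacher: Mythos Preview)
Your proposal is correct and follows exactly the approach the paper indicates: the paper simply states that the lemma ``extends Lemma~\ref{le:isoeq} and follows directly from it'' without giving further details, and your argument is precisely the natural unpacking of that claim via Definition~\ref{de:gluing}.
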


Ipomsets may be \emph{refined} by removing concurrency and expanding precedence.
The inverse to refinement is called \emph{subsumption}. Formally:

\begin{definition}
  \label{de:subsu}
  A \emph{subsumption} of an ipomset $P$ by $Q$
  is a bijection $f: P\to Q$ for which
  \begin{enumerate}
  \item $f(S_P)=S_Q$; $f(T_P)=T_Q$; $\lambda_Q\circ f=\lambda_P$;
  \item $f(x)<_Q f(y)$ implies $x<_P y$; and
  \item \label{en:subsu.evord}
    $x\evord_P y$ implies $f(x)\evord_Q f(y)$.
  \end{enumerate}
\end{definition}

We write $P\subseq Q$ if there is a subsumption $f: P\to Q$ and $P\subs Q$ if $P\subseq Q$ and
$P\not\simeq Q$.
Intuitively, $P$ has more order and less concurrency than $Q$.
Thus, subsumptions preserve interfaces and labels but may remove precedence order and add event order.
Isomorphisms of ipomsets are precisely invertible subsumptions;
but note that contrary to isomorphisms, subsumptions may not be unique.
The following lemma is a trivial consequence of the definitions.

\begin{lemma}
  \label{le:subsglue}
  Let $P$, $Q$, $P'$, $Q'$ be ipomsets such that
  $T_P=S_Q$, $T_{P'}=S_{Q'}$, $P\subseq Q$, and $P'\subseq Q'$.
  Then $P*P'\subseq Q*Q'$.
\end{lemma}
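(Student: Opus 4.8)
The plan is to glue the two given subsumptions into a subsumption of the glued ipomsets. Fix subsumptions $f\colon P\to Q$ and $f'\colon P'\to Q'$. Since $P*P'$ and $Q*Q'$ are formed, $T_P\simeq S_{P'}$ and $T_Q\simeq S_{Q'}$, so by Lemma~\ref{le:isoeqglu} (applied once to $P,P'$ and once to $Q,Q'$) we may choose concrete representatives in which $P*P'$ has underlying set $P\cup P'$ with $P\cap P' = T_P = S_{P'}$, precedence order ${<_{P*P'}} = {<_P}\cup{<_{P'}}\cup(P\setminus P')\times(P'\setminus P)$, event order ${\evord_{P*P'}} = {\evord_P}\cup{\evord_{P'}}$, $S_{P*P'}=S_P$, $T_{P*P'}=T_{P'}$, and likewise $Q*Q' = Q\cup Q'$ with $Q\cap Q' = T_Q = S_{Q'}$ and the analogous formulas. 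Precomposing and postcomposing $f$ and $f'$ with the isomorphisms produced by the lemma keeps them subsumptions, so this change of representatives is harmless.

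The only step that is not pure bookkeeping is to observe that $f$ and $f'$ agree on the shared interface $P\cap P'$, so that $h := f\cup f'$ is a well-defined map $P*P'\to Q*Q'$. Indeed, $f\rest{T_P}$ is a label-preserving bijection $T_P\to T_Q$ that is monotone for $\evord$ (subsumption condition~(3) restricted to $T_P$); as $T_P$ is a $<$-antichain, $\evord$ restricts to a total order on it, and likewise on $T_Q$, and a monotone label-preserving bijection between finite $\evord$-chains of equal size automatically reflects $\evord$, hence is an isomorphism of conclists. Thus $f\rest{T_P}$ is an isomorphism $T_P\to T_Q$, and symmetrically $f'\rest{S_{P'}}$ is an isomorphism $S_{P'}\to S_{Q'}$; since $T_P = S_{P'}$ and $T_Q = S_{Q'}$, both are isomorphisms between the same pair of conclists, so they coincide by Lemma~\ref{le:UniqueIso}. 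Hence $h$ is a well-defined bijection $P\cup P'\to Q\cup Q'$.

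It then remains to check the three conditions for $h$ to be a subsumption. The interface and labelling conditions are immediate from $S_{P*P'}=S_P$, $T_{P*P'}=T_{P'}$ and the corresponding properties of $f$ and $f'$; and $x\evord_{P*P'}y$ puts the pair in $\evord_P$ or $\evord_{P'}$, whence $h(x)\evord h(y)$ in $Q$ resp.\ $Q'$ and therefore in $Q*Q'$. For the precedence condition, suppose $h(x)<_{Q*Q'}h(y)$; writing $<_{Q*Q'}$ explicitly as ${<_Q}\cup{<_{Q'}}\cup(Q\setminus Q')\times(Q'\setminus Q)$ and using that the elements of $Q\cap Q' = T_Q = S_{Q'}$ are $<_Q$-maximal and $<_{Q'}$-minimal, one checks that $(h(x),h(y))$ can lie in $<_Q$ only if $x,y\in P$, in $<_{Q'}$ only if $x,y\in P'$, and in the cross term only if $x\in P\setminus P'$ and $y\in P'\setminus P$; all other combinations of $x,y$ are vacuous. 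In the first case $f(x)<_Q f(y)$ gives $x<_P y$, in the second case $f'(x)<_{Q'}f'(y)$ gives $x<_{P'}y$ (using that $h$ agrees with $f'$ on $P\cap P'$), and in the third case $x<_{P*P'}y$ holds directly by the definition of $<_{P*P'}$; in every case $x<_{P*P'}y$. Therefore $h$ witnesses $P*P'\subseq Q*Q'$.

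I do not expect a genuine obstacle: the statement is essentially a bookkeeping exercise, and the one place requiring a real argument is the compatibility of $f$ and $f'$ on the glued interface, which is settled by the uniqueness of isomorphisms of concrete ipomsets (Lemma~\ref{le:UniqueIso}); the remaining mild care lies in the case analysis for the precedence condition, where minimality/maximality of the interface elements discards the mixed cases that cannot occur.
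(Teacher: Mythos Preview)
Your proof is correct. The paper does not actually give a proof of this lemma, merely stating that it ``is a trivial consequence of the definitions,'' so there is no approach to compare against. Your explicit treatment of why $f$ and $f'$ agree on the shared interface---reducing it to the uniqueness of isomorphisms between conclists via Lemma~\ref{le:UniqueIso}---is precisely the point that makes this lemma more than pure bookkeeping, and it is good that you spell it out. The case analysis for the precedence condition is also handled cleanly. (As an aside, the hypotheses $T_P=S_Q$ and $T_{P'}=S_{Q'}$ in the paper's statement appear to be typos for $T_P=S_{P'}$ and $T_Q=S_{Q'}$; you have correctly read the lemma as asserting that both gluings in the conclusion are defined.)
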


\begin{example}
  \label{ex:subsu}
  \mbox{}
  \begin{itemize}
  \item In Figure~\ref{fi:iposets1} there is a sequence of proper subsumptions from left to right:
    \begin{equation*}
      \ibullet a c b\subs
      \loset{\ibullet a\\ \pibullet c}*b\subs
      \loset{\ibullet a b\\ c}
    \end{equation*}
  \item The word $a b$ is subsumed by two different pomsets: $a b\subs \loset{a\\b}$ and $a b\subs \loset{b\\a}$.
  \item The fact that $a a\subs \loset{a\\a}$ is witnessed by two different bijections
    $f_1, f_2: a a\to \loset{a\\a}$:
    $f_1$ maps the $<$-minimal $a$ to the $\evord$-minimal $a$,
    and $f_2$ maps it to the $\evord$-maximal $a$ instead.
  \end{itemize}
\end{example}

\begin{definition}
  A \emph{step decomposition} of an ipomset $P$
  is a presentation
  \[
    P=P_1*\dotsc* P_n
  \]
  as a gluing of starters and terminators.
  The step decomposition is \emph{dense} if all of $P_1,\dotsc, P_n$ are elementary;
  it is \emph{sparse} if it is an alternating sequence of proper starters and terminators.
\end{definition}

An ipomset $P$ admits a step decomposition
if and only if $<_P$ 
is an \emph{interval order}~\cite{book/Fishburn85},
\ie if it admits an interval representation
given by functions $b, e: (P, {<_P})\to (\Real, {<_\Real})$ such that
\begin{itemize}
\item $b(x)\le_\Real e(x)$ for all $x\in P$ and
\item $x<_P y$ iff $e(x)<_\Real b(y)$ for all $x, y\in P$.
\end{itemize}
That is, every element $x$ of $P$ is associated with a real interval $[b(x), e(x)]$
such that $x<y$ in $P$ iff the interval of $x$ ends before the one of $y$ begins.
The ipomset of Figure~\ref{fi:n} is interval.
We will only treat interval ipomsets in this paper and thus omit the qualification ``interval''.


The set of (interval) ipomsets is written $\iPoms$.

\begin{lemma}[{%
    \cite[Proposition 3.5]{DBLP:journals/fuin/FahrenbergZ24};
    \cite[Lemma 4]{AMRANE2025115156}}]
  \label{le:ipomsparse}
  Let $P$ be an (interval) ipomset.
  \begin{itemize}
  \item $P$ has a unique sparse step decomposition.
  \item Every dense decomposition $P=P_1*\dotsm*P_n$ has the same length $n$.
  \end{itemize}
\end{lemma}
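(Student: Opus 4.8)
The plan is to work with interval representations $b, e\colon P \to \mathbb{R}$ and to extract step decompositions from the combinatorial structure of the finitely many "break points" of such a representation. Given an interval representation of $P$, the set $\{b(x) \mid x \in P\} \cup \{e(x) \mid x \in P\}$ is finite; after perturbing so that all these reals are distinct (which does not change $<_P$, since strict inequalities are preserved under small perturbation), list them in increasing order $t_0 < t_1 < \dots < t_N$. Reading a "snapshot" of which events are active at each intermediate level and comparing consecutive snapshots yields a sequence of starters and terminators whose gluing is (a representative of) $P$: moving from $t_{i-1}$ to $t_i$ either starts one new event (if $t_i$ is some $b(x)$) or terminates one (if $t_i$ is some $e(x)$), and the event order on each snapshot is the restriction of $\evord_P$. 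This simultaneously proves that an interval ipomset admits a dense step decomposition; conversely, a step decomposition $P = P_1 * \dots * P_n$ immediately gives an interval representation by assigning to each event the interval of indices $j$ for which the event is "present" in $P_j$, which shows $<_P$ is an interval order. (This equivalence is essentially asserted in the text just before the lemma; I would spell out whichever direction is needed.)

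For the first bullet — existence and uniqueness of the sparse decomposition — existence follows from the dense one by a normalisation step: repeatedly merge adjacent starters into a single starter and adjacent terminators into a single terminator (using that $\St$ is closed under gluing and likewise $\Te$, and that gluing is associative), and delete any identity factor. This collapses a dense decomposition into an alternating sequence of proper starters and terminators. For uniqueness, I would argue that the sparse decomposition is forced by $P$ itself: the first factor $P_1$ must be a starter (the very first "event boundary" must be a beginning, as events in $S_P$ are $<$-minimal and some event not yet terminated must start), and among decompositions the maximal prefix that is a single starter is determined by $P$ — it consists of exactly those events that are started before the first termination, equivalently the $<$-minimal events of $P$ together with $S_P$. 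Peeling this off and inducting on $|P|$ plus the number of elementary factors gives uniqueness. Alternatively, and perhaps more cleanly, one shows any two sparse decompositions induce the same partition of the "time axis" into maximal blocks of starts followed by maximal blocks of terminations, which is canonically recoverable from the interval order structure.

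The second bullet — that every dense decomposition has the same length $n$ — is the cleanest to nail down: each elementary factor is either a starter that starts exactly one event or a terminator that terminates exactly one event, and over the whole decomposition each event of $P \setminus S_P$ gets started exactly once and each event of $P \setminus T_P$ gets terminated exactly once. Hence $n = |P \setminus S_P| + |P \setminus T_P| = 2|P| - |S_P| - |T_P|$, a quantity depending only on $P$. I expect the main obstacle to be the uniqueness claim for the sparse decomposition, specifically showing that the "merge adjacent same-type factors" procedure is confluent, i.e.\ that the canonical blocks of a dense decomposition are intrinsic to $P$ rather than to the chosen interval representation; the right technical device is to characterise the boundary between block $k$ and block $k+1$ order-theoretically (a down-set of $(P,{<_P})$ closed under the relevant constraints) so that it is manifestly independent of the decomposition, after which both existence and uniqueness of the sparse form drop out. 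Since Lemma~\ref{le:ipomsparse} is cited from \cite{DBLP:journals/fuin/FahrenbergZ24} and \cite{AMRANE2025115156}, one may also simply appeal to those references; but the self-contained argument above is short enough to include.
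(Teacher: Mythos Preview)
The paper does not actually prove this lemma; it cites \cite{DBLP:journals/fuin/FahrenbergZ24} and \cite{AMRANE2025115156} and adds only the one-line remark that existence follows by gluing adjacent same-type factors while ``showing uniqueness is more tedious.'' So your proposal is already more detailed than what the paper supplies, and in particular your counting argument for the second bullet, $n = 2|P| - |S_P| - |T_P|$, is correct, clean, and not made explicit in the paper.

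There is, however, a genuine error in your uniqueness sketch. You claim that in a sparse decomposition ``the first factor $P_1$ must be a starter,'' justified by $<$-minimality of $S_P$. This is false: if $S_P\neq\emptyset$, the events in $S_P$ are already active and may all terminate before any new event starts. Concretely, for $P = \ibullet a * b$ (an $a$ in the source interface, then $b$), the sparse decomposition is $\terminator{\{a\}}{a}\,\starter{\{b\}}{b}\,\terminator{\{b\}}{b}$, beginning with a terminator. So the induction you propose cannot get off the ground as stated; you would need to case-split on whether the first block is a starter or a terminator and characterise each possibility intrinsically (e.g.\ the first block is determined by the set of $<$-minimal events together with whether those outside $S_P$ are nonempty). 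Your alternative --- recovering the block boundaries order-theoretically --- is the right idea and is essentially what the cited references do, but as written it is a slogan rather than an argument. Since the paper itself simply defers to the citations here, appealing to them is adequate; if you want a self-contained proof, fix the first-block characterisation before inducting.
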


Showing existence of sparse decompositions is easy
and consists of gluing starters and terminators until no more such gluing is possible; showing uniqueness is more tedious.

We introduce special notations for starters and terminators to more clearly specify the conclists of events which are started or terminated.
For a conclist $U$ and subsets $A,B\subseteq U$ we write 
\begin{itemize}
\item
$\starter{U}{A}=\ilo{U\setminus A}{U}{U}=(U, {\evord}, U\setminus A, U, \lambda)$
and
\item
$\terminator{U}{B}=\ilo{U}{U}{U\setminus B}=(U, {\evord}, U, U\setminus B, \lambda)$.
\end{itemize}
The intuition is that the starter $\starter{U}{A}$ does nothing but start the events in $A=U\setminus S_U$
and the terminator $\terminator{U}{B}$ terminates the events in $B=U\setminus T_U$.

\section{A Categorical View of $\iiPoms$}
\label{se:catiipoms}
The following is clear, see also \cite[Proposition~1]{DBLP:journals/iandc/FahrenbergJSZ22}.

\begin{proposition}
  \label{prop:ipomsetCat}
  Ipomsets form a category $\iPoms$:
  \begin{itemize}
  \item objects are conclists;
  \item morphisms in $\iPoms(U, V)$ are ipomsets $P$ with $S_P=U$ and $T_P=V$;
  \item composition of morphisms is gluing;
  \item identities are $\id_U=\ilo{U}{U}{U}\in \iPoms(U, U)$. 
  \end{itemize}
\end{proposition}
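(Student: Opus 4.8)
The plan is to verify the four category axioms in turn, relying on facts already established for gluing. First I would address well-definedness: morphisms are ipomsets (isomorphism classes) with prescribed source and target conclists, and composition is gluing; since $T_P = S_Q$ holds on the nose for composable morphisms $P \in \iPoms(U,V)$, $Q \in \iPoms(V,W)$, the gluing $P*Q$ is defined, and by the remark that gluings of isomorphic ipomsets are isomorphic, $*$ descends to a well-defined operation on isomorphism classes. One also checks $S_{P*Q} = S_P = U$ and $T_{P*Q} = T_Q = W$ directly from Definition~\ref{de:gluing}(4), so $P*Q \in \iPoms(U,W)$.

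Next I would treat the identity laws. For $U$ a conclist, $\id_U = \ilo{U}{U}{U}$ is the discrete ipomset on $U$ with $S = T = U$, i.e.\ an element of $\Id$. Given $P \in \iPoms(U,V)$, I must show $\id_U * P \simeq P$ and $P * \id_V \simeq P$. Using Lemma~\ref{le:isoeqglu} (or directly unwinding Definition~\ref{de:gluing}), when we glue $\id_U$ on the left every event of $U = S_P$ is identified with its copy in $\id_U$, no new precedence is added because $\id_U \setminus T_{\id_U} = \emptyset$, and the event order and labels are inherited; hence the canonical injection $P \to \id_U * P$ is an isomorphism. The right identity is symmetric. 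This is exactly the statement, already asserted in the excerpt, that ``ipomsets in $\Id$ are identities for $*$''.

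The main work is associativity: for $P \in \iPoms(U,V)$, $Q \in \iPoms(V,W)$, $R \in \iPoms(W,X)$, one needs $(P*Q)*R \simeq P*(Q*R)$. I would invoke Lemma~\ref{le:isoeqglu} twice to choose concrete representatives in which the interface identifications are literal set equalities: arrange $T_P = S_Q = P \cap Q$ and $T_Q = S_R = Q \cap R$ (and, iterating, that all pairwise intersections behave coherently), so that both sides have underlying set $P \cup Q \cup R$ with the continuing events counted once. Then both $<_{(P*Q)*R}$ and $<_{P*(Q*R)}$ unwind, via the explicit formula in Lemma~\ref{le:isoeqglu}, to ${<_P} \cup {<_Q} \cup {<_R}$ together with the ``across-the-gluing'' pairs; the only subtlety is checking that the cross pairs coincide, e.g.\ that a pair $(x,z)$ with $x \in P$, $z \in R$ is present on the left iff on the right, which follows because $x \in P \setminus T_P$ on the left forces $x \notin Q$ and then the $P\sqcup Q \to R$-style inclusion into the outer gluing contributes $(x,z)$ exactly when $z \in R \setminus S_R$; the same pair arises on the right through $Q*R$. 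The event orders are both just ${\evord_P} \cup {\evord_Q} \cup {\evord_R}$, and sources/targets are $S_P = U$ and $T_R = X$ on both sides, so the identity map on $P \cup Q \cup R$ is the required isomorphism. I expect the bookkeeping of these cross-gluing precedence pairs to be the one genuinely fiddly point; everything else is immediate from Definition~\ref{de:gluing} and Lemma~\ref{le:isoeqglu}. (By Lemma~\ref{le:UniqueIso} this isomorphism is moreover unique, so no coherence issues arise.)

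Finally, although the statement does not require it, I would remark that the category $\iPoms$ is used throughout the paper as the ambient structure, so it is worth noting explicitly that $\Id = \St \cap \Te$ supplies exactly the identity morphisms and that $\iPoms$ is in fact a one-object-per-conclist category whose hom-sets can be infinite. This matches \cite[Proposition~1]{DBLP:journals/iandc/FahrenbergJSZ22} as cited.
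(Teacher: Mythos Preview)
Your proposal is correct and in fact considerably more detailed than the paper's own proof, which simply asserts that associativity and the unit laws ``are easily verified for gluing composition on (isomorphism classes of) ipomsets'' without spelling anything out. Your use of Lemma~\ref{le:isoeqglu} to pick concrete representatives with literal interface equalities is exactly the kind of verification the paper leaves implicit, so the approaches agree in spirit; you have just filled in the bookkeeping that the paper omits.
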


\begin{proof}
  The only statement needing proof is that composition is associative with units $\ilo{U}{U}{U}$,
  and these properties are easily verified for gluing composition on (isomorphism classes of) ipomsets.
\end{proof}

\subsection{Ipomsets with transitive event order}
\label{se:ipoms-trans}


Our definition of ipomsets is different from the ones used in
\cite{DBLP:journals/mscs/FahrenbergJSZ21, DBLP:journals/lmcs/FahrenbergJSZ24},
which require event order to be transitive.
Here we make precise the relation between the two definitions.

A \emph{concrete ipomset with transitive event order} is a structure
$(P, {<}, {\evord}, S, T, \lambda)$ as in Definition \ref{de:ipoms},
except that $\evord$ is required to be a strict partial order,
and the requirement on the union of $<$ and $\evord$ is reduced to demanding that
\emph{at least one} of the following holds for every $x,y\in P$:
\[x<y,\qquad y<x,\qquad x\evord y,\qquad y\evord x,\qquad x=y.\]

Any ipomset as of Definition \ref{de:ipoms} may be turned into one with transitive event order
by transitively closing $\evord$, \ie
mapping $(P, {<}, {\evord}, S, T, \lambda)$ to
\begin{equation*}
  F((P, {<}, {\evord}, S, T, \lambda)) = (P, {<}, {\evord^+}, S, T, \lambda).
\end{equation*}
Conversely, any ipomset with transitive event order may be turned into the other type
by removing the \emph{inessential} part of $\evord$,
\ie mapping $(P, {<}, {\evord}, S, T, \lambda)$ to
\begin{equation*}
  G((P, {<}, {\evord}, S, T, \lambda)) = (P, {<}, {\evord'}, S, T, \lambda)
\end{equation*}
with ${\evord'}={\evord}\cap {\not<}\cap {\not>}$.

The definitions of isomorphism and subsumption of ipomsets with transitive event order
have to take into account that some of the event order may be inessential in the sense above,
changing item \ref{en:subsu.evord} of Definition \ref{de:subsu} to demand that
\begin{itemize}
\item if $x\evord_P y$, $x\not< y$, and $y\not< y$, then $f(x)\evord_Q f(y)$.
\end{itemize}
Once this is in place, the mappings $F$ and $G$ above extend to an isomorphism of categories
between $\iiPoms$ and the category of isomorphism classes of ipomsets with transitive event order.
The two notions are, thus, equivalent.

\subsection{Step sequences}
\label{sse:step-seq}
\label{sse:sandt}

We develop a representation of the category $\iPoms$ by generators and relations,
using the step decompositions introduced in Section \ref{se:ipomsets-step}.
We view step decompositions as sequences of starters and terminators,
that is, as words over the (infinite) alphabet $\Omega=\St\cup \Te$.




Let $\bOmega$ be the directed multigraph given as follows:
\begin{itemize}
\item Vertices are conclists.
\item Edges in $\bOmega(U, V)$ are starters and terminators $P$ with $S_P=U$ and $T_P=V$.
\end{itemize}
Note that
$\bOmega(U, V)\subseteq \St$ or $\bOmega(U, V)\subseteq \Te$
for all $U, V\in \square$.

Let $\Coh$
be the category freely generated by $\bOmega$.
Non-identity morphisms in $\Coh(U,V)$ are words
$P_1\dotsc P_n\in \Omega^+$,
\ie~such that $T_{P_i}=S_{P_{i+1}}$ for all $i=1,\dots, n-1$.
Such words are called \emph{coherent} in \cite{AMRANE2025115156}.
Note that $P_1\dots P_n$ is coherent iff the gluing $P_1*\dots* P_n$ is defined.

%

Let $\sim$ be the congruence on $\Coh$ generated by the relations
\begin{equation}
  \label{eq:sim-stepseq}
  \begin{alignedat}{2}
    P Q &\sim P*Q \qquad &&(P, Q\in \St \text{ or } P, Q\in \Te),
    \\
    \id_U &\sim \ilo{U}{U}{U} \qquad &&(U\in \square). 
  \end{alignedat}
\end{equation}
The first of these allows to compose subsequent starters and subsequent terminators, and the second identifies the (freely generated) identities at $U$
with the corresponding ipomset identities in $\Id$.
(Note that the gluing of two starters is again a starter, and similarly for terminators; but ``mixed'' gluings do not have this property.)
We let $\Cohsim$ denote the quotient of $\Coh$ under $\sim$.

\begin{figure}
  \centering
  \begin{tikzcd}[sep=large]
    \bOmega\ar[r,hook] &\Coh\ar[d,"{[-]_\sim}" swap]\ar[r,"\Psip",shift left=1] & \iPoms\ar[d, equal]\ar[l,"\Phip",shift left=1] \\
    & \Cohsim\ar[r,"\Psi",shift left=1] & \iPoms\ar[l,"\Phi",shift left=1]
  \end{tikzcd}
  \caption{%
    Relationship between step sequence and ipomset categories.
    The arrows $\Psip$, $\Phi$, $\Psi$ and ${[-]_\sim}$ denote functors; $\Phip$ is not a functor.
    Note that all the maps in the diagram are identities on objects.}
  \label{fi:ipomscohphipsi}
\end{figure}

Let $\Psip:\Coh\to \iPoms$ be the functor induced by the inclusion $\bOmega\hookrightarrow \iPoms$:
\begin{equation*}
  \Psip(U)=U,\qquad \Psip(P)=P.
\end{equation*}
Then $\Psip(P_1\dots P_n)=P_1*\dots*P_n$.
The following is straightforward.

\begin{lemma}
  If $P_1\dots P_n\sim Q_1\dots Q_m$, then $\Psip(P_1\dotsc P_n)=\Psip(Q_1\dots Q_m)$. 
\end{lemma}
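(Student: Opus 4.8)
The plan is to exploit the universal property of the quotient: since $\sim$ is \emph{defined} as the congruence on $\Coh$ generated by the two families of relations in \eqref{eq:sim-stepseq}, and since $\Psip$ is a functor, it suffices to check that $\Psip$ identifies the two sides of each generating relation; compatibility with composition (functoriality) then propagates this to the whole congruence. So the proof reduces to two small verifications plus a routine induction.

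First I would check the relations. For $PQ\sim P*Q$ with $P,Q\in\St$ (the case $P,Q\in\Te$ being symmetric): by functoriality and the fact that $\Psip$ restricts to the inclusion $\bOmega\hookrightarrow\iPoms$ on edges, we have $\Psip(PQ)=\Psip(P)*\Psip(Q)=P*Q$ as an ipomset. On the other hand, the gluing of two starters is again a starter, so $P*Q$ is itself a single edge of $\bOmega$, whence $\Psip(P*Q)=P*Q$ as well; the two sides agree. For $\id_U\sim\ilo{U}{U}{U}$: functoriality gives $\Psip(\id_U)=\id_{\Psip(U)}=\id_U$, which is exactly the identity ipomset $\ilo{U}{U}{U}\in\Id\subseteq\iPoms(U,U)$; and since $\ilo{U}{U}{U}$ is an edge of $\bOmega$, $\Psip(\ilo{U}{U}{U})=\ilo{U}{U}{U}$. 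Again the two sides agree.

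Finally I would conclude by induction on the number of one-step rewrites connecting $P_1\dots P_n$ to $Q_1\dots Q_m$. A single step replaces a (coherent) factor $u$ of a word $wuw'$ by a $\sim$-related factor $v$ having the same source and target conclists as $u$; using $\Psip(wuw')=\Psip(w)*\Psip(u)*\Psip(w')$ together with the two verifications above (which give $\Psip(u)=\Psip(v)$), we get $\Psip(wuw')=\Psip(wvw')$. Iterating yields $\Psip(P_1\dots P_n)=\Psip(Q_1\dots Q_m)$.

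There is no real obstacle here — this is precisely why the statement is labelled straightforward. The only point deserving a word of care is that each rewriting step keeps the word coherent (so that the glued ipomset stays defined), but this is automatic: $\sim$ is a congruence on $\Coh$, and both sides of each generator in \eqref{eq:sim-stepseq} have matching source and target interfaces, so coherence of the surrounding word is preserved.
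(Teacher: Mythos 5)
Your proof is correct and is precisely the routine argument the paper leaves implicit (it gives no proof, calling the lemma straightforward): check that $\Psip$ identifies both sides of each generating relation in \eqref{eq:sim-stepseq} — using that the gluing of two starters (resp.\ terminators) is again a single edge of $\bOmega$, and that $\ilo{U}{U}{U}$ is the identity of $\iPoms(U,U)$ — and then propagate through the generated congruence by functoriality. Nothing is missing; your remark that each rewrite preserves source/target conclists, hence coherence of the surrounding word, is the right point of care.
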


Thus $\Psip$ induces a functor $\Psi:\Cohsim\to \iPoms$; we show below that $\Psi$ is an isomorphism of categories.
See Figure~\ref{fi:ipomscohphipsi} for an overview of the introduced structures and mappings.

A \emph{step sequence} \cite{conf/apn/AmraneBCF24} is a morphism in $\Cohsim$,
that is, an equivalence class of coherent words under $\sim$.
We redevelop the facts about step decompositions from Section \ref{se:ipomsets-step} in terms of step sequences.

\begin{lemma}
  \label{le:stepdecomp}
  For every $P\in \iPoms$ there exists $w\in \Cohsim$ such that $\Psi(w)=P$.
\end{lemma}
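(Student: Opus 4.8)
The statement asserts that every interval ipomset $P$ arises as $\Psi(w)$ for some step sequence $w$. The natural plan is to go through a step decomposition of $P$ and then transport it along the functors in Figure~\ref{fi:ipomscohphipsi}. Since we are only treating interval ipomsets, we already know from the discussion after Lemma~\ref{le:ipomsparse} that $P$ admits a step decomposition $P = P_1 * \dotsm * P_n$ into starters and terminators; in fact Lemma~\ref{le:ipomsparse} guarantees a sparse one, though for this lemma any step decomposition will do. (One should also treat the degenerate case $P \in \Id$ separately, or simply note that then $P = \id_{S_P}$ already, giving $w = [\id_{S_P}]_\sim$.)

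\textbf{Key steps.} First, fix a step decomposition $P = P_1 * \dotsm * P_n$ with each $P_i \in \Omega = \St \cup \Te$. Because each consecutive gluing $P_i * P_{i+1}$ is defined, we have $T_{P_i} = S_{P_{i+1}}$ for all $i$, so the word $P_1 \dotsc P_n$ is a coherent word, i.e.\ a morphism in $\Coh(S_{P_1}, T_{P_n})$. Second, apply the functor $\Psip : \Coh \to \iPoms$; by the formula $\Psip(P_1 \dotsc P_n) = P_1 * \dotsm * P_n$ recorded just before the present lemma, this equals $P$. Third, compose with the quotient functor $[-]_\sim : \Coh \to \Cohsim$ and set $w = [P_1 \dotsc P_n]_\sim \in \Cohsim$. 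Since $\Psi$ is the functor induced by $\Psip$ on the quotient, we have $\Psi(w) = \Psi([P_1 \dotsc P_n]_\sim) = \Psip(P_1 \dotsc P_n) = P$, as desired.

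\textbf{Main obstacle.} There is essentially no obstacle here: the content is entirely carried by the existence of a step decomposition (the interval hypothesis, already established earlier) and by the commutativity of the left triangle in Figure~\ref{fi:ipomscohphipsi} ($\Psi \circ [-]_\sim = \Psip$), which holds by the very definition of $\Psi$ as the functor induced by $\Psip$ through the congruence $\sim$. The only point requiring a line of care is matching objects and handling identities: one must check that $S_P = S_{P_1}$ and $T_P = T_{P_n}$ so that $w$ indeed lies in $\Cohsim(S_P, T_P)$ and the equation $\Psi(w) = P$ makes sense on the nose, and that when $n = 0$ (i.e.\ $P$ is an identity conclist) one uses the second generating relation $\id_U \sim \ilo{U}{U}{U}$ of \eqref{eq:sim-stepseq}. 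So the proof is short, and its role is mainly to set up the surjectivity half of the claim that $\Psi$ is an isomorphism of categories.
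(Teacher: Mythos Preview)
Your proposal is correct and takes essentially the same approach as the paper: both reduce the lemma to the existence of a step decomposition for interval ipomsets, which was established earlier. The paper's proof is a single sentence (``This is the same as saying that every ipomset has a step decomposition''), while you spell out the passage through $\Coh$ and the quotient explicitly; the content is identical.
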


\begin{proof}
  This is the same as saying that every ipomset has a step decomposition.
\end{proof}

A word $P_1\dots P_n\in \Coh$ is \emph{dense} if all its elements are elementary, \ie start or terminate precisely one event.
It is \emph{sparse} if proper starters and terminators are alternating,
that is, for all $i=1,\dots, n-1$, $(P_i,P_{i+1}) \in (\St_+ \times \Te_+) \cup (\Te_+ \times \St_+)$.
By convention, identities $\id_U\in \Coh$ are both dense and sparse.
We let $\DCoh, \SCoh\subseteq \Coh$ denote the subsets of dense, resp.\ sparse coherent words.

\begin{lemma}
  \label{le:sparseuniq}
  Every step sequence contains exactly one sparse coherent word.
\end{lemma}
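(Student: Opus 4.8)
The statement has two halves: existence of a sparse word in each step sequence, and uniqueness. For existence, I would start from any coherent representative $P_1\dots P_n$ of the step sequence (Lemma~\ref{le:stepdecomp} guarantees one), and apply a normalization procedure. Using the relations in~\eqref{eq:sim-stepseq}, any two consecutive starters can be merged into a single starter, and similarly two consecutive terminators into a single terminator, so after finitely many merges we obtain a $\sim$-equivalent word in which starters and terminators strictly alternate. Finally, any factor that is an identity (a word in $\Id$) can be deleted using the second relation $\id_U\sim\ilo{U}{U}{U}$, except possibly when the whole word reduces to a single identity. After these steps the word is sparse by definition. This direction is routine and essentially says ``keep gluing adjacent same-type blocks and discard trivial factors until you can't anymore.''

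For uniqueness, the cleanest route is to transport the problem to ipomsets via the functor $\Psi$ (equivalently, to argue that $\Psi$ restricted to sparse words is injective). Since $\Psi$ is an isomorphism of categories (this is asserted just before the statement, with the proof presumably following; in any case I may assume it), two sparse words $w, w'$ in the same step sequence satisfy $\Psi(w)=\Psi(w')=P$ for the same ipomset $P$, so it suffices to show that an ipomset $P$ has at most one sparse step \emph{decomposition} in $\iPoms$ — and this is exactly the uniqueness clause of Lemma~\ref{le:ipomsparse}. Thus the plan is: (i) show the normalization just described produces a sparse word; (ii) observe that $\Psi$ maps step sequences bijectively to ipomsets and sparse words to sparse decompositions; (iii) invoke Lemma~\ref{le:ipomsparse} for uniqueness of the sparse decomposition of $\Psi(w)$; (iv) conclude that the sparse word in a given step sequence is unique. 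Since $\Psi$ is injective on morphisms and a sparse word determines and is determined by a sparse decomposition, this closes the argument.

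**Main obstacle.** The delicate point is not the uniqueness per se — that is offloaded onto Lemma~\ref{le:ipomsparse} — but making sure the normalization in the existence part genuinely lands in $\SCoh$ and respects the coherence constraints $T_{P_i}=S_{P_{i+1}}$ throughout. Merging two adjacent starters $P_i, P_{i+1}$ via $P_iP_{i+1}\sim P_i*P_{i+1}$ is legitimate only when the gluing is defined, which is exactly the coherence condition, and one must check the merged word is still coherent (it is, since $S_{P_i*P_{i+1}}=S_{P_i}$ and $T_{P_i*P_{i+1}}=T_{P_{i+1}}$). Similarly, deleting an identity factor $\id_U$ requires that its neighbours match up at $U$, which they do. One also has to handle the boundary case where the process collapses the whole word to an identity: by the stated convention identities are sparse, so this causes no trouble. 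Care is also needed with degenerate ``improper'' starters/terminators (elements of $\Id$) appearing between proper ones — these are precisely what gets merged or deleted — so the termination argument should be phrased in terms of, say, the number of letters, which strictly decreases under each merge or deletion. Once these bookkeeping points are handled, both halves follow quickly.
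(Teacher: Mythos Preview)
Your approach is essentially the paper's: existence by repeatedly merging adjacent same-type factors (the paper phrases this as ``take a representative of minimal length'' and derives a contradiction if it is not sparse), and uniqueness by gluing both sparse representatives to the same ipomset and invoking Lemma~\ref{le:ipomsparse}.

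One caution: you justify $\Psi(w)=\Psi(w')$ by appealing to $\Psi$ being an isomorphism of categories, but that isomorphism (Theorem~\ref{th:genipoms}) is proved \emph{after} Lemma~\ref{le:sparseuniq} and in fact uses it to define $\Phi$, so invoking it here would be circular. You do not need the isomorphism, only the well-definedness of $\Psi$ on $\sim$-classes, i.e., that $w\sim w'$ implies $\Psip(w)=\Psip(w')$; this is the short unnamed lemma stated just before Lemma~\ref{le:sparseuniq}, and it is exactly what the paper's proof uses. With that adjustment your plan is correct. (Also, your separate ``delete identities'' step is harmless but redundant: an identity is both a starter and a terminator, so it is always absorbable into a neighbour by the first relation in~\eqref{eq:sim-stepseq}.)
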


\begin{proof}
  Let $P_1\dotsm P_n\in\Coh$
  be a representative of a step sequence having minimal length.
  If $P_i$ and $P_{i+1}$ are both starters or both terminators, 
  then $P_1\dotsm P_{i-1}(P_iP_{i+1})P_{i+2}\dotsm P_n$
  is a shorter representative: a contradiction.
  Thus, $P_1\dotsm P_n\in\Coh$ is sparse.

  If $Q_1\dotsm Q_m\sim P_1\dotsm P_n$ is another sparse representative,
  then
  \[
    P_1*\dotsm*P_n=P=Q_1*\dotsm*Q_m
  \]
  are both sparse decompositions of $P$,
  and by Lemma \ref{le:ipomsparse} they are equal.
\end{proof}

\begin{example}
  The unique sparse step sequence corresponding to the ipomset in Figure~\ref{fi:n} is
  \begin{equation*}
    \bigloset{ \pibullet  a \ibullet \\ \ibullet c \ibullet} 
    \bigloset{\ibullet a \ibullet \\ \ibullet c\pibullet }
    \bigloset{\ibullet a \ibullet \\ \pibullet a \ibullet}
    \bigloset{\ibullet a\pibullet  \\ \ibullet a \ibullet}
    \bigloset{\pibullet  b \ibullet \\ \ibullet a \ibullet}
    \bigloset{ \ibullet b \\ \ibullet a}:
  \end{equation*}
  it first starts the first $a$, then terminates $c$,
  then starts the second $a$,
  terminates the first $a$,
  then starts $b$ and finally terminates both $b$ and the second $a$.
  The corresponding dense step sequences are
  \begin{align*}
    & \bigloset{\pibullet  a \ibullet \\ \ibullet c \ibullet} 
    \bigloset{\ibullet a \ibullet \\ \ibullet c\pibullet }
    \bigloset{\ibullet a \ibullet \\ \pibullet a \ibullet}
    \bigloset{\ibullet a\pibullet  \\ \ibullet a \ibullet}
    \bigloset{\pibullet  b \ibullet \\ \ibullet a \ibullet}
    \bigloset{ \ibullet b\pibullet  \\ \ibullet a \ibullet}
    \bigloset{\ibullet a} \\
    {}={}
    & \bigloset{\pibullet  a \ibullet \\ \ibullet c \ibullet} 
    \bigloset{\ibullet a \ibullet \\ \ibullet c\pibullet }
    \bigloset{\ibullet a \ibullet \\ \pibullet a \ibullet}
    \bigloset{\ibullet a\pibullet  \\ \ibullet a \ibullet}
    \bigloset{\pibullet  b \ibullet \\ \ibullet a \ibullet}
    \bigloset{ \ibullet b \ibullet \\ \ibullet a\pibullet }
    \bigloset{\ibullet b},
  \end{align*}
  which differ only in the order in which $b$ and $a$ are terminated at the end.
\end{example}

Using Lemmas \ref{le:stepdecomp} and \ref{le:sparseuniq},
we may now define a functor $\Phi: \iPoms \to \Cohsim$ which will serve as inverse to $\Psi$.
First, for $P\in \iPoms$ let $\Phip(P)\in \Coh$ be its unique sparse step decomposition;
this defines a mapping $\Phip: \iPoms\to \Coh$.
Now define $\Phi$ by $\Phi(P)=[\Phip(w)]_\sim$.

\begin{theorem}
  \label{th:genipoms}
  $\Phi$ is a functor, $\Psi \circ \Phi = \Id_{\iPoms}$, and $\Phi \circ \Psi = \Id_{\Cohsim}$.
  Hence $\Phi: \iPoms\leftrightarrows \Cohsim: \Psi$ is an isomorphism of categories.
\end{theorem}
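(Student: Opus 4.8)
The plan is to verify the four asserted properties in sequence, using the uniqueness results already established. First I would check that $\Phi$ is well-defined and functorial. Well-definedness is immediate: by Lemma~\ref{le:stepdecomp} every ipomset has a step decomposition, by Lemma~\ref{le:ipomsparse} the sparse one is unique, so $\Phip(P)$ is unambiguous and hence so is $\Phi(P)=[\Phip(P)]_\sim$. For functoriality I must show $\Phi(\id_U)=\id_U$ in $\Cohsim$ and $\Phi(P*Q)=\Phi(P)\cdot\Phi(Q)$. The identity case follows because $\ilo{U}{U}{U}$ is its own sparse decomposition (a single identity element, which the convention declares sparse) and $\ilo{U}{U}{U}\sim\id_U$ by the second relation in \eqref{eq:sim-stepseq}. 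For composition, if $\Phip(P)=P_1\dots P_n$ and $\Phip(Q)=Q_1\dots Q_m$ are the sparse decompositions, then the concatenation $P_1\dots P_nQ_1\dots Q_m$ is a (not necessarily sparse) step decomposition of $P*Q$, so it is $\sim$-equivalent to $\Phip(P*Q)$ by Lemma~\ref{le:sparseuniq}; hence $\Phi(P*Q)=[P_1\dots P_nQ_1\dots Q_m]_\sim=[\Phip(P)]_\sim\cdot[\Phip(Q)]_\sim=\Phi(P)\cdot\Phi(Q)$.

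Next I would prove $\Psi\circ\Phi=\Id_{\iPoms}$. For $P\in\iPoms$ with sparse decomposition $\Phip(P)=P_1\dots P_n$ we have $\Psi(\Phi(P))=\Psi([P_1\dots P_n]_\sim)=\Psip(P_1\dots P_n)=P_1*\dots*P_n=P$, where the last equality is exactly what it means for $P_1\dots P_n$ to be a step decomposition of $P$. Since $\Phi$ and $\Psi$ are both identities on objects, this gives the identity functor.

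The remaining and slightly more delicate direction is $\Phi\circ\Psi=\Id_{\Cohsim}$. Take a morphism $w\in\Cohsim$, represented by a coherent word $P_1\dots P_n\in\Coh$; then $\Psi(w)=P_1*\dots*P_n=:P$, and I must show $\Phi(P)=w$, i.e.\ that $\Phip(P)$, the unique sparse decomposition of $P$, is $\sim$-equivalent to $P_1\dots P_n$. This is where Lemma~\ref{le:sparseuniq} does the work: by that lemma the step sequence $[P_1\dots P_n]_\sim=w$ contains exactly one sparse coherent word; call it $Q_1\dots Q_m$. Since $\sim$ only relates words with the same image under $\Psip$ (the lemma preceding the definition of $\Psi$), we get $Q_1*\dots*Q_m=\Psip(Q_1\dots Q_m)=\Psip(P_1\dots P_n)=P$, so $Q_1\dots Q_m$ is a sparse step decomposition of $P$, hence equals $\Phip(P)$ by uniqueness (Lemma~\ref{le:ipomsparse}). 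Therefore $\Phi(P)=[\Phip(P)]_\sim=[Q_1\dots Q_m]_\sim=w$. Combined with the previous paragraph, $\Phi$ and $\Psi$ are mutually inverse functors, so they constitute an isomorphism of categories.

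I expect the main obstacle to be bookkeeping rather than conceptual: making sure the degenerate cases (identity morphisms, empty conclists, words of length one that are themselves identities) are handled consistently with the stated conventions that $\id_U$ is both dense and sparse, and checking that concatenation of two sparse decompositions — which is generally \emph{not} sparse at the seam — is nonetheless a legitimate step decomposition so that Lemma~\ref{le:sparseuniq} applies. Everything substantive has already been isolated into Lemmas~\ref{le:ipomsparse}, \ref{le:stepdecomp}, and \ref{le:sparseuniq}, so the proof of Theorem~\ref{th:genipoms} itself should be short.
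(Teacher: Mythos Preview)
Your proposal is correct and follows essentially the same approach as the paper's proof. The only cosmetic difference is in the functoriality step: the paper argues by explicit case analysis on the seam (either $P_nQ_1$ is already alternating, or $P_n*Q_1$ must be merged to obtain the sparse decomposition of $P*Q$), whereas you invoke Lemmas~\ref{le:sparseuniq} and~\ref{le:ipomsparse} more abstractly to conclude that any coherent word is $\sim$-equivalent to the sparse decomposition of its gluing --- precisely the argument you then spell out again for $\Phi\circ\Psi$.
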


\begin{proof}
  We have $\Phi(\id_U)=[\Phip(\id_U)]_\sim=[\Id_U]_\sim=\id_U$ using the relations (\ref{eq:sim-stepseq}).
  Let $P, Q\in\iPoms\setminus\Id$ be ipomsets such $T_P=S_Q$ 
  and $P=P_1*\dotsm*P_n$ and $Q=Q_1*\dotsm*Q_m$ the unique sparse decompositions.
  If $P_n$ is a starter and $Q_1$ is a terminator or vice versa,
  then $P*Q=P_1*\dotsm*P_n*Q_1*\dotsm*Q_m$ is sparse, so that
  \[
    \Phip(P*Q)=P_1\dotsm P_nQ_1\dotsm Q_m
    =
    \Phip(P)\Phip(Q).
  \]
  If $P_n$ and $Q_1$ are both starters or both terminators, then
  \[
    P*Q=P_1*\dotsm*P_{n-1}*(P_n*Q_1)*Q_2*\dotsm*Q_m
  \]
  is a sparse decomposition, hence
  \[
    \Phip(P*Q)=
    P_1\dotsm P_{n-1}(P_n*Q_1)Q_2\dotsm Q_m
    \sim
    P_1\dotsm P_{n-1}P_nQ_1Q_2\dotsm Q_m
    =
    \Phip(P)\Phip(Q).
  \]
  In both cases, $\Phi(P*Q)=[\Phip(P*Q)]_\sim=[\Phip(P)\Phip(Q)]_\sim=[\Phip(P)]_\sim[\Phip(Q)]_\sim=\Phi(P)\Phi(Q)$.	
  The case when $P$ or $Q$ is an identity can be handled in a similar way.
  As a consequence, $\Phi$ is a functor.
  
  The composition $\Psip\Phip$ is clearly the identity on $\iPoms$
  so $\Psi\Phi$ is also an identity.
  For $P_1\dotsm P_n\in \Coh$, 
  $\Phip\Psip(P_1\dotsm P_n)$ is the unique sparse representative
  of $P_1\dotsm P_n$.
  In particular, $P_1\dotsm P_n\sim \Phip\Psip(P_1\dotsm P_n)$ and then $\Phi\Psi=\Id_{\Coh_\sim}$.
\end{proof}

Given that $\Cohsim$ and $\iPoms$ are isomorphic,
we will often confuse the two
and, for example, write $[w]_\sim$ instead of $\Psip(w)=\Psi([w]_\sim)$ for $w\in\Coh$.

\begin{corollary}
  The category $\iPoms$ is generated by the directed multigraph $\bOmega$ using gluing composition,
  under the identities \eqref{eq:sim-stepseq}.
\end{corollary}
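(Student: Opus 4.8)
The plan is to read this off directly from Theorem~\ref{th:genipoms}. By construction, $\Coh$ is the free category on the directed multigraph $\bOmega$, and $\Cohsim$ is its quotient by the congruence generated by the relations \eqref{eq:sim-stepseq}; so $\Cohsim$ is, by definition, the category presented by the generating graph $\bOmega$ subject to \eqref{eq:sim-stepseq}, with composition given by concatenation of coherent words. Theorem~\ref{th:genipoms} provides an isomorphism of categories $\Psi\colon\Cohsim\to\iPoms$ that is the identity on objects and that, being induced by $\Psip$, sends each generating edge $P\in\bOmega$ to the corresponding starter or terminator in $\iPoms$ and sends concatenation to gluing, since $\Psip(P_1\dotsm P_n)=P_1*\dotsm*P_n$. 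Transporting the presentation of $\Cohsim$ across this isomorphism therefore yields precisely the claim.

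Concretely, one spells this out in two halves. First, every morphism of $\iPoms$ is a gluing of generators from $\bOmega$: this is Lemma~\ref{le:stepdecomp}, equivalently the statement that every ipomset admits a step decomposition. Second, two coherent words $w,w'$ over $\bOmega$ have the same gluing in $\iPoms$ if and only if $w\sim w'$: the ``if'' direction is the lemma stating that $\sim$-related coherent words have equal images under $\Psip$, and the ``only if'' direction is the identity $\Phi\circ\Psi=\Id_{\Cohsim}$ from Theorem~\ref{th:genipoms}, since $\Phi\Psi([w]_\sim)=[w]_\sim$ forces $[w]_\sim=[w']_\sim$ whenever $\Psip(w)=\Psip(w')$. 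Together these say that the kernel congruence of $\Psip\colon\Coh\to\iPoms$ is exactly the congruence generated by \eqref{eq:sim-stepseq}, which is the content of the corollary.

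There is no genuine obstacle: the corollary is a reformulation of Theorem~\ref{th:genipoms} in the vocabulary of presentations of categories, and all the work has already been done. The only point deserving a sentence of care is the translation of the relations \eqref{eq:sim-stepseq} themselves along $\Psi$: the relation $PQ\sim P*Q$ (for $P,Q$ both in $\St$ or both in $\Te$) becomes the trivially valid gluing identity $P*Q=P*Q$ in $\iPoms$, while $\id_U\sim\ilo{U}{U}{U}$ becomes the identification of the categorical identity at the conclist $U$ with the ipomset $\ilo{U}{U}{U}$, exactly as already used in the proof of Theorem~\ref{th:genipoms}. Hence the presentation genuinely lives in $\iPoms$ with gluing as composition, and the corollary follows.
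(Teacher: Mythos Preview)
Your proposal is correct and matches the paper's approach: the paper states the corollary with no explicit proof, treating it as an immediate reformulation of Theorem~\ref{th:genipoms}, which is precisely what you do. Your write-up is more detailed than the paper's (non-)proof, but the content is the same---$\Cohsim$ is by construction the category presented by $\bOmega$ modulo \eqref{eq:sim-stepseq}, and the isomorphism $\Psi$ transports this presentation to $\iPoms$.
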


\section{Subsumptions in Step Sequences}
\label{se:step-subsumptions}

In this section,
we extend the equivalence between ipomsets and step sequences from Theorem~\ref{th:genipoms}
to also cover subsumptions.

Define a partial preorder on $\DCoh$ generated by
\begin{alignat}{2}
  \starter{(U\setminus b)}{a}\cdot \starter {U} {b}
  &\preceq
  \starter{(U\setminus a)}{b}\cdot \starter {U} {a}
  \qquad &&(a\ne b)
  \label{eq:subselm.stst} \\
  \terminator {U} {b}\cdot \terminator{(U\setminus b)}{a}
  &\preceq
  \terminator {U} {a}\cdot \terminator{(U\setminus a)}{b}
  \qquad &&(a\ne b)
  \label{eq:subselm.tete} \\
  \terminator {(U \setminus a)} b \cdot \starter {(U \setminus b)} a
  &\preceq
  \starter U a \cdot \terminator U b
  \qquad &&(a\ne b)
  \label{eq:subselm.stte} \\
  w\preceq w' &\implies v w y\preceq v w' y
  \notag
\end{alignat}
These relations swap elements of coherent words,
taking care of adjusting them to preserve coherency.
Transpositions of type \eqref{eq:subselm.stst} and \eqref{eq:subselm.tete}
swap subsequent starters, resp.\ subsequent terminators;
these are, in fact, equivalences.
Transpositions of type \eqref{eq:subselm.stte} swap a starter with a terminator,
introducing a proper subsumption.

\begin{lemma}
  \label{le:subselm-sound}
  For any $w, w'\in \DCoh$, $w\preceq w'$ implies $[w]_\sim\subseq [w']_\sim$.
\end{lemma}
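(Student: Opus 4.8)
The plan is to prove the statement by induction on the derivation of $w \preceq w'$ from the generating relations, using that $\subseq$ is reflexive, transitive, and compatible with gluing (Lemma~\ref{le:subsglue}). Since $\preceq$ is generated by the three transposition schemes \eqref{eq:subselm.stst}, \eqref{eq:subselm.tete}, \eqref{eq:subselm.stte} together with the closure rule $w \preceq w' \implies vwy \preceq vw'y$, it suffices to handle the three base cases and then observe that the closure rule is taken care of by Lemma~\ref{le:subsglue}: if $[w]_\sim \subseq [w']_\sim$ then $[vwy]_\sim = [v]_\sim * [w]_\sim * [y]_\sim \subseq [v]_\sim * [w']_\sim * [y]_\sim = [vw'y]_\sim$, using that gluing in $\iPoms$ corresponds to concatenation of coherent words under $\Psi$ (Theorem~\ref{th:genipoms}). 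Transitivity of $\subseq$ then lifts the result from single generating steps to arbitrary $w \preceq w'$.

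For the base cases I would compute the ipomsets on both sides explicitly via $\Psi$. For \eqref{eq:subselm.stst}, both $\starter{(U\setminus b)}{a}\cdot \starter{U}{b}$ and $\starter{(U\setminus a)}{b}\cdot \starter{U}{a}$ glue to the same conclist-based ipomset: in each case the underlying set is $U$, the precedence order is empty (a gluing of two starters is a starter, hence discrete), the source interface is $U \setminus \{a,b\}$ and the target is $U$, and the event order is $\evord_U$ in both cases. So the two sides are in fact \emph{isomorphic}, and the identity bijection is a subsumption (indeed an isomorphism) — this is why \eqref{eq:subselm.stst} is an equivalence, not a proper subsumption. The case \eqref{eq:subselm.tete} is entirely symmetric (dually, using terminators and the source interface $U$).

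The one base case that genuinely produces order is \eqref{eq:subselm.stte}: $\terminator{(U\setminus a)}{b}\cdot \starter{(U\setminus b)}{a} \preceq \starter{U}{a}\cdot \terminator{U}{b}$. On the right, $\starter{U}{a} * \terminator{U}{b}$ has underlying set $U$ (the two copies share all of $U$ except $a$ on the left factor is... ) — here I would use Lemma~\ref{le:isoeqglu} to pick concrete representatives so the gluing is a literal union. The right-hand side has precedence order still empty between the relevant events except it glues $\starter{U}{a}$ (target interface $U$) with $\terminator{U}{b}$ (source interface $U$), so the shared events are all of $U$; the result is the identity-like ipomset on $U$ with $a \notin S$, $b \notin T$ — no precedence at all. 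On the left, $\terminator{(U\setminus a)}{b}$ has target interface $U \setminus \{a,b\}$ and $\starter{(U\setminus b)}{a}$ has source interface $U \setminus \{a,b\}$, so gluing them produces a genuinely two-level ipomset: $b$ is terminated before $a$ is started, hence $b <_{\text{left}} a$, while on the right $a$ and $b$ are concurrent. The bijection $f$ fixing $U$ pointwise then satisfies: it preserves $S$, $T$, $\lambda$; $f(x) <_{\text{right}} f(y)$ is never the case (right side has empty precedence) so condition (2) of Definition~\ref{de:subsu} is vacuous; and $x \evord_{\text{left}} y \implies f(x) \evord_{\text{right}} f(y)$ holds because both event orders extend $\evord_U$ and the right-hand side is totally $\evord$-ordered by $\evord_U$ on the $U$-antichain. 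Hence $f$ is a subsumption, giving $[w]_\sim \subseq [w']_\sim$.

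The main obstacle I anticipate is purely bookkeeping: keeping the identifications of events across the gluings straight, i.e.\ tracking which events of the two factors get merged and checking that the event order of the glued ipomset is exactly $\evord_U$ (so that the subsumption condition on $\evord$ holds). This requires care with Definition~\ref{de:gluing} and an application of Lemma~\ref{le:isoeqglu} to work with honest set-theoretic unions rather than quotients; once representatives are fixed the verification of conditions (1)–(3) of Definition~\ref{de:subsu} is mechanical.
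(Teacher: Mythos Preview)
Your approach is correct and matches the paper's proof exactly: verify that each of the three generating relations \eqref{eq:subselm.stst}--\eqref{eq:subselm.stte} induces a subsumption on the glued ipomsets, and then invoke Lemma~\ref{le:subsglue} for the contextual closure rule. The paper compresses all of this into two sentences (``The three elementary cases above all define subsumptions, and by Lemma~\ref{le:subsglue} gluing preserves subsumptions''), so your explicit computations of the glued ipomsets and the event-order bookkeeping are more detail than the paper supplies, but entirely in the same spirit.
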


\begin{proof}
  The three elementary cases above all define subsumptions,
  and by Lemma \ref{le:subsglue} gluing preserves subsumptions.
\end{proof}

\begin{remark}
  In the context of HDAs,
  \cite{DBLP:journals/tcs/Glabbeek06}
  defines a notion of \emph{adjacency} for paths
  which consists of precisely the analogues of the transformations that we define above.
  Adjacency is then used to define \emph{homotopy} of paths,
  which is the equivalence closure of adjacency.
  Taking only the reflexive and transitive closure,
  we will instead use it to define subsumptions.
\end{remark}

\begin{figure}[tbp]
  \centering
  \begin{tikzpicture}[x=.95cm, scale=1, every node/.style={transform shape}]
    \def\possh{-1.3}

    \begin{scope}[shift={(0,0)}]
      \def\hw{0.3}
      \filldraw[fill=green!50!white,-](0,1.2)--(1.8,1.2)--(1.8,1.2+\hw)--(0,1.2+\hw) -- (0, 1.2);
      \filldraw[fill=red!50!white,-](-1.2,0.7)--(1.2,0.7)--(1.2,0.7+\hw)--(-1.2,0.7+\hw)--(-1.2,0.7);
      \filldraw[fill=blue!20!white,-](-.6,0.2)--(0.6,0.2)--(0.6,0.2+\hw)--(-.6,0.2+\hw)--(-.6,0.2);
      \node at (0.8,1.2+\hw*0.5) {$a$};
      \node at (0.2,0.7+\hw*0.5) {$b$};
      \node at (0.3,0.2+\hw*0.5) {$c$};
      \path (-0.6, 1.2+\hw) edge[-,dashed] (-0.6,0.2);
      \path (0, 1.2+\hw) edge[-,dashed] (0,0.2);
      \path (0.6, 1.2+\hw) edge[-,dashed] (0.6,0.2);
      \path (1.2, 1.2+\hw) edge[-,dashed] (1.2,0.2);
      \path (1.8, 1.2+\hw) edge[-,dashed] (1.8,0.2);
      \node (0) at (-1.2, -0.1) {$1$};
      \node (1) at (-0.6, -0.1) {$2$};
      \node (2) at (-0, -0.1) {$3$};
      \node (3) at (0.6, -0.1) {$4$};
      \node (4) at (1.2, -0.1) {$5$};
      \node (5) at (1.8, -0.1) {$6$};
      \node at (.3,-2) {$%
        b \ibullet
        \bigloset{\ibullet b\ibullet \\ \pibullet c\ibullet}
        \bigloset{\pibullet a \ibullet \\ \ibullet b \ibullet \\ \ibullet c \ibullet}
        \bigloset{\ibullet a  \ibullet \\ \ibullet b \ibullet \\ \ibullet c\pibullet }
        \bigloset{\ibullet a \ibullet \\ \ibullet b\pibullet }
        \ibullet a$};
    \end{scope}
    
    \begin{scope}[shift={(5.5,0)}]
      \def\hw{0.3}
      \filldraw[fill=green!50!white,-](0,1.2)--(1.8,1.2)--(1.8,1.2+\hw)--(0,1.2+\hw) -- (0, 1.2);
      \filldraw[fill=red!50!white,-](-1.2,0.7)--(.6,0.7)--(.6,0.7+\hw)--(-1.2,0.7+\hw)--(-1.2,0.7);
      \filldraw[fill=blue!20!white,-](-.6,0.2)--(1.2,0.2)--(1.2,0.2+\hw)--(-.6,0.2+\hw)--(-.6,0.2);
      \node at (0.8,1.2+\hw*0.5) {$a$};
      \node at (-0.2,0.7+\hw*0.5) {$b$};
      \node at (0.4,0.2+\hw*0.5) {$c$};
      \path (-0.6, 1.2+\hw) edge[-,dashed] (-0.6,0.2);
      \path (0, 1.2+\hw) edge[-,dashed] (0,0.2);
      \path (0.6, 1.2+\hw) edge[-,dashed] (0.6,0.2);
      \path (1.2, 1.2+\hw) edge[-,dashed] (1.2,0.2);
      \path (1.8, 1.2+\hw) edge[-,dashed] (1.8,0.2);
      \node (0) at (-1.2, -0.1) {$1$};
      \node (1) at (-0.6, -0.1) {$2$};
      \node (2) at (-0, -0.1) {$3$};
      \node (3) at (0.6, -0.1) {$4$};
      \node (4) at (1.2, -0.1) {$5$};
      \node (5) at (1.8, -0.1) {$6$};
      \node at (.3,-2) {$%
        b \ibullet
        \bigloset{\ibullet b\ibullet \\ \pibullet c\ibullet}
        \bigloset{\pibullet a \ibullet \\ \ibullet b \ibullet \\ \ibullet c \ibullet}
        \bigloset{\ibullet a  \ibullet \\ \ibullet b \pibullet \\ \ibullet c\ibullet }
        \bigloset{\ibullet a \ibullet \\ \ibullet c\pibullet }
        \ibullet a$};
    \end{scope}
    
    \begin{scope}[shift={(11,0)}]
      \def\hw{0.3}
      \filldraw[fill=green!50!white,-](0.6,1.2)--(1.8,1.2)--(1.8,1.2+\hw)--(0.6,1.2+\hw) -- (0.6,1.2);
      \filldraw[fill=red!50!white,-](-1.2,0.7)--(0,0.7)--(0,0.7+\hw)--(-1.2,0.7+\hw)--(-1.2,0.7);
      \filldraw[fill=blue!20!white,-](-.6,0.2)--(1.2,0.2)--(1.2,0.2+\hw)--(-.6,0.2+\hw)--(-.6,0.2);
      \node at (0.8,1.2+\hw*0.5) {$a$};
      \node at (-0.2,0.7+\hw*0.5) {$b$};
      \node at (0.4,0.2+\hw*0.5) {$c$};
      \path (-0.6, 1.2+\hw) edge[-,dashed] (-0.6,0.2);
      \path (0, 1.2+\hw) edge[-,dashed] (0,0.2);
      \path (0.6, 1.2+\hw) edge[-,dashed] (0.6,0.2);
      \path (1.2, 1.2+\hw) edge[-,dashed] (1.2,0.2);
      \path (1.8, 1.2+\hw) edge[-,dashed] (1.8,0.2);
      \node (0) at (-1.2, -0.1) {$1$};
      \node (1) at (-0.6, -0.1) {$2$};
      \node (2) at (-0, -0.1) {$3$};
      \node (3) at (0.6, -0.1) {$4$};
      \node (4) at (1.2, -0.1) {$5$};
      \node (5) at (1.8, -0.1) {$6$};
      \node at (.3,-2) {$%
        b \ibullet
        \bigloset{\ibullet b\ibullet \\ \pibullet c\ibullet}
        \bigloset{\ibullet b\pibullet \\ \ibullet c\ibullet}
        \bigloset{\pibullet a \ibullet \\ \ibullet c \ibullet}
        \bigloset{\ibullet a \ibullet \\ \ibullet c\pibullet }
        \ibullet a$};
    \end{scope}
  \end{tikzpicture}

  \caption{%
    Interval representations of several ipomsets
    with corresponding dense coherent words,
    \cf Example~\ref{ex:swap}.}
  \label{fi:intervalrep}
\end{figure}
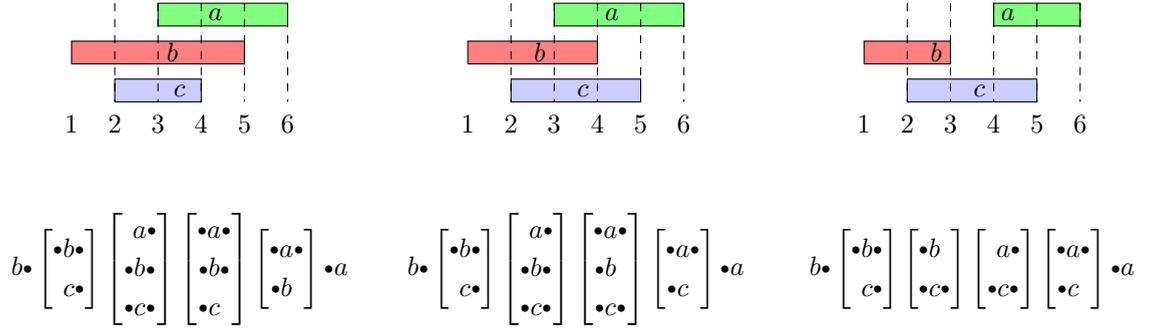

\begin{example}
  \label{ex:swap}
  Figure~\ref{fi:intervalrep} presents several interval representations of ipomsets,
  together with their corresponding dense coherent words.
  Progressing from left to right,
  the first transposition employed is of type \eqref{eq:subselm.tete}
  applied to the fourth and fifth elements,
  swapping termination of $c$ with termination of $b$.
  The second transposition is of type \eqref{eq:subselm.stte}
  and swaps the start of $a$ with the termination of $b$,
  creating a precedence $b<a$.
\end{example}

Our goal is now to show an inverse to Lemma \ref{le:subselm-sound},
showing that subsumptions of ipomsets are generated by the elementary transpositions
\eqref{eq:subselm.stst}, \eqref{eq:subselm.tete} and \eqref{eq:subselm.stte}.

For $w=P_1\dotsm P_n$ and $p\in P=[w]_\sim$ define
\begin{align*}
  \xst_w(p) &=
  \begin{cases}
    \min\{i\mid p\in P_i\} & \text{for $p\not\in S_P$},\\
    -\infty & \text{for $p\in S_P$},
  \end{cases} \\
  \xen_w(p) &=
  \begin{cases}
    \max\{i\mid p\in P_i\} & \text{for $p\not\in T_P$},\\
    +\infty & \text{for $p\in T_P$}.
  \end{cases}
\end{align*}
The next lemma shows that $(\xst_w,\xen_w)$ is a ``bijective'' interval representation of $P$.

\begin{lemma}
  We have
  $\xst_w(P)\cap \xen_w(P)=\emptyset$,
  and for every $i=1,\dotsc, n$ there is precisely one $p\in P$ with $\xst_w(p)=i$ or $\xen_w(p)=i$.
\end{lemma}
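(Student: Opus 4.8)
The plan is to read everything off the hypothesis that $w = P_1 \dotsm P_n$ is \emph{dense}: each $P_i$ is elementary, hence lies in exactly one of $\St_+$ or $\Te_+$ (an identity is never elementary, and no ipomset is both a proper starter and a proper terminator), and it starts, resp.\ terminates, exactly one event. The degenerate case $n = 0$, \ie $w = \id_U$, is immediate: there $\xst_w(p) = -\infty$ and $\xen_w(p) = +\infty$ for all $p$, so the images are disjoint and the second claim is vacuous. Assume henceforth $n \ge 1$.

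First I would fix convenient concrete representatives. Iterating Lemma~\ref{le:isoeqglu} along the gluing $P = P_1 * \dotsm * P_n$, we may assume $P = P_1 \cup \dotsm \cup P_n$ with $P_i \cap P_{i+1} = T_{P_i} = S_{P_{i+1}}$ for each $i$, with $S_P = S_{P_1}$ and $T_P = T_{P_n}$, and, crucially, with $\{\,i \mid p \in P_i\,\}$ a nonempty contiguous interval for every $p \in P$ (the events freshly introduced at each gluing step are chosen disjoint from everything already present, which rules out an event disappearing and reappearing). Under these conventions, $\xst_w(p)$ is the left endpoint of that interval whenever $p \notin S_P$, and $\xen_w(p)$ its right endpoint whenever $p \notin T_P$.

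The heart of the argument is a claim about a fixed index $i$: if $P_i \in \St_+$, then exactly one $p \in P$ has $\xst_w(p) = i$, and no $q \in P$ has $\xen_w(q) = i$. For the first part, $P_i$ elementary gives $P_i \setminus S_{P_i} = \{a\}$ for a unique event $a$; since $a \notin S_{P_i}$ and $S_{P_i} = P_{i-1} \cap P_i$ (or $S_{P_1} = S_P$ when $i = 1$), we get $a \notin P_{i-1}$ (or $a \notin S_P$), so $\xst_w(a) = i$ by the interval property, and conversely any $p$ with $\xst_w(p) = i$ satisfies $p \in P_i \setminus S_{P_i} = \{a\}$. For the second part, a starter has $T_{P_i} = P_i$, so $\xen_w(q) = i$ would force $q \in P_i$ and $q \notin P_{i+1} \cap P_i = T_{P_i} = P_i$ (or $q \notin T_{P_n} = P_n$ when $i = n$), a contradiction. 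The case $P_i \in \Te_+$ is symmetric under exchanging $\St \leftrightarrow \Te$, $\xst_w \leftrightarrow \xen_w$, sources $\leftrightarrow$ targets and $\min \leftrightarrow \max$; there, exactly one $q$ has $\xen_w(q) = i$ and no $p$ has $\xst_w(p) = i$.

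Both assertions then follow at once. Disjointness: a common value of $\xst_w$ and $\xen_w$ is necessarily a finite index $i \in \{1, \dotsc, n\}$, but $\xst_w(p) = i$ forces $P_i \in \St_+$ whereas $\xen_w(q) = i$ forces $P_i \in \Te_+$, which is impossible. Counting: for each $i$, $P_i$ is in $\St_+$ or in $\Te_+$, and the claim gives exactly one $p \in P$ with $\xst_w(p) = i$ or $\xen_w(p) = i$. I expect the one point needing care to be the justification of the interval property of $\{\,i \mid p \in P_i\,\}$, that is, the correct iterated use of Lemma~\ref{le:isoeqglu}; once that is in place, the proof is just the case split according to whether $P_i$ is a proper starter or a proper terminator.
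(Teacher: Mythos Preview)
Your proof is correct and follows the same approach as the paper's: the case split on whether $P_i$ is a proper starter or a proper terminator, combined with elementariness. The paper's own proof is a two-sentence sketch of exactly this (``every $P_i$ is either a starter or a terminator'' for disjointness, ``every $P_i$ is elementary'' for the count), leaving implicit the representative-choosing and the interval property of $\{i \mid p \in P_i\}$ that you carefully spell out.
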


\begin{proof}
  Every $P_i$ is either a starter or a terminator, hence $\xst_w(P)\cap \xen_w(P)=\emptyset$.
  Further, every $P_i$ is elementary, implying the second claim.
\end{proof}

We may hence define a function $\phi_w:\{1,\dotsc,n\}\to P$
that tells which event starts or terminates at a given place,
given by $\phi_w(i)=p$ if $\xst_w(p)=i$ or $\xen(p)=i$.


\begin{lemma}
  \label{le:subselm-complete}
  If $P\subseq Q$, $P=[u]_\sim$, and $Q=[v]_\sim$, then $u\preceq v$.
\end{lemma}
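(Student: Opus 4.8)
The plan is to prove the statement by induction on the number of pairs in the precedence order of $Q$ that are \emph{not} already present in $P$, i.e.\ on $|{<_Q} \setminus f({<_P})|$ where $f\colon P\to Q$ is a subsumption (identifying $P$ and $Q$ as underlying sets via $f$, so that $P$ and $Q$ differ only in having more, resp.\ fewer, precedence pairs, plus possibly more event order in $Q$). The base case is $P\simeq Q$: here $u$ and $v$ are two dense coherent words representing the same ipomset, and by Lemma \ref{le:ipomsparse} (uniqueness of sparse decomposition, hence all dense decompositions having the same length) and the relations \eqref{eq:subselm.stst}, \eqref{eq:subselm.tete} — which are equivalences — one shows that any two dense representatives of the same ipomset are connected by transpositions of subsequent starters and subsequent terminators, so $u\preceq v$ (indeed $u$ and $v$ are in the same $\preceq$-equivalence class). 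This sub-claim is itself proved by a small induction using the functions $\phi_w$, $\xst_w$, $\xen_w$: the first letter of a dense representative is forced to be the step that starts or terminates $\phi_w(1)$, and among the candidates for position $1$ one can permute freely using \eqref{eq:subselm.stst}/\eqref{eq:subselm.tete}.

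For the inductive step, suppose $P\subs Q$. The key is to peel off \emph{one} elementary transposition of type \eqref{eq:subselm.stte} from $v$. Pick a minimal pair $x<_P y$ that is not in $<_Q$; minimality can be arranged so that in $Q$ the events $x$ and $y$ are concurrent and, by interval-ness, there is a dense coherent representative $v'$ of $Q$ in which the termination step of $x$ is \emph{immediately followed} by the start step of $y$ — this uses that $Q$ is an interval order, so we may choose an interval representation of $Q$ in which $e(x)$ and $b(y)$ are adjacent among all the endpoints, and then use the base-case argument to move from any given $v$ to such a $v'$ via type-\eqref{eq:subselm.stst}/\eqref{eq:subselm.tete} swaps. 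Applying the transposition \eqref{eq:subselm.stte} at that spot (reading it from right to left, $\starter{}{}\cdot\terminator{}{}\succeq \terminator{}{}\cdot\starter{}{}$) yields a dense coherent word $v''$ with $v''\preceq v'$, and $[v'']_\sim$ is an ipomset $Q'$ with exactly one more precedence pair than $Q$, namely $x<_{Q'}y$, and still $P\subseq Q'$ (the subsumption factors as $P\to Q'\to Q$). Crucially $Q'$ is again an interval order — this is guaranteed because $v''$ is a bona fide coherent word, hence $[v'']_\sim$ admits a step decomposition, hence is interval; this is precisely the point where the present argument avoids the gap in \cite{ABCFZ24}. Now $|{<_{Q'}}\setminus{<_P}| < |{<_Q}\setminus{<_P}|$, so by induction $u\preceq v''$, and by transitivity of $\preceq$ together with $v''\preceq v'$ and $v'\preceq v$ (the latter an equivalence) we get $u\preceq v$.

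The main obstacle I expect is the step establishing the existence of the representative $v'$ of $Q$ with the termination of $x$ adjacent to the start of $y$, \emph{and} choosing the pair $x<_P y$ so that this adjacency is actually realizable while also guaranteeing that the new precedence pair added is exactly $\{(x,y)\}$ and nothing else is forced. One has to be careful that sliding the start of $y$ leftwards past the termination of $x$ does not accidentally create other precedences (it won't, because in a dense word each step moves one event endpoint, and the swap \eqref{eq:subselm.stte} changes the order relation of exactly the two events involved), and that $y\notin S_Q$ and $x\notin T_Q$ so that these steps genuinely exist in dense representatives. A secondary subtlety is the bookkeeping in the base case: one must verify that the sparse-decomposition uniqueness of Lemma \ref{le:ipomsparse} together with the equivalences \eqref{eq:subselm.stst}, \eqref{eq:subselm.tete} really do connect \emph{all} dense representatives of a fixed ipomset, not just those sharing a sparse core — but since every dense word reduces to the unique sparse word by merging adjacent same-type steps (which are instances of the defining relation of $\sim$ and can be realized as compositions of elementary type-\eqref{eq:subselm.stst}/\eqref{eq:subselm.tete} swaps together with splitting), this goes through.
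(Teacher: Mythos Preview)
Your approach differs substantially from the paper's. The paper inducts on the common length $n$ of $u=P_1\dotsm P_n$ and $v=Q_1\dotsm Q_n$ (equal by Lemma~\ref{le:ipomsparse}), peeling off the \emph{first letter} via a three-way case split. If $P_1$ is a starter, starting some event $p$, then every letter of $v$ preceding the starter of $p$ must itself be a starter (a terminator there would yield $q<_Q p$ with $q\not<_P p$, contradicting $P\subseq Q$), so type-\eqref{eq:subselm.stst} swaps bring that starter to the front and one recurses on the suffixes. If $Q_1$ is a terminator, a dual argument works. If $P_1$ is a terminator and $Q_1$ a starter, the event $p$ terminated at $P_1$ lies in $S_P=S_Q$, and its terminator in $v$ can be dragged to the front using type-\eqref{eq:subselm.tete} and right-to-left type-\eqref{eq:subselm.stte} swaps. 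No choice of a ``good'' precedence pair is ever needed.

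Your plan has a real gap precisely where you anticipate it. First a directional slip: since $x$ and $y$ are concurrent in $Q$, every representative of $Q$ has the start of $y$ \emph{before} the termination of $x$; the pattern you need in $v'$ is $\starter{U}{y}\,\terminator{U}{x}$, which is then swapped down to $\terminator{(U\setminus y)}{x}\,\starter{(U\setminus x)}{y}$ to obtain $v''$. More seriously, the existence of such a $v'$ is not automatic. In any dense representative of $Q$, the only starter-then-terminator adjacencies arise at the boundaries between consecutive blocks of the sparse decomposition, so you need $y$ to be started in some block $S_i$ and $x$ terminated in the immediately following $T_i$. An arbitrary pair $x<_P y$ with $x\not<_Q y$ need not satisfy this: take $Q$ with $z<_Q y$ and $x$ concurrent to both, and $P$ the chain $x<z<y$; for the pair $(x,z)$, the start of $z$ and the termination of $x$ are separated in every dense representative of $Q$ by the forced termination of $z$ and start of $y$, so no $v'$ exists---one is obliged to use $(x,y)$ first. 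It is true that \emph{some} realizable pair always exists when $P\subs Q$ (its failure would force a $2{+}2$ in $P$, violating intervalness), but this is a non-obvious lemma in its own right that your proposal neither states precisely nor proves. The paper's length-based induction bypasses it entirely. (Incidentally, your induction measure is mis-stated: $|{<_Q}\setminus f({<_P})|$ is always zero; you mean $|f({<_P})\setminus{<_Q}|$.)
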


\begin{proof}
  Let $u=P_1\dotsc P_n$ and $v=Q_1\dotsc Q_n$
  ($u$ and $v$ have the same length by Lemma \ref{le:ipomsparse}).
  Consider some cases:
  \begin{enumerate}
  \item
    $P_1$ is a starter.
    Let $p=\phi_u(1)$ be the event started at $P_1$, then $p\notin S_P=S_Q$.
    Let $m=\xst_v(p)$.
    Assume that there is $j\in \{1,\dotsc, m-1\}$ such that $Q_j$ is a terminator,
    then with $q=\phi_v(j)$ we would have $q<_Q p$ but $q\not<_P p$: a contradiction to $P\subseq Q$.
    
    Hence all of $Q_1\dotsc, Q_m$ are starters,
    so we can use transpositions of type \eqref{eq:subselm.stst}
    to move $Q_m$ to the very beginning of $v$.
    Now $P_1=Q_1$,
    and we can use induction on $P_2\dotsm P_n$ and $Q_2\dotsm Q_n$.


  \item
    $Q_1$ is a terminator.
    This uses a dual argument to the one above,
    showing that $u$ must begin with a sequence of terminators,
    one of which terminates $\phi_v(1)$,
    and then using transpositions of type \eqref{eq:subselm.tete}
    to move that terminator to the very beginning.
  \item
    $P_1$ is a terminator and $Q_1$ is a starter.
    Let $p=\phi_u(1)$ be the event terminated at $P_1$, then $p\in S_P=S_Q$ and $p\notin T_P=T_Q$.
    Let $k=\xen_v(p)$ be the index at which $p$ is terminated in $v$.

    None of $Q_1,\dotsc, Q_{k-1}$ start or terminate $p$,
    so we can use transpositions of type \eqref{eq:subselm.tete} or \eqref{eq:subselm.stte}
    (from right to left)
    to move $Q_k$ to the beginning of $v$.
    Now $P_1=Q_1$,
    and we can again use induction on $P_2\dotsm P_n$ and $Q_2\dotsm Q_n$.
  \end{enumerate}
\end{proof}

\begin{example}
  Let $P = \loset{a \\ b}$, $Q = ab$,
  and $b\ibullet \loset{\pibullet a \ibullet \\ \ibullet b \ibullet} \loset{\ibullet a \ibullet \\ \ibullet b \pibullet } \ibullet a$
  and $a\ibullet\, \ibullet a b\ibullet\, \ibullet b$
  be dense step sequences corresponding to $P$ resp.\ $Q$.
  An example of a sequence as in Lemma \ref{le:subselm-complete} that underlines the fact that $Q \subs P$ is
  \begin{equation*}
  \begin{array}{r@{\,}c@{\,}c@{\,}c@{\,}c}
    w_1 &= b\ibullet &\loset{\pibullet a \ibullet \\ \ibullet b \ibullet} &\loset{\ibullet a \ibullet \\ \ibullet b\pibullet } &\ibullet a, \\
    w_2 &= a\ibullet &\loset{\ibullet a \ibullet \\ \pibullet b \ibullet} &\loset{\ibullet a \ibullet \\ \ibullet b\pibullet } &\ibullet a, \\
    w_3 &=  a\ibullet &\loset{\ibullet a \ibullet \\ \pibullet b \ibullet} &\loset{\ibullet a \pibullet \\ \ibullet b\ibullet} &\ibullet b, \\
    w_4 &=  a\ibullet\, &\ibullet a& b\ibullet\, &\ibullet b.
  \end{array}
  \end{equation*}
\end{example}


\begin{theorem}
  \label{th:subsumption}
  For $P, Q\in \iPoms$, the following conditions are equivalent.
\begin{enumerate}
\item
	$P\subseq Q$.
\item
	$v\preceq w$ for all $v,w\in\DCoh$ such that $[v]_\sim = P$, $[w]_\sim=Q$.
\item
	$v\preceq w$ for some $v,w\in\DCoh$ such that $[v]_\sim = P$, $[w]_\sim=Q$.	
\end{enumerate}  
\end{theorem}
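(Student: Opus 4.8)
The plan is to establish the cycle of implications $(2)\Rightarrow(3)\Rightarrow(1)\Rightarrow(2)$. The implication $(2)\Rightarrow(3)$ is immediate, provided one checks that for every ipomset $R\in\iPoms$ there exists at least one dense coherent word $v$ with $[v]_\sim=R$; this follows from Lemma~\ref{le:stepdecomp} together with the fact (implicit in the discussion around Lemma~\ref{le:ipomsparse}) that a sparse step decomposition can always be refined into a dense one by splitting every proper starter and terminator into elementary ones. The implication $(3)\Rightarrow(1)$ is exactly Lemma~\ref{le:subselm-sound}: from $v\preceq w$ we get $P=[v]_\sim\subseq[w]_\sim=Q$.

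The substantive direction is $(1)\Rightarrow(2)$. Here I would fix $P\subseq Q$ and arbitrary dense coherent representatives $v$ of $P$ and $w$ of $Q$, and aim to show $v\preceq w$. The first observation is that $v$ and $w$ have the same length: by Lemma~\ref{le:ipomsparse} every dense decomposition of a given ipomset has the same length $n$, and since $P$ and $Q$ have the same underlying set of events with $P$ merely having more precedence order, a counting argument on starts and terminations (each event contributes exactly one start-index and one end-index, unless it lies in an interface) shows $|v|=|w|$. Then I would essentially replay the proof of Lemma~\ref{le:subselm-complete}, but starting from the given $v$ rather than from the canonical sparse-then-densified one: the argument there only uses that $u$ and $v$ are dense coherent words of equal length with $[u]_\sim\subseq[v]_\sim$, not any particular choice of representative. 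Concretely, I would induct on the common length $n$. If $P_1$ (the first letter of $v$, the word for $P$) is a starter, let $p=\phi_v(1)$; since $p\notin S_P=S_Q$ and no terminator can occur before $\xst_w(p)$ in $w$ without creating precedence into $p$ that is absent in $P$ (contradicting $P\subseq Q$), all letters of $w$ up to $\xst_w(p)$ are starters, and type-\eqref{eq:subselm.stst} transpositions (which are equivalences) move that starter of $p$ to the front of $w$; then strip the matching first letter from both and recurse. If $P_1$ is a terminator and $Q_1$ a starter, then $p=\phi_v(1)\in S_P=S_Q$ and is terminated at some index $k=\xen_w(p)$ in $w$; none of $Q_1,\dots,Q_{k-1}$ touches $p$, so type-\eqref{eq:subselm.tete} and type-\eqref{eq:subselm.stte} transpositions (read right-to-left) bring $Q_k$ to the front of $w$, after which the first letters agree and we recurse. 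The remaining case, $Q_1$ a terminator, is dual to the first.

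The main obstacle I anticipate is bookkeeping the interface events and the $-\infty/+\infty$ conventions in $\xst_w,\xen_w$ so that the induction stays within $\DCoh$ and the transpositions remain applicable: one must ensure that after moving a letter to the front and deleting it together with the matching front letter of $v$, what remains is still a pair of dense coherent words of equal length representing ipomsets related by subsumption, with the source interface updated correctly. A secondary subtlety is verifying that each application of \eqref{eq:subselm.stst}--\eqref{eq:subselm.stte} genuinely preserves the $\sim$-class of $w$ (for the equivalence-type transpositions) or only weakens it in the direction compatible with $\preceq$ (for type~\eqref{eq:subselm.stte}, used right-to-left, which is precisely where we need $w$ fixed as an upper endpoint); but this is handled by the definition of $\preceq$ as a preorder closed under the context rule $w\preceq w' \implies vwy\preceq vw'y$. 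Modulo this care, the argument is a direct adaptation of Lemma~\ref{le:subselm-complete}, and in fact Theorem~\ref{th:subsumption} can be seen as the statement that the choice of dense representatives in that lemma was immaterial.
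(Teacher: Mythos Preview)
Your proposal is correct and follows the same cycle of implications as the paper: $(2)\Rightarrow(3)$ by existence of dense decompositions, $(3)\Rightarrow(1)$ by Lemma~\ref{le:subselm-sound}, and $(1)\Rightarrow(2)$ by Lemma~\ref{le:subselm-complete}. The only remark is that you need not ``replay'' the proof of Lemma~\ref{le:subselm-complete} for arbitrary dense representatives: as stated, that lemma already takes arbitrary $u,v\in\DCoh$ with $[u]_\sim=P$ and $[v]_\sim=Q$ (its proof never fixes a canonical representative), so $(1)\Rightarrow(2)$ is a direct citation rather than an adaptation.
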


\begin{proof}
	Implication 1.\;$\Rightarrow$ 2.\@ is Lemma \ref{le:subselm-complete},
	2.\;$\Rightarrow$ 3.\@ follows by the existence of dense step decompositions,
	and 3.\;$\Rightarrow$ 1.\@ is Lemma \ref{le:subselm-sound}.
\end{proof}


\begin{corollary}
  Every subsumption $P\subseq Q$ is a composition of elementary 
  subsumptions of the form 
  $P'*\terminator {(U \setminus a)} b * \starter {(U \setminus b)} a *P''
  \subs
  P'*\starter U a * \terminator U b*P''$.
\end{corollary}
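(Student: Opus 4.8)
The plan is to read the corollary off Theorem~\ref{th:subsumption} by unfolding the definition of the preorder $\preceq$ on $\DCoh$. First I would take a subsumption $P\subseq Q$ and fix dense step decompositions of both sides, i.e.\ words $u,v\in\DCoh$ with $[u]_\sim=P$ and $[v]_\sim=Q$ (these exist since every ipomset has a dense decomposition, and they have equal length by Lemma~\ref{le:ipomsparse}). Theorem~\ref{th:subsumption} then gives $u\preceq v$. Since $\preceq$ is, by definition, the partial preorder generated by the transposition schemes \eqref{eq:subselm.stst}, \eqref{eq:subselm.tete}, \eqref{eq:subselm.stte} together with the context closure $w\preceq w'\implies vwy\preceq vw'y$, the relation $u\preceq v$ unfolds into a finite chain $u=u_0,u_1,\dotsc,u_k=v$ of dense coherent words in which each $u_{i+1}$ is obtained from $u_i$ by replacing a single factor equal to the left-hand side of one of \eqref{eq:subselm.stst}, \eqref{eq:subselm.tete}, \eqref{eq:subselm.stte} by the corresponding right-hand side (the trivial step $u_{i+1}=u_i$ being allowed).

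Next I would push this chain into $\iPoms$ via the functor $\Psip\colon\Coh\to\iPoms$. The key observation is that steps of type \eqref{eq:subselm.stst} and \eqref{eq:subselm.tete} do not change the underlying ipomset: the gluing of the two starters $\starter{(U\setminus b)}{a}*\starter{U}{b}$ equals the single starter $\starter{U}{\{a,b\}}$ that starts $a$ and $b$ at once, which is symmetric in $a$ and $b$, and dually $\terminator{U}{b}*\terminator{(U\setminus b)}{a}=\terminator{U}{\{a,b\}}$ for type \eqref{eq:subselm.tete}; since $\Psip$ is a functor, $\Psip(u_i)=\Psip(u_{i+1})$ for every such step. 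Hence the ipomset $\Psip(u_i)$ changes only at steps of type \eqref{eq:subselm.stte}. For such a step, write $u_i=w\cdot\terminator{(U\setminus a)}{b}\cdot\starter{(U\setminus b)}{a}\cdot y$ and $u_{i+1}=w\cdot\starter{U}{a}\cdot\terminator{U}{b}\cdot y$ for coherent words $w,y$, and set $P'=\Psip(w)$, $P''=\Psip(y)$. Functoriality of $\Psip$ (with coherence of $u_i$ making all gluings defined) gives $\Psip(u_i)=P'*\terminator{(U\setminus a)}{b}*\starter{(U\setminus b)}{a}*P''$ and $\Psip(u_{i+1})=P'*\starter{U}{a}*\terminator{U}{b}*P''$, which by Lemma~\ref{le:subselm-sound} is a subsumption, and a proper one since \eqref{eq:subselm.stte} inserts the precedence $b<a$ — precisely an elementary subsumption of the form in the statement.

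Finally I would compose: along $i=0,\dotsc,k-1$ each subsumption $\Psip(u_i)\subseq\Psip(u_{i+1})$ is either an identity (types \eqref{eq:subselm.stst}, \eqref{eq:subselm.tete}) or an elementary subsumption of the displayed shape (type \eqref{eq:subselm.stte}), so their composite realizes $P=\Psip(u_0)\subseq\Psip(u_k)=Q$ as a composition of elementary subsumptions — the empty composition, in case $P\simeq Q$, being an isomorphism.

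I do not expect a serious obstacle: essentially all the content sits in Theorem~\ref{th:subsumption} (hence in Lemma~\ref{le:subselm-complete}), and what remains is bookkeeping. The only genuine check is the elementary computation that transpositions \eqref{eq:subselm.stst} and \eqref{eq:subselm.tete} act as the identity on ipomsets, i.e.\ that gluing two starters (resp.\ two terminators) is symmetric in the two events involved. One point worth flagging in the write-up is that subsumptions between a fixed pair $P,Q$ need not be unique (Example~\ref{ex:subsu}), so the corollary is to be read as asserting the existence of such a factorisation for every pair with $P\subseq Q$, which is exactly what the chain above produces.
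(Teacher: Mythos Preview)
Your proposal is correct and follows essentially the same approach as the paper's proof, which is a terse two-sentence pointer: invoke Lemma~\ref{le:subselm-complete} (equivalently the $1\Rightarrow 2$ direction of Theorem~\ref{th:subsumption}) to obtain $u\preceq v$, then observe that steps of type \eqref{eq:subselm.stst} and \eqref{eq:subselm.tete} induce isomorphisms and may be dropped, leaving only the elementary subsumptions from \eqref{eq:subselm.stte}. Your write-up simply unfolds this argument in detail, including the explicit verification that the starter--starter and terminator--terminator swaps act as the identity on ipomsets.
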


\begin{proof}
  This follows from Lemma \ref{le:subselm-complete} and the definition of $\preceq$.
  (Relations \eqref{eq:subselm.stst} and \eqref{eq:subselm.tete}
  can be skipped since they induce
  isomorphisms.)
\end{proof}


\section{Higher-Dimensional Automata and ST-Automata}
\label{se:hda-st-a}

We now transfer the isomorphism between ipomsets and step sequences to the operational side.
We recall higher-dimensional automata which generate ipomsets
and introduce ST-automata which generate step sequences,
and we clarify their relation.


\subsection{Higher-dimensional automata}
\label{sse:hda}

We give a quick introduction to higher-dimensional automata and their languages
and refer the interested reader to \cite{DBLP:journals/fuin/FahrenbergZ24, AMRANE2025115156} for details and examples.

A \emph{precubical set}
\begin{equation*}
  X=(X, {\ev}, \{\delta_{A, U}^0, \delta_{A, U}^1\mid U\in \square, A\subseteq U\})
\end{equation*}
consists of a set of \emph{cells} $X$
together with a function $\ev: X\to \square$ which to every cell assigns a conclist of concurrent events which are active in it.
We write $X[U]=\{q\in X\mid \ev(q)=U\}$ for the cells of type $U$.
For every $U\in \square$ and $A\subseteq U$ there are \emph{face maps}
$\delta_{A}^0, \delta_{A}^1: X[U]\to X[U\setminus A]$
(we often omit the extra subscript $U$)
which satisfy
\begin{equation}
  \label{eq:prid}
  \text{%
    $\delta_A^\nu \delta_B^\mu = \delta_B^\mu \delta_A^\nu$
    for $A\cap B=\emptyset$ and $\nu, \mu\in\{0, 1\}$.%
  }
\end{equation}
The \emph{upper} face maps $\delta_A^1$ terminate events in $A$
and the \emph{lower} face maps $\delta_A^0$ transform a cell $q$
into one in which the events in $A$ have not yet started.

A \emph{higher-dimensional automaton} (\emph{HDA}) $\hda{H}=(\cell{H}, \bot_\cell{H}, \top_\cell{H})$
is a precubical set together with subsets $\bot_\cell{H}, \top_\cell{H}\subseteq \cell{H}$
of \emph{start} and \emph{accept} cells.
We do not generally assume HDAs to be finite,
but will do so in Section \ref{sec:MSO}.
The \emph{dimension} of an HDA $\hda{H}$ is $\dim(\hda{H})=\sup\{|\ev(q)|\mid q\in \cell{H}\}\in \Nat\cup \{\infty\}$.

\begin{example}
A standard automaton is the same as a one-dimensional HDA $\hda{H}$
with the property that for all $q \in \bot_\cell{H} \cup \top_\cell{H}$, $\ev(q) = \emptyset$:
cells in $\cell{H}[\emptyset]$ are states,
cells in $\cell{H}[\{a\}]$ for $a\in \Sigma$ are $a$-labelled transitions,
and face maps $\delta_{\{a\}}^0$ and $\delta_{\{a\}}^1$
attach source and target states to transitions.
In contrast to ordinary automata we allow start and accept \emph{transitions}
instead of merely states,
so languages of one-dimensional HDAs may contain words with interfaces.
\end{example}

\begin{example}
  \label{ex:hda}
	Figure~\ref{fi:abcube} shows a two-dimensional HDA as a combinatorial object (left)
	and in a geometric realisation (right).
	It consists of
	21 cells:
	states $\cell{H}_0 = \{v_1,\dots, v_8\}$ in which no event is active ($\ev(v_i) = \emptyset$),
	transitions $\cell{H}_1 = \{t_1,\dots, t_{10}\}$ in which one event is active (\eg $\ev(t_3) = \ev(t_4) = c$),
	squares $\cell{H}_2 = \{q_1, q_2, q_3\}$ where $\ev(q_1) = \loset{a\\c}$ and $\ev(q_2) = \ev(q_3) = \loset{a\\d}$.
	The arrows between cells in the left representation correspond to the face maps connecting them.
	For example, the upper face map $\delta^1_{a c}$ maps $q_1$ to $v_4$
	because the latter is the cell in which the active events $a$ and $c$ of $q_1$ have been terminated.
	On the right, face maps are used to glue cells,
	so that for example $\delta^1_{a c}(q_1)$ is glued to the top right of $q_1$.
	In this and other geometric realisations,
	when we have two concurrent events $a$ and $c$ with $a\evord c$, we will draw $a$ horizontally and $c$ vertically.
\end{example}

\begin{figure}[tbp]
	\centering
	\begin{tikzpicture}[x=.7cm, y=.62cm, every node/.style={transform shape}]
		\begin{scope}[y=.7cm, scale=.9]
			\node[circle,draw=black,fill=blue!20,inner sep=0pt,minimum size=15pt]
			(aa) at (0,0) {$\vphantom{hy}v_1$};
			\node[circle,draw=black,fill=blue!20,inner sep=0pt,minimum size=15pt]
			(ac) at (0,4) {$\vphantom{hy}v_2$};
			\node[circle,draw=black,fill=blue!20,inner sep=0pt,minimum size=15pt]
			(ca) at (4,0) {$\vphantom{hy}v_3$};
			\node[circle,draw=black,fill=blue!20,inner sep=0pt,minimum size=15pt]
			(cc) at (4,4) {$\vphantom{hy}v_4$};
			\node[circle,draw=black,fill=blue!20,inner sep=0pt,minimum size=15pt]
			(ae) at (0,8) {$\vphantom{hy}v_5$};
			\node[circle,draw=black,fill=blue!20,inner sep=0pt,minimum size=15pt]
			(ec) at (8,4) {$\vphantom{hy}v_6$};
			\node[circle,draw=black,fill=blue!20,inner sep=0pt,minimum size=15pt]
			(ce) at (4,8) {$\vphantom{hy}v_7$};
			\node[circle,draw=black,fill=blue!20,inner sep=0pt,minimum size=15pt]
			(ee) at (8,8) {$\vphantom{hy}v_8$};
			\node[circle,draw=black,fill=green!30,inner sep=0pt,minimum size=15pt]
			(ba) at (2,0) {$\vphantom{hy}t_1$};
			\node[circle,draw=black,fill=green!30,inner sep=0pt,minimum size=15pt]
			(bc) at (2,4) {$\vphantom{hy}t_2$};
			\node[circle,draw=black,fill=green!30,inner sep=0pt,minimum size=15pt]
			(ab) at (0,2) {$\vphantom{hy}t_3$};
			\node[circle,draw=black,fill=green!30,inner sep=0pt,minimum size=15pt]
			(ad) at (0,6) {$\vphantom{hy}t_5$};
			\node[circle,draw=black,fill=green!30,inner sep=0pt,minimum size=15pt]
			(be) at (2,8) {$\vphantom{hy}t_6$};
			\node[circle,draw=black,fill=green!30,inner sep=0pt,minimum size=15pt]
			(cd) at (4,6) {$\vphantom{hy}t_8$};
			\node[circle,draw=black,fill=green!30,inner sep=0pt,minimum size=15pt]
			(de) at (6,8) {$\vphantom{hy}t_9$};
			\node[circle,draw=black,fill=green!30,inner sep=0pt,minimum size=15pt]
			(dc) at (6,4) {$\vphantom{hy}t_7$};
			\node[circle,draw=black,fill=green!30,inner sep=0pt,minimum size=15pt]
			(ed) at (8,6) {$\vphantom{hy}t_{10}$};
			\node[circle,draw=black,fill=green!30,inner sep=0pt,minimum size=15pt]
			(cb) at (4,2) {$\vphantom{hy}t_{4}$};
			\node[circle,draw=black,fill=black!20,inner sep=0pt,minimum size=15pt]
			(bb) at (2,2) {$\vphantom{hy}q_1$};
			\node[circle,draw=black,fill=black!20,inner sep=0pt,minimum size=15pt]
			(bd) at (2,6) {$\vphantom{hy}q_2$};
			\node[circle,draw=black,fill=black!20,inner sep=0pt,minimum size=15pt]
			(dd) at (6,6) {$\vphantom{hy}q_3$};
			\path (ba) edge node[above] {$\delta^0_a$} (aa);
			\path (ba) edge node[above] {$\delta^1_a$} (ca);
			\path (bb) edge node[above] {$\delta^0_a$} (ab);
			\path (bb) edge node[above] {$\delta^1_a$} (cb);
			\path (bc) edge node[above] {$\delta^0_a$} (ac);
			\path (bc) edge node[above] {$\delta^1_a$} (cc);
			\path (ab) edge node[left] {$\delta^0_c$} (aa);
			\path (ab) edge node[left] {$\delta^1_c$} (ac);
			\path (bb) edge node[left] {$\delta^0_c$} (ba);
			\path (bb) edge node[left] {$\delta^1_c$} (bc);
			\path (cb) edge node[left] {$\delta^0_c$} (ca);
			\path (cb) edge node[left] {$\delta^1_c$} (cc);
			\path (bb) edge node[above left] {$\delta^1_{ac}\!\!$} (cc);
			\path (bb) edge node[above left] {$\delta^0_{ac}\!\!$} (aa);
			\path (ad) edge node[left] {$\delta^0_d$} (ac);
			\path (ad) edge node[left] {$\delta^1_d$} (ae);
			\path (bd) edge node[left] {$\delta^0_d$} (bc);
			\path (bd) edge node[left] {$\delta^1_d$} (be);
			\path (dc) edge node[above] {$\delta^0_a$} (cc);
			\path (dc) edge node[above] {$\delta^1_a$} (ec);
			\path (bd) edge node[above] {$\delta^0_a$} (ad);
			\path (bd) edge node[above] {$\delta^1_a$} (cd);
			\path (be) edge node[above] {$\delta^0_a$} (ae);
			\path (be) edge node[above] {$\delta^1_a$} (ce);
			\path (bd) edge node[above left] {$\delta^1_{ad}\!\!$} (ce);
			\path (bd) edge node[above left] {$\delta^0_{ad}\!\!$} (ac);
			\path (dd) edge node[above] {$\delta^0_a$} (cd);
			\path (dd) edge node[above] {$\delta^1_a$} (ed);
			\path (de) edge node[above] {$\delta^0_a$} (ce);
			\path (de) edge node[above] {$\delta^1_a$} (ee);
			\path (cd) edge node[left] {$\delta^0_d$} (cc);
			\path (cd) edge node[left] {$\delta^1_d$} (ce);
			\path (dd) edge node[left] {$\delta^0_d$} (dc);
			\path (dd) edge node[left] {$\delta^1_d$} (de);
			\path (ed) edge node[left] {$\delta^0_d$} (ec);
			\path (ed) edge node[left] {$\delta^1_d$} (ee);
			\path (dd) edge node[above left] {$\delta^1_{ad}\!\!$} (ee);
			\path (dd) edge node[above left] {$\delta^0_{ad}\!\!$} (cc);
			\node[below left] at (ab) {$\bot\;$};
			\node[above right] at (ee) {$\;\top$};
		\end{scope}
		\begin{scope}[shift={(-4,6.25)}]
			\node[right] at (9,-7.5) {$\cell{H}[\emptyset]=\{v_1,\dots, v_8\}$, $\cell{H}[a]=\{t_1,t_2,t_6,t_7,t_9\}$};
			\node[right] at (9,-6.7) {$\cell{H}[c]=\{t_3,t_4\}$, $\cell{H}[d]=\{t_5,t_8,t_{10}\}$};
			\node[right] at (9,-5.9) {$\cell{H}[\loset{a\\c}]=\{q_1\}$};
			\node[right] at (9,-5.1) {$\cell{H}[\loset{a\\d}]=\{q_2,q_3\}$};
			\node[right] at (9,-4.3) {$\bot_{\cell{H}}=\{t_3\}$, $\top_{\cell{H}}=\{v_8\}$};
		\end{scope}
		\begin{scope}[shift={(12,1.5)}, x=1.3cm, y=1.15cm, scale=.9]
			\filldraw[color=black!15] (0,0)--(2,0)--(2,2)--(0,2)--(0,0);
			\filldraw[color=black!15] (0,2)--(0,4)--(4,4)--(4,2)--(0,2);
			\filldraw (0,0) circle (0.05);
			\filldraw (2,0) circle (0.05);
			\filldraw (0,2) circle (0.05);
			\filldraw (2,2) circle (0.05);
			\filldraw (0,4) circle (0.05);
			\filldraw (4,2) circle (0.05);
			\filldraw (4,4) circle (0.05);
			\filldraw (2,4) circle (0.05);
			\path[line width=.5] (0,0) edge node[below, black] {$\vphantom{b}a$} (1.95,0);
			\path[line width=.5] (0,2) edge node[left, black] {$\vphantom{b}d$} (0,3.95);
			\path[line width=.5] (0,2) edge (1.95,2);
			\path[line width=.5] (2,2) edge node[pos=.6, below, black] {$\vphantom{bg}a$} (3.95,2);
			\path[line width=.5] (2,4) edge (3.95,4);
			\path[line width=.5] (0,0) edge node[pos=.6, left, black] {$\vphantom{bg}c$} (0,1.95);
			\path[line width=.5] (2,0) edge (2,1.95);
			\path[line width=.5] (0,4) edge (1.95,4);
			\path[line width=.5] (2,2) edge (2,3.95);
			\path[line width=.5] (4,2) edge (4,3.95);
			\node[left] at (0,0.9) {$\bot$};
			\node[above] at (4,4) {$\top$};
			
			\node[blue,centered] at (0,-0.2) {$v_1$};
			\node[centered, green!50!black] at (1,0.15) {$t_1$};
			\node[blue,centered] at (2,-0.2) {$v_3$};
			\node[centered,blue] at (1.8,2.2) {$v_4$};
			\node[centered,blue] at (-0.2,2) {$v_2$};
			
			\node[centered,blue] at (-0.2,4) {$v_5$};
			\node[centered,blue] at (4.2,2) {$v_6$};
			\node[centered,blue] at (2,4.2) {$v_7$};
			\node[centered,blue] at (4.2,4) {$v_8$};
			
			\node[centered, green!50!black] at (0.2,1.1) {$\vphantom{bg}t_3$};
			\node[centered, green!50!black] at (1.8,1.1) {$\vphantom{bg}t_4$};
			\node[centered, green!50!black] at (1,1.75) {$t_2$};
			\node[centered, green!50!black] at (0.15,3) {$t_5$};
			\node[centered, green!50!black] at (3,2.15) {$t_7$};
			\node[centered, green!50!black] at (3,3.85) {$t_9$};
			\node[centered, green!50!black] at (1.1,3.8) {$\vphantom{bg}t_6$};
			\node[centered, green!50!black] at (2.2,3.1) {$\vphantom{bg}t_8$};
			\node[centered, green!50!black] at (3.8,3.1) {$\vphantom{bg}t_{10}$};
			\node[centered] at (1,1) {$q_1$};
			\node[centered] at (1,3) {$q_2$};
			\node[centered] at (3,3) {$q_3$};
		\end{scope}
	\end{tikzpicture}
	\caption{A two-dimensional HDA $\hda{H}$ on $\Sigma=\{a, c, d\}$, see Example~\ref{ex:hda}.}
	\label{fi:abcube}
\end{figure}
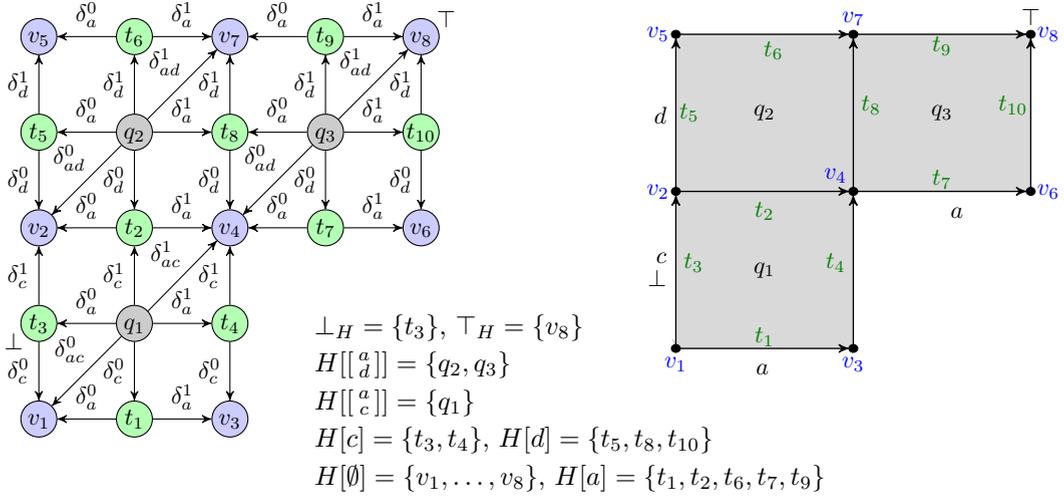

A \emph{path} in an HDA $\hda{H}$ is a sequence
$\alpha=(q_0, \phi_1, q_1,\dots, \phi_n, q_n)$
consisting of cells $q_i\in \cell{H}$ and symbols $\phi_i$ which indicate face map types:
for every $i=1,\dots, n$, $(q_{i-1}, \phi_i, q_i)$ is either
\begin{itemize}
\item $(\delta^0_A(q_i), \arrO{A}, q_i)$ for $A\subseteq \ev(q_i)$ (an \emph{upstep}) or
\item $(q_{i-1}, \arrI{A}, \delta^1_A(q_{i-1}))$ for $A\subseteq \ev(q_{i-1})$ (a \emph{downstep}).
\end{itemize}
Downsteps terminate events, following upper face maps,
whereas upsteps start events by following inverses of lower face maps.

The \emph{source} and \emph{target} of $\alpha$ as above are $\src(\alpha)=q_0$ and $\tgt(\alpha)=q_n$,
and $\alpha$ is \emph{accepting} if $\src(\alpha)\in \bot$ and $\tgt(\alpha)\in \top$.
Paths $\alpha$ and $\beta$ may be \emph{concatenated} to $\alpha*\beta$ if $\tgt(\alpha)=\src(\beta)$.

The \emph{event ipomset} $\ev(\alpha)$ of a path $\alpha$ is defined recursively as follows:
\begin{itemize}
\item if $\alpha=(q)$, then
  $\ev(\alpha)=\id_{\ev(q)}$;
\item if $\alpha=(q\arrO{A} p)$, then
  $\ev(\alpha)=\starter{\ev(p)}{A}$;
\item if $\alpha=(p\arrI{B} q)$, then
  $\ev(\alpha)=\terminator{\ev(p)}{B}$;
\item if $\alpha=\alpha_1*\dotsm*\alpha_n$ is a concatenation, then $\ev(\alpha)=\ev(\alpha_1)*\dotsm*\ev(\alpha_n)$.
\end{itemize}

\begin{example}
	\label{ex:paths}
	The HDA $\hda{H}$ of Example~\ref{ex:hda} (Figure~\ref{fi:abcube}) admits several accepting paths,
	for example
	$t_3\arrO{a} q_1\arrI{c} t_2 \arrO{d} q_2 \arrI{a} t_8 \arrO{a} q_3 \arrI{ad} v_8$.
	Its event ipomset is 
	\begin{equation*}
		\starter{\!\loset{a\\c}}{a}\,* \terminator{\loset{a\\c}}{c}\,*
		\starter{\!\loset{a\\d}}{d}\,* \terminator{\loset{a\\d}}{a}\,*
		\starter{\!\loset{a\\d}}{a}\,* \terminator{\loset{a\\d}}{ad} =
		\left[\vcenter{\hbox{%
				\begin{tikzpicture}[y=1.2cm]
                                  \node (a) at (0.4,1.4) {$a$};
                                  \node at (.25,.7) {$\ibullet\vphantom{d}$};
                                  \node (c) at (0.4,0.7) {$c\vphantom{d}$};
                                  \node (b) at (1.8,1.4) {$a$};
                                  \node (d) at (1.8,0.7) {$d$};
                                  \path (a) edge (b);
                                  \path (c) edge (d);
                                  \path (c) edge (b);
                                  \path[densely dashed, gray] (b) edge (d) (a) edge (d) (a) edge (c);
				\end{tikzpicture}
		}}\right]
	\end{equation*}
	which induces a sparse step decomposition.
\end{example}


The \emph{language} of an HDA $\hda{H}$ is
\begin{equation*}
  \Lang(\hda{H}) = \{\ev(\alpha)\mid \alpha \text{ accepting path in } \hda{H}\} \subseteq \iPoms.
\end{equation*}
Languages of HDAs are closed under subsumption \cite{DBLP:journals/lmcs/FahrenbergJSZ24}:
whenever $P\subseq Q\in L(\hda{H})$, then also $P\in L(\hda{H})$.
This motivates the following definition of language.

For a subset $\mcal S\subseteq \iiPoms$ we let
\begin{equation*}
  \mcal S\down=\{P\in \iiPoms\mid \exists Q\in \mcal S: P\subseq Q\}
\end{equation*}
denote its (downward) closure under subsumptions.
%
%
A \emph{language} is a subset $L\subseteq \iiPoms$ for which $L\down=L$.

A language is \emph{regular} if it is the language of a finite HDA.
A language is \emph{rational} if it is constructed from $\emptyset$, $\{\id_\emptyset\}$ and discrete ipomsets
using $\cup$, $*$ and $^+$ (Kleene plus)~\cite{DBLP:journals/lmcs/FahrenbergJSZ24}.
These operations have to take subsumption closure into account; in particular,
\begin{equation*}
	L_1 * L_2 = \{P*Q\mid P\in L_1, Q\in L_2\}\down.
\end{equation*}

\begin{theorem}[\cite{DBLP:journals/lmcs/FahrenbergJSZ24}]
  \label{th:kleene}
  A language is regular if and only if it is rational.
\end{theorem}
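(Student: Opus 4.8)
The plan is to prove the two inclusions separately, in the standard pattern of Kleene-type arguments. For the direction ``rational $\Rightarrow$ regular'', I would argue by structural induction on the rational expression. The base cases are immediate: $\emptyset$ is the language of the HDA with no cells, $\{\id_\emptyset\}$ is the language of the single-state HDA whose unique cell is both start and accept, and a discrete ipomset $U$ (a conclist together with a choice of source and target) is the language of a one-cell HDA $X[U]=\{q\}$ with $\ev(q)=U$, suitably marking $q$ as start and accept — more precisely, one realizes a starter/terminator pair or a single cube with appropriate face structure so that the only accepting path has event ipomset $U$ and all its subsumptions. For the inductive step I would give the three closure constructions. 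Union is disjoint union of the two precubical sets, keeping both sets of start/accept cells; since an accepting path lives entirely in one component, the language is the union, and subsumption closure is preserved because each $L_i\mathord\downarrow = L_i$. For gluing $L_1 * L_2$ one glues accept cells of $\hda H_1$ to start cells of $\hda H_2$ that carry the same conclist label, identifying them along face maps; an accepting path decomposes as $\alpha_1 * \alpha_2$ through such a shared cell, and $\ev(\alpha_1 * \alpha_2) = \ev(\alpha_1) * \ev(\alpha_2)$, which together with Lemma~\ref{le:subsglue} gives exactly $\{P*Q\}\mathord\downarrow$. For Kleene plus one adds $\varepsilon$-type identifications looping accept cells back to start cells, again respecting conclist labels, so that accepting paths are concatenations of accepting paths of $\hda H$.

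For the converse direction ``regular $\Rightarrow$ rational'', the plan is a state-elimination / transition-matrix argument adapted to HDAs. Given a finite HDA $\hda H$, I would define, for each pair of cells $(q,q')$, the set $L_{q,q'}$ of event ipomsets of paths from $q$ to $q'$, so that $\Lang(\hda H) = \bigcup_{q\in\bot,\,q'\in\top} L_{q,q'}$, and show by induction on the number of cells allowed as intermediate vertices that each $L_{q,q'}$ is rational (using the identity $L^{(k+1)}_{q,q'} = L^{(k)}_{q,q'} \cup L^{(k)}_{q,q_k} * (L^{(k)}_{q_k,q_k})^+ * L^{(k)}_{q_k,q'}$ and $L^{(0)}_{q,q'}$ being a finite set of starters/terminators or an identity, hence rational). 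One then takes subsumption closure at the end; since the rational operations already bake in $\mathord\downarrow$ and subsumption closure commutes with $\cup$ and is idempotent, $\Lang(\hda H)$ comes out rational. A cleaner alternative, which I would actually prefer here, is to route through ST-automata: by the translation announced in Section~\ref{se:hda-st-a}, an HDA yields an ST-automaton over the alphabet $\Omega$ with the ``same'' language under the isomorphism $\Phi,\Psi$; apply the classical Kleene theorem for ordinary finite automata over $\Omega$ to get a regular expression in $\Omega^*$; then translate the regular operations back, using Theorem~\ref{th:genipoms} to turn concatenation of starters/terminators into gluing and Theorem~\ref{th:subsumption} to handle the subsumption closure, recovering a rational expression over $\iPoms$.

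The main obstacle in either route is the interplay between subsumption closure and the gluing operation: the rational operations are defined only up to $\mathord\downarrow$, whereas the operational constructions on HDAs produce event ipomsets ``on the nose'', so one must repeatedly invoke closure of HDA languages under subsumption (the cited fact from \cite{DBLP:journals/lmcs/FahrenbergJSZ24}) and the fact that $\mathord\downarrow$ distributes over $\cup$, $*$ and $^+$ in the right way — in particular checking that $(L_1*L_2)\mathord\downarrow = (L_1\mathord\downarrow * L_2\mathord\downarrow)\mathord\downarrow$, which rests on Lemma~\ref{le:subsglue}. A secondary subtlety is making the gluing and Kleene-plus constructions respect the conclist typing of cells, so that only paths with matching interfaces get concatenated; this is exactly where the categorical structure of $\iPoms$ (composition defined only when $T_P = S_Q$) must be mirrored in the precubical face-map identifications. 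Since this theorem is quoted from \cite{DBLP:journals/lmcs/FahrenbergJSZ24}, in the present paper it suffices to cite it; the above is the proof strategy one would reconstruct.
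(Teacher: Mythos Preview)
The paper does not prove Theorem~\ref{th:kleene}; it is quoted from \cite{DBLP:journals/lmcs/FahrenbergJSZ24} and simply cited, as you yourself note in your final sentence. There is therefore no proof in the present paper to compare your proposal against.

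Your reconstruction is in the right spirit, but two points deserve correction. First, the base case for a general discrete ipomset $D=\ilo{S}{U}{T}$ is not a one-cell HDA: such a $D$ need be neither a starter nor a terminator (for example $\loset{\ibullet a\\ \pibullet b\ibullet}$), so one must first decompose it as a starter followed by a terminator and build a small HDA whose accepting paths realise that two-step decomposition. Second, your ``cleaner alternative'' through ST-automata is anachronistic relative to the cited theorem: ST-automata are introduced in the present paper, \emph{after} Theorem~\ref{th:kleene} is invoked, and were not part of the toolkit of \cite{DBLP:journals/lmcs/FahrenbergJSZ24}. Moreover, the step ``translate a regular expression over $\Omega$ back to a rational ipomset expression'' is itself non-trivial --- the paper later cites \cite[Proposition~21]{DBLP:journals/lmcs/FahrenbergJSZ24} for precisely this translation in the proof of Theorem~\ref{th:main} --- so that route is not really a shortcut and risks circularity. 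Your primary approach (structural induction on the rational expression for one direction, state elimination for the other, with careful handling of subsumption closure via Lemma~\ref{le:subsglue} and of interface matching) is the standard and correct shape.
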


The \emph{width} of a language $L$ is $\wid(L)=\sup\{\wid(P)\mid P\in L\}$.
\begin{lemma}[\cite{DBLP:journals/lmcs/FahrenbergJSZ24}]
	\label{lem:finitewidth}
	Any regular language has finite width.
\end{lemma}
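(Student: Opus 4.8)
The plan is to prove the sharper quantitative statement $\wid(\Lang(\hda{H}))\le\dim(\hda{H})$ for every finite HDA $\hda{H}$; the lemma then follows at once, since a finite precubical set has only finitely many cells, each carrying a conclist — hence a \emph{finite} set — of active events, so that $\dim(\hda{H})=\sup\{|\ev(q)|\mid q\in\cell{H}\}$ is a maximum over a finite set of natural numbers and therefore finite. As $\Lang(\hda{H})=\{\ev(\alpha)\mid\alpha\text{ an accepting path}\}$, it is enough to bound $\wid(\ev(\alpha))$ by $\dim(\hda{H})$ for an arbitrary path $\alpha=(q_0,\phi_1,q_1,\dots,\phi_n,q_n)$.

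First I would read off from $\alpha$ its step decomposition $\ev(\alpha)=P_1*\dots*P_n$, where $P_i=\ev\bigl((q_{i-1},\phi_i,q_i)\bigr)$ is a starter or a terminator whose underlying conclist is $\ev(q_{i-1})$ or $\ev(q_i)$, and in particular has cardinality at most $\dim(\hda{H})$. To the coherent word $w=P_1\dots P_n$ I would attach the interval representation $(\xst_w,\xen_w)$ introduced in Section~\ref{se:step-subsumptions}: for each event $p$ of $P:=\ev(\alpha)$ the set of indices $i$ with $p\in P_i$ is a nonempty contiguous block (possibly unbounded below or above for interface events), and $p<_P p'$ holds iff $\xen_w(p)<\xst_w(p')$. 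Consequently two events of $P$ are $<$-incomparable exactly when their index blocks overlap.

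Now take any $<$-antichain $A\subseteq P$. Its elements have pairwise overlapping index blocks, so by the one-dimensional Helly property — finitely many pairwise intersecting intervals of a line share a common point — there is an index $i^\ast\in\{1,\dots,n\}$ lying in the block of every $p\in A$, i.e.\ $A$ is contained in the underlying set of $P_{i^\ast}$. Hence $|A|\le|P_{i^\ast}|\le\dim(\hda{H})$, which gives $\wid(\ev(\alpha))\le\dim(\hda{H})$ and finishes the proof.

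The step that I expect to need the most care is the interval-representation step in combination with the interface events in the Helly argument: the cited construction was phrased for dense coherent words, whereas $P_1\dots P_n$ need not be dense (a single up- or downstep may start or terminate several events simultaneously), and the index blocks of source and target events are unbounded, so the Helly argument has to be set up so as to return a genuine finite index $i^\ast$. The clean way around this is to observe that for an \emph{arbitrary} coherent word $w=P_1\dots P_n$ the clamped blocks $\{i\in\{1,\dots,n\}\mid p\in P_i\}$ are still nonempty contiguous integer intervals and still detect incomparability through the equivalence $p<_{[w]_\sim}p'\iff\xen_w(p)<\xst_w(p')$ (alternatively, one may first refine $\alpha$ to a path with the same event ipomset all of whose steps are elementary, using that face maps are total functions); once this is in place, Helly applies verbatim and the remaining verifications are routine.
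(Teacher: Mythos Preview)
Your proof is correct. The paper itself does not prove this lemma; it merely cites it from \cite{DBLP:journals/lmcs/FahrenbergJSZ24}, so there is no in-paper argument to compare against.

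Your approach is the natural one: any $<$-antichain of $\ev(\alpha)=P_1*\dotsm*P_n$ must lie entirely inside a single $P_i$, and each $P_i$ has underlying conclist $\ev(q)$ for some cell $q$ visited by $\alpha$, hence cardinality at most $\dim(\hda{H})$. The Helly step is sound as you set it up: for an arbitrary (not necessarily dense) coherent word the index set $\{i\mid p\in P_i\}$ is a nonempty contiguous interval in $\{1,\dots,n\}$ for every event $p$, including source and target events (a source event lies in $P_1$, a target event in $P_n$), and two events are $<$-incomparable precisely when these intervals meet; pairwise intersection of finitely many integer intervals then yields a common index $i^\ast$. The edge case of a length-zero path $\alpha=(q_0)$, where $\ev(\alpha)=\id_{\ev(q_0)}$ has width $|\ev(q_0)|\le\dim(\hda{H})$, is immediate and worth a sentence for completeness.
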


\subsection{ST-automata} 
\label{sse:st-a}

We now introduce ST-automata and the translation from HDAs to these structures.
Recall that $\Omega=\St\cup \Te$ denotes the set of starters and terminators over $\Sigma$.

\begin{definition}
  \label{de:staut}
  An \emph{ST-automaton} is a structure $A=(Q, E, I, F, \lambda)$
  consisting of sets $Q$, $E\subseteq Q\times \Omega\times Q$, $I, F\subseteq Q$,
  and a function $\lambda: Q\to \square$ such that
  for all $(q, \ilo{S}{U}{T}, r)\in E$, $\lambda(q)=S$ and $\lambda(r)=T$.
\end{definition}

This is thus a plain automaton (finite or infinite) over $\Omega$,
with an additional labeling of states with conclists that is consistent with the labeling of edges.
(But note that the alphabet $\Omega$ is infinite.)

\begin{remark}
  \label{re:staomega}
  Equivalently, an ST-automaton may be defined as a directed multigraph $G$
  together with a graph morphism $\ev: G\to \bOmega$
  and initial and final states $I$ and $F$.
  This definition would be slightly more general than the one above,
  given that it allows for multiple edges with the same label between the same pair of states.
\end{remark}

A \emph{path} in an ST-automaton $A$ is defined as usual, as
an alternating sequence $\pi=(q_0, e_1, q_1,\dots, e_n, q_n)$
of states $q_i$ and transitions $e_i$ such that $e_i=(q_{i-1}, P_i, q_i)$ for  every $i=1,\dots, n$
and some sequence $P_1,\dots, P_n\in \Omega$.
The path is \emph{accepting} if $q_0\in I$ and $q_n\in F$.
The \emph{label} of $\pi$ as above is
$\ell(\pi)=[\id_{\lambda(q_0)} P_1 \id_{\lambda(q_1)}\dots P_n \id_{\lambda(q_n)}]_\sim$.
That is, to compute $\ell(\pi)$ we collect labels of states and transitions,
but then we map the so-constructed coherent word to its step sequence.

The \emph{language} of an ST-automaton $A$ is
\begin{equation*}
  \Lang(A) = \{\ell(\pi)\mid \pi \text{ accepting path in } A\} \subseteq \Cohsim.
\end{equation*}
Contrary to languages of HDAs, languages of ST-automata may not be closed under subsumption, see below.

\subsection{From HDAs to ST-automata}
\label{sse:hda-sta-translations}

We now define the translation from HDAs to ST-automata.
In order to relate it to their languages,
we extend the pair of functors $\Phi: \iPoms\leftrightarrows \Cohsim: \Psi$
to the power sets the usual way:
\begin{equation*}
  \Phi(\mcal{S}) = \{\Phi(P)\mid P\in \mcal{S}\}, \qquad
  \Psi(\mcal{S}) = \{\Psi(w)\mid w\in \mcal{S}\}.
\end{equation*}

To a given HDA $\hda{H}=(\cell{H}, \bot, \top)$ we associate an ST-automaton $\ST(\hda{H})=(Q, E, I, F, \lambda)$ as follows:
\begin{itemize}
\item $Q=\cell{H}$, $I=\bot$, $F=\top$, $\lambda=\ev$, and
\item $E=\{(\delta_A^0(q), \starter{\ev(q)}{A}, q)\mid A\subseteq \ev(q)\}
  \cup \{(q, \terminator{\ev(q)}{A}, \delta_A^1(q))\mid A\subseteq \ev(q)\}$.
\end{itemize}
That is, the transitions of $\ST(\hda{H})$ precisely mimic the starting and terminating of events in $\hda{H}$.
(Note that lower faces in $\hda{H}$ are inverted to get the starting transitions.)

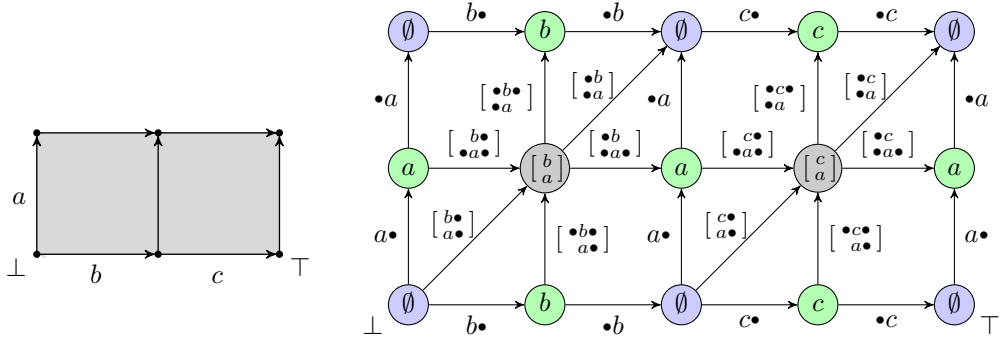
\begin{figure}[tbp]
	\centering
	\begin{tikzpicture}[x=.7cm, y=.62cm]
		\begin{scope}[shift={(0,0)}, x=.8cm, y=.8cm]
			\filldraw[color=black!15] (0,2)--(0,4)--(4,4)--(4,2)--(0,2);
			\filldraw (0,2) circle (0.05);
			\filldraw (2,2) circle (0.05);
			\filldraw (0,4) circle (0.05);
			\filldraw (4,2) circle (0.05);
			\filldraw (4,4) circle (0.05);
			\filldraw (2,4) circle (0.05);
			\path[line width=.5] (0,2) edge node[left, black] {$\vphantom{b}a$} (0,3.95);
			\path[line width=.5] (0,2) edge node[pos=.5, below, black] {$b$} (1.95,2);
			\path[line width=.5] (2,2) edge node[pos=.5, below, black] {$\vphantom{bg}c$} (3.95,2);
			\path[line width=.5] (2,4) edge (3.95,4);
			\path[line width=.5] (0,4) edge   (1.95,4);
			\path[line width=.5] (2,2) edge (2,3.95);
			\path[line width=.5] (4,2) edge (4,3.95);
			\node[left] at (0,1.8) {$\bot$};
			\node[right] at (4,1.8) {$\top$};
			
		\end{scope}
		\begin{scope}[shift={(7,-4.3)}, x=0.9cm, y=0.9cm]
			\node[circle,draw=black,fill=blue!20,inner sep=0pt,minimum size=15pt]
			(ac) at (0,4) {$\vphantom{hy}\emptyset$};
			\node[circle,draw=black,fill=blue!20,inner sep=0pt,minimum size=15pt]
			(cc) at (4,4) {$\vphantom{hy}\emptyset$};
			\node[circle,draw=black,fill=blue!20,inner sep=0pt,minimum size=15pt]
			(ae) at (0,8) {$\vphantom{hy}\emptyset$};
			\node[circle,draw=black,fill=blue!20,inner sep=0pt,minimum size=15pt]
			(ec) at (8,4) {$\vphantom{hy}\emptyset$};
			\node[circle,draw=black,fill=blue!20,inner sep=0pt,minimum size=15pt]
			(ce) at (4,8) {$\vphantom{hy}\emptyset$};
			\node[circle,draw=black,fill=blue!20,inner sep=0pt,minimum size=15pt]
			(ee) at (8,8) {$\vphantom{hy}\emptyset$};
			\node[circle,draw=black,fill=green!30,inner sep=0pt,minimum size=15pt]
			(bc) at (2,4) {$\vphantom{hy}b$};
			\node[circle,draw=black,fill=green!30,inner sep=0pt,minimum size=15pt]
			(ad) at (0,6) {$\vphantom{hy}a$};
			\node[circle,draw=black,fill=green!30,inner sep=0pt,minimum size=15pt]
			(be) at (2,8) {$\vphantom{hy}b$};
			\node[circle,draw=black,fill=green!30,inner sep=0pt,minimum size=15pt]
			(cd) at (4,6) {$\vphantom{hy}a$};
			\node[circle,draw=black,fill=green!30,inner sep=0pt,minimum size=15pt]
			(de) at (6,8) {$\vphantom{hy}c$};
			\node[circle,draw=black,fill=green!30,inner sep=0pt,minimum size=15pt]
			(dc) at (6,4) {$\vphantom{hy}c$};
			\node[circle,draw=black,fill=green!30,inner sep=0pt,minimum size=15pt]
			(ed) at (8,6) {$\vphantom{hy}a$};
			\node[circle,draw=black,fill=black!20,inner sep=0pt,minimum size=15pt]
			(bd) at (2,6) {$\vphantom{hy}\loset{b \\ a}$};
			\node[circle,draw=black,fill=black!20,inner sep=0pt,minimum size=15pt]
			(dd) at (6,6) {$\vphantom{hy}\loset{c \\ a}$};
			\path (ac) edge node[below] {$b\ibullet$} (bc);
			\path (bc) edge node[below] {$\ibullet b$} (cc);
			\path (ac) edge node[left] {$a \ibullet$} (ad);
			\path (ad) edge node[left] {$\ibullet a$} (ae);
			\path (bc) edge node[right] {$\loset{\ibullet b \ibullet \\ \hphantom{\ibullet} a \ibullet }$} (bd);
			\path (bd) edge node[left] {$\loset{\ibullet b \ibullet \\ \ibullet a \hphantom{\ibullet}}$} (be);
			\path (cc) edge node[below] {$c\ibullet$} (dc);
			\path (dc) edge node[below] {$\ibullet c$} (ec);
			\path (ad) edge node[above] {$\loset{\hphantom{\ibullet} b \ibullet \\ \ibullet a \ibullet}$} (bd);
			\path (bd) edge node[above] {$\loset{\ibullet b \hphantom{\ibullet}\\ \ibullet a \ibullet}$} (cd);
			\path (ae) edge node[above] {$b \ibullet$} (be);
			\path (be) edge node[above] {$\ibullet b$} (ce);
			\path (bd) edge node[above left=-0.15cm] {$\loset{\ibullet b \\ \ibullet a}$} (ce);
			\path (ac) edge node[above left=-0.15cm] {$\loset{b \ibullet \\ a \ibullet}$} (bd);
			\path (cd) edge node[above] {$\loset{\hphantom{\ibullet}c \ibullet \\ \ibullet a \ibullet}$} (dd);
			\path (dd) edge node[above] {$\loset{\ibullet c \hphantom{\ibullet}\\ \ibullet a \ibullet}$} (ed);
			\path (ce) edge node[above] {$c \ibullet $} (de);
			\path (de) edge node[above] {$\ibullet c$} (ee);
			\path (cc) edge node[left] {$a \ibullet$} (cd);
			\path (cd) edge node[left] {$\ibullet a$} (ce);
			\path (dc) edge node[right] {$\loset{\ibullet c \ibullet \\ \hphantom{\ibullet} a \ibullet}$} (dd);
			\path (dd) edge node[left] {$\loset{\ibullet c \ibullet \\   \ibullet a \hphantom{\ibullet}}$} (de);
			\path (ec) edge node[right]{$a \ibullet $} (ed);
			\path (ed) edge node[right] {$\ibullet a$} (ee);
			\path (dd) edge node[above left=-0.15cm] {$\loset{ \ibullet c \\ \ibullet a }$} (ee);
			\path (cc) edge node[above left=-0.15cm] {$\loset{c \ibullet \\ a \ibullet}$} (dd);
			\node[below left] at (ac) {$\bot\;\;$};
			\node[below right] at (ec) {$\;\;\top$};
		\end{scope}
	\end{tikzpicture}
	\caption{Two-dimensional HDA $\hda{H}$ (left) and corresponding ST-automaton $\ST(\hda{H})$ (right).}
	\label{fi:hdast}
\end{figure}

\begin{example}
  Figure \ref{fi:hdast} shows an HDA $\hda{H}$ with $\Lang(\hda{H})=\{bc\}$
  together with its translation to an ST-automaton $\ST(\hda{H})$.
\end{example}

\begin{theorem}
  \label{th:HDA-STA}
  For any HDA $\hda{H}$, $\Lang(\ST(\hda{H}))=\Phi(\Lang(\hda{H}))$.
\end{theorem}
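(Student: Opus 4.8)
The plan is to build an explicit step-by-step correspondence between paths in $\hda{H}$ and paths in $\ST(\hda{H})$ which preserves acceptance, and then to observe that under this correspondence the event ipomset of a path in $\hda{H}$ is sent by $\Phi$ to the label of the associated path in $\ST(\hda{H})$.

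First I would note that the two families of edges of $\ST(\hda{H})$ are exactly the translations of the two kinds of elementary path steps of $\hda{H}$: an upstep $\delta^0_A(q)\arrO{A}q$ corresponds to the edge $(\delta^0_A(q),\starter{\ev(q)}{A},q)$, and a downstep $q\arrI{B}\delta^1_B(q)$ corresponds to $(q,\terminator{\ev(q)}{B},\delta^1_B(q))$. Composing step by step, every path $\alpha=(q_0,\phi_1,q_1,\dots,\phi_n,q_n)$ in $\hda{H}$ yields a path $\pi_\alpha$ in $\ST(\hda{H})$ on the same sequence of cells/states, and conversely every path of $\ST(\hda{H})$ arises this way, by reading off to which family each of its edges belongs. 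Since $I=\bot$ and $F=\top$, $\alpha$ is accepting iff $\pi_\alpha$ is, so $\alpha\mapsto\pi_\alpha$ is a surjection from accepting paths of $\hda{H}$ onto accepting paths of $\ST(\hda{H})$ (not quite injective, but only because the degenerate ``do‑nothing'' steps $q\arrO{\emptyset}q$ and $q\arrI{\emptyset}q$ both map to the same loop edge — this is harmless since they have the same event ipomset).

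Next I would compute both labels. Writing $\alpha=\alpha_1*\dots*\alpha_n$ as the concatenation of its elementary steps, the recursive definition of $\ev$ gives $\ev(\alpha)=\ev(\alpha_1)*\dots*\ev(\alpha_n)$, with $\ev(\alpha_i)=\starter{\ev(q_i)}{A_i}$ for an upstep and $\ev(\alpha_i)=\terminator{\ev(q_{i-1})}{B_i}$ for a downstep. By construction the $i$-th edge of $\pi_\alpha$ carries precisely the label $P_i:=\ev(\alpha_i)$, so $\ell(\pi_\alpha)=[\id_{\lambda(q_0)}P_1\id_{\lambda(q_1)}\dots P_n\id_{\lambda(q_n)}]_\sim=[P_1\dots P_n]_\sim$, the state labels $\id_{\lambda(q_i)}$ being absorbed using the relations~\eqref{eq:sim-stepseq} (and the degenerate case $n=0$ gives $\ell(\pi_\alpha)=\id_{\ev(q_0)}=\Phi(\ev(\alpha))$ directly). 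Here one must check that $P_1\dots P_n$ is coherent: for an upstep, $T_{P_i}=\ev(q_i)$ and $S_{P_i}=\ev(q_i)\setminus A_i=\ev(\delta^0_{A_i}(q_i))=\ev(q_{i-1})$, and dually for a downstep $S_{P_i}=\ev(q_{i-1})$ and $T_{P_i}=\ev(q_{i-1})\setminus B_i=\ev(q_i)$; hence $T_{P_i}=\ev(q_i)=S_{P_{i+1}}$, which is exactly coherence (and also re-derives that $\ST(\hda{H})$ satisfies Definition~\ref{de:staut}). Now $\ev(\alpha)=P_1*\dots*P_n=\Psi([P_1\dots P_n]_\sim)$, so applying $\Phi$ and using $\Phi\circ\Psi=\Id_{\Cohsim}$ from Theorem~\ref{th:genipoms} yields $\Phi(\ev(\alpha))=[P_1\dots P_n]_\sim=\ell(\pi_\alpha)$.

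Finally, combining the two observations: $\Lang(\ST(\hda{H}))=\{\ell(\pi)\mid\pi\text{ accepting in }\ST(\hda{H})\}=\{\ell(\pi_\alpha)\mid\alpha\text{ accepting in }\hda{H}\}$ by surjectivity of $\alpha\mapsto\pi_\alpha$, and this equals $\{\Phi(\ev(\alpha))\mid\alpha\text{ accepting in }\hda{H}\}=\Phi(\Lang(\hda{H}))$. I do not expect a genuine obstacle here: the only slightly delicate points are the path translation (keeping track of the inversion of lower face maps, and the bookkeeping for identity labels and trivial paths; cf.\ also Remark~\ref{re:staomega}) and the coherence check above. Once those are in place, the identity $\ell(\pi_\alpha)=\Phi(\ev(\alpha))$ is immediate from the isomorphism $\Phi:\iPoms\leftrightarrows\Cohsim:\Psi$ already established.
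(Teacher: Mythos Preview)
Your proof is correct and follows essentially the same approach as the paper: both establish the step-by-step correspondence between paths of $\hda{H}$ and paths of $\ST(\hda{H})$ and then verify that $\Phi(\ev(\alpha))=\ell(\pi_\alpha)$ via the isomorphism of Theorem~\ref{th:genipoms}. Your explicit surjection $\alpha\mapsto\pi_\alpha$ is slightly more direct than the paper's two-inclusion argument (which routes one direction through sparse decompositions), but the underlying content is the same.
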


\begin{proof}
  For identities note that a path with a single cell $q$ is accepting in 
  $\hda{H}$ if and only if it is accepting in $\ST(\hda{H})$,
  and $\Phi(\id_{\ev(q)}) = [\id_{\lambda(q)}]_\sim$.
  Now let $w = P_1\dots P_m \in \Lang(\ST(\hda{H}))$ be a non-identity.
  By definition, there exists $\pi=(q_0, e_1, q_1,\dots, e_n, q_n)$
  where $e_i = (q_{i-1}, P'_i, q_i)$, $P'_i \in \Omega$
  such that $\id_{\lambda(q_{0})}P'_1 \id_{\lambda(q_1)} \dots P'_n \id_{\lambda(q_n)} \sim w$.
  This means that $P'_1* \dots *P'_n$ is a decomposition of some $P \in \Lang(\hda{H})$,
  hence $w \in \Phi(\Lang(\hda{H}))$.
		
  For the converse, let $w = P_1\dots P_m \in \Phi(\Lang(\hda{H}))$.
  Let $P'_1*\dots *P'_n$ be the sparse step decomposition of $P = P_1 * \dots * P_m$.
  We have $P'_1\dots P'_n \sim w$.
  In addition, there exists an accepting path $\alpha = \beta_1 * \dots * \beta_n$  in $\hda{H}$ such that $\ev(\beta_i) = P'_i$.
  By construction there exists an accepting path
  $\pi = (\src(\beta_1),e_1,\tgt(\beta_1),\dots,e_n,\tgt(\beta_n))$ in $\ST(\hda{H})$ where $e_i = (\src(\beta_i),P'_i,\tgt(\beta_i))$.
  We have $\ell(\pi) \sim w$. 
\end{proof}

\section{Monadic Second-Order Logic for HDAs}
\label{sec:MSO}

We now consider the isomorphism between ipomsets and step sequences from a logical perspective.
In this section, we define monadic second-order logic (MSO) over ipomsets and words over starters and terminators, and show that an MSO formula  over ipomsets can be turned into an MSO formula over $\Omega^*$, and vice versa, with equivalent languages up to~$\Psip$.
By combining this equivalence with various tools -- such as Kleene theorems, ST-automata, and others -- we obtain a Büchi-Elgot-\!Trakhtenbrot theorem relating the expressive power of MSO over ipomsets to that of HDAs.

More precisely, since $\Omega = \St \cup \Te$ -- the set of starters and terminators -- is infinite and in order to restrict to a finite alphabet, we actually work with MSO over words formed of \emph{width-bounded} starters and terminators, for some width bound fixed by the context.
This is not an issue, thanks to Lemma~\ref{lem:finitewidth}.
For $k\in \Nat$, we denote by $\iiPoms_{\le k}=\{P\in \iiPoms\mid \wid(P)\le k\}$ and, for $L\subseteq \iiPoms$, $L_{\le k}= L \cap \iiPoms_{\le k}$.
In particular, $\St_{\le k}= \St\cap \iiPoms_{\le k}$ and $\Te_{\le k}= \Te\cap \iiPoms_{\le k}$
denote the finite sets of starters and terminators of width at most $k$;
further,
$\Coh_{\le k} = \Coh \cap \Omega^*_{\le k}$.

\subsection{MSO}

Monadic second-order logic is an extension of first-order logic  allowing to quantify existentially  and universally 
over elements as well as subsets of the domain of the structure.
It uses second-order variables $X, Y, \dots$ interpreted as subsets of the domain in addition to the first-order variables $x, y, \dots$ interpreted as elements of the domain of the structure, and a new binary predicate $x \in X$ interpreted commonly.
We refer the reader to \cite{Thomas97a} for more details about MSO.
In this work, we interpret MSO over $\iiPoms$, referred to as $\MSOP$, and over words in $\Omegaswb^*$ for some fixed width bound $k$, referred to as $\MSOW$. 
In this section, we assume $\Sigma$ to be finite.


For $\MSOP$, we consider the signature $\Sigp = \{{<}, {\evord},$ $(a)_{a\in \Sigma},\srci,\tgti\}$ where $<$ and $\evord$ are 
binary relation symbols and the $a$'s, $\srci$ and $\tgti$ are unary predicates over first-order variables. 
We associate to every ipomset $(P, {<}, {\evord}, S, T, \lambda)$  the relational structure $(P ; {<}; {\evord}; (a)_{a \in \Sigma}; \srci; \tgti)$
where $<$ and $\evord$ are interpreted as the orderings $<$ and $\evord$ over $P$, and $a(x)$, $\srci(x)$ and $\tgti(x)$ hold respectively if $\lambda(x) = a$, $x \in S$ and $x \in T$.
The well-formed $\MSOP$ formulas are built using the following grammar: 
\begin{align*}
  \psi ::={} &a(x), a \in \Sigma ~\vert ~ \srci(x) ~\vert~ \tgti(x)  ~\vert~  x < y ~ \vert ~ x \evord y ~\vert~ x \in X ~\vert~ 
  \exists x.\, \psi ~\vert~ 
   \exists X.\, \psi 
   ~\vert~ \psi_1 \vee \psi_2~\vert~ \neg \psi
\end{align*}

In order to shorten formulas  we use several notations and shortcuts such as
$\psi_1 \lor \psi_2$, $\psi_1\implies\psi_2$, $\forall x.\psi$ or $\forall X.\psi$
which are defined as usual, and the equality predicate $x = y$ as
$\lnot (x < y) \land \lnot (y > x) \land \lnot (x \evord y) \land \lnot (y \evord x)$.
We also define the direct successor relation as
\begin{equation*}
  x \psucc y\eqdef x < y \wedge \neg(\exists z. x < z < y).
\end{equation*}

Let $\psi(x_1,\dots,x_n,X_1,\dots,X_m)$  be an $\MSOP$ formula  whose free variables are $x_1,\dots,x_n,$ $X_1,\dots,X_m$ and let $P \in \iiPoms$. 
A pair of functions $\nu = (\nu_1,\nu_2)$, where $\nu_1 \colon \{x_1,\dots,x_n\} \to P$ and $\nu_2 \colon \{X_1,\dots,X_m\} \to 2^P$, is called a \emph{valuation} or an \emph{interpretation}.
We write $P \models_\nu \psi$ or, by a slight abuse of notation, $P\models \psi\big(\nu(x_1),\dots,\nu(x_n),\nu(X_1),\dots,\nu(X_m)\big)$,  if $\psi$ holds when $x_i$ and $X_j$ are interpreted as $\nu(x_i)$ and $\nu(X_j)$.

We say that a relation $R \subseteq 
P^n \times (2^P)^m$ is \emph{$\MSOP$-definable} if there exists an $\MSOP$ formula	$\psi(x_1 ,\dots , x_n , X_1 , \dots, X_m )$ which is satisfied  if and only if the interpretation of the free variables $(x_1 ,\dots , x_n , X_1 , \dots, X_m)$ is a tuple of $R$.
A \emph{sentence} is a formula without free variables.
In this case no valuation is needed.
Given an $\MSOP$ sentence $\psi$,	we define $L(\psi) = \{ P \in \iiPoms \mid P \models \psi\}$ and $L(\psi)_{\le k} = L(\psi) \cap \iiPoms_{\le k}$.
A set
$L\subseteq \iiPoms$ is $\MSOP$-definable if and only if there exists an
$\MSOP$ sentence $\psi$ over $\mathcal{S}$ such that $L = L(\psi)$. 

\begin{example}\label{ex:notwb}
	Let $\phi = \exists x\, \exists y.\, a(x) \wedge b(y) \wedge \lnot (x < y) \wedge \lnot (y < x)$.
	That is, there are at least two concurrent events, one labelled $a$ and the other $b$.
	$L(\phi)$
	is not width-bounded, as $\phi$ is satisfied, among others, by any conclist which contains at least one $a$ and one $b$.
	Nor is it closed under subsumption, given that $\loset{a\\b}\models \phi$ but $a b, b a\not\models \phi$.
	Note, however, that the width-bounded closure $L(\phi)_{\le k}\down$ \emph{is} a regular language for any $k$.
\end{example}

For $\MSOW$, we consider the signature $\Sigw = \{{<}, (\discrete)_{\discrete \in \Omegaswb}\}$ where $<$ is a binary relation symbol and $D$ (for discrete ipomset) is an unary predicate over first-order variables.
Note that, for a fixed $k$, $\Omegaswb$ is finite.
A word  $w$ of $\Omegaswb^*$ can be seen as $(W, {<}, \lambda\colon W \to \Omegaswb)$:
a totally ordered finite set $W$ labelled by $\Omegaswb$.
Its relational structure is $(W; {<}; (D)_{\discrete \in \Omegaswb})$.
First-order variables range over $W$ and second-order variables over $2^W$, $<$ is interpreted as the ordering $<$ over $W$ and $D(x)$ holds if $\lambda(x) = D$.
The well-formed $\MSOW$ formulas for a fixed width bound $k$ (understood from context) are built using the following grammar: 
\begin{align*}
  \psi ::={} &D(x), D \in \Omegaswb ~\vert ~  x < y  ~\vert~ x \in X  ~\vert~ 
  \exists x.\, \psi 
  ~\vert~ \exists X.\, \psi 
  ~\vert~ \psi_1 \vee \psi_2~\vert~ \neg \psi
\end{align*}

Interpretation, definability and satisfaction in $\MSOW$ are defined analogously to $\MSOP$.
Recall that we interpret $\MSOW$ over $\Omegaswb^*$, that is words over starters and terminators of width bounded by $k$.
Thus, given an $\MSOW$ sentence $\psi$, $L(\psi) = \{ w \in \Omegaswb^* \mid w \models \psi\}$.

\begin{example}
Given a word $w = P_1 \ldots P_n \in \Omegaswb^\ast$, the following formula 
is satisfied precisely by $w$:
\begin{equation*}
	\exists y_1,\dots,y_n. \bigwedge_{1 \le i \le n} P_i(y_i) \land 
	\bigwedge_{1 \le i < n} y_i \psucc y_{i+1} 
	\land \forall y. \bigvee_{1 \le i \le n} y = y_i\
\end{equation*}
\end{example}

Note that $\MSOW$ sentences may be satisfied by non-coherent words.
We say that an $\MSOW$ sentence $\phi$ is \emph{$\sim$-invariant}
if for all words $w,w' \in \Coh$ such that $w \sim w'$, we have
$w \models \phi$ if and only if $w' \models \phi$.
Then, we define $\widetildeL(\phi) = \{[s]_\sim \in \Cohsim \mid s \models \phi\}$.

In this section, we show that there are effective translations between $\MSOP$
and $\MSOW$.
Recall that the mapping $\bar\Phi \colon \iiPoms \to \Coh$ maps an ipomset $P$ to its sparse step decomposition $\Phip(P)$
and that $\SCoh = \Phip(\iiPoms)$ denotes the set of all \emph{sparse} coherent words over $\Omega$.
The next two subsections are devoted to the proof of the following theorem:
\begin{theorem}
	\label{th:MSOeq}
	\begin{enumerate}
		\item For every sentence $\phi \in \MSOP$ and every $k$, there exists
		a sentence $\widehat \phi \in \MSOW$ over $\Omegaswb^*$ such that
		$L(\widehat \phi) = \Psip^{-1}(L(\phi)_{\le k})$.
		\item For every $k$ and every  sentence $\phi \in \MSOW$ over $\Omegaswb^*$,
		there exists a sentence $\overline \phi \in \MSOP$
		such that $L(\overline \phi) = \Psip(L(\phi) \cap \SCoh)$.
	\end{enumerate}
\end{theorem}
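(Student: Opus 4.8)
The plan is to establish both directions by constructing explicit MSO interpretations between the two kinds of relational structures. For part 1, given an ipomset $P$ of width at most $k$, its sparse step decomposition $\Phip(P) = P_1 \dots P_n \in \SCoh$ is a word over the finite alphabet $\Omegaswb$; the underlying domain $W$ of this word has one element per starter/terminator, which by the analysis of Section~\ref{se:step-subsumptions} (the functions $\xst_w$, $\xen_w$, $\phi_w$) corresponds to ``half-events'' of $P$. More precisely, each event $p \in P$ that is neither a source nor a target contributes exactly two positions of $W$ (its start and its end), source events contribute only an end-position, target events only a start-position. So I would set up an MSO interpretation in the sense of a transduction: represent an event $p$ of $P$ by a \emph{pair} of positions (or a single position when $p$ is an interface event), encode the relations $<_P$, $\evord_P$, the labels $a(\cdot)$, and $\srci,\tgti$ by $\MSOW$ formulas over $W$, and then, given the $\MSOP$ sentence $\phi$, mechanically rewrite it into $\widehat\phi$ by relativising quantifiers to the encoded domain and replacing atoms by their $\MSOW$-definitions. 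One must also add a conjunct to $\widehat\phi$ expressing ``the input word is a sparse coherent word'' (coherence is a local condition on consecutive letters; sparseness too), so that $L(\widehat\phi) \cap \SCoh = \Phip(L(\phi)_{\le k})$, and then observe that $\Psip^{-1}(L(\phi)_{\le k}) = L(\widehat\phi)$ because $\Psip$ identifies all coherent words in a $\sim$-class with the unique sparse one and every width-$\le k$ ipomset has a width-$\le k$ sparse decomposition.

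For part 2, the direction is the reverse interpretation: given a sparse coherent word $w = P_1 \dots P_n \in \SCoh$, reconstruct $\Psip(w) = P_1 * \dots * P_n \in \iiPoms$ as a relational structure definable inside $P := \Psip(w)$. The key point is that the positions of $w$ can be recovered from $P$: by Lemma~\ref{le:ipomsparse} the sparse decomposition of $P$ is unique, so the sequence of starters/terminators is canonically determined by the interval order $<_P$ together with $\evord_P$; each $\MSOW$-atom $D(x)$ (for $D \in \Omegaswb$) and the order $x < y$ on positions must be expressed by $\MSOP$-formulas. Here I would use the standard fact that the sparse step decomposition is obtained by a greedy scan — at each stage one either starts all currently-startable events or terminates all currently-terminable ones — and encode ``the $i$-th letter'' in terms of the level sets of the interval representation $(\xst,\xen)$, which are themselves MSO-definable from $<_P$. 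Then transform $\phi \in \MSOW$ into $\overline\phi \in \MSOP$ by the analogous relativisation-and-substitution, so that $P \models \overline\phi$ iff $\Phip(P) \models \phi$; since $\Psip$ restricted to $\SCoh$ is a bijection onto $\iiPoms$ with inverse $\Phip$, this yields $L(\overline\phi) = \Psip(L(\phi) \cap \SCoh)$.

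The main obstacle, and where most of the work lies, is the \emph{forward} MSO interpretation: correctly encoding an ipomset as a labelled word and verifying that each atomic predicate of $\Sigp$ becomes $\MSOW$-definable. In particular, to decide $p <_P q$ from the word $w = \Phip(P)$ one must express ``the end-position of $p$ precedes the start-position of $q$ in $W$'', which requires first MSO-defining, for each position of $W$, \emph{which event it starts or terminates} and \emph{whether that event is a source/target} — this is essentially internalising the functions $\xst_w, \xen_w, \phi_w$ in MSO, together with a definition of the conclist $\ev$ active at each position. The labels $a(x)$ and the predicate $x \evord y$ are then read off from the single letter $P_i \in \Omegaswb$ sitting at the relevant position and its neighbours. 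Once this encoding is in place the rest is the routine syntactic compilation of one logic into another via an interpretation, together with the two bookkeeping observations about $\Psip$ and width that make the set equalities $L(\widehat\phi) = \Psip^{-1}(L(\phi)_{\le k})$ and $L(\overline\phi) = \Psip(L(\phi)\cap\SCoh)$ come out exactly. I expect the two subsections announced after the theorem statement to carry out precisely these two interpretations in detail.
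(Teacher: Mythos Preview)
Your high-level plan—two MSO interpretations, one of ipomsets inside words and one of words inside ipomsets—is exactly what the paper does, and your Part~2 sketch matches the paper's approach (positions of $\Phip(P)$ are encoded as pairs $(e,b)$ with $e\in P$ and $b\in\{\xst,\xen\}$, and the order and letter predicates are defined via Lemmas~\ref{lem:compare} and~\ref{lem:StTe(x)Def}).

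However, your Part~1 encoding has a concrete bug. You propose to represent an event $p$ of $P$ by the pair of positions $(\xst_w(p),\xen_w(p))$ in the \emph{sparse} word $w=\Phip(P)$, citing the functions $\xst_w,\xen_w,\phi_w$ of Section~\ref{se:step-subsumptions}. But those functions are defined on \emph{dense} words, where each position starts or terminates exactly one event. In a sparse (or general coherent) word a single starter may start several events at once: for instance, the ipomset $\loset{a\\b}$ has sparse decomposition $\loset{a\ibullet\\b\ibullet}\,\loset{\ibullet a\\\ibullet b}$, so both events have start-position $1$ and end-position $2$ and your encoding collapses them. The paper's interpretation instead represents an event by a pair $(\ell,i)$ with $\ell$ a position in $w$ and $i\in\{1,\dotsc,k\}$ an index into the conclist at that position (Lemma~\ref{lem:baserel}); the function $\evt(\ell,i)$ picks out the $i$th event of $P_\ell$, and the equality $(x,i)\newSim(y,j)$ is defined as a transitive closure of a local ``same event across the gluing'' relation.

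This bug propagates to your set equality. If $\widehat\phi$ contains a conjunct ``the input is sparse'', then $L(\widehat\phi)\subseteq\SCoh$, whereas $\Psip^{-1}(L(\phi)_{\le k})$ contains \emph{all} coherent words in $\Omegaswb^*$ mapping to $L(\phi)_{\le k}$, not just the sparse ones; your sentence ``$\Psip$ identifies all coherent words in a $\sim$-class with the unique sparse one'' does not rescue this, since the preimage under $\Psip$ is the whole class. The paper avoids this by making the translation work uniformly for \emph{arbitrary} coherent words (the conjunct $\texttt{Coh}_k$ checks coherence only, not sparseness), and then separately handling occurrences of $\id_\emptyset$ via a shuffle-closure argument.
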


In addition, the constructions are effective. 
By passing through ST-automata and classical automata, we obtain the following corollary.

\begin{corollary}
  \label{cor:MSOdef}
  Let $L \subseteq \iiPoms_{\le k}$.
  The following are equivalent:
  \begin{enumerate}
  \item $L$ is $\MSOP$-definable;
  \item $\Phip(L)$ is $\MSOW$-definable;
  \item $\Phi(L)$ is $\MSOW$-definable.
  \end{enumerate}
\end{corollary}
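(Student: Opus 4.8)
The plan is to deduce Corollary~\ref{cor:MSOdef} from Theorem~\ref{th:MSOeq} plus a handful of elementary observations, by establishing the cycle $(1)\Rightarrow(3)\Rightarrow(2)\Rightarrow(1)$. The actual content is bookkeeping relating the three ambient domains in play: ipomsets of width $\le k$, sparse coherent words over $\Omegaswb$, and step sequences in $\Cohsim$.

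First I would set up the dictionary. For $L\subseteq\iiPoms_{\le k}$ put $\Psip^{-1}(L)=\{w\in\Coh\mid\Psip(w)\in L\}$; since the width of a gluing of starters and terminators equals the maximum of the widths of its factors, every such $w$ already lies in $\Coh_{\le k}$, and $\Psip$ restricts to a surjection $\Coh_{\le k}\to\iiPoms_{\le k}$. Using that $\Psi\circ\Phi$ and $\Phi\circ\Psi$ are identities (Theorem~\ref{th:genipoms}), one checks that $[w]_\sim\in\Phi(L)$ iff $\Psip(w)\in L$; hence $\Psip^{-1}(L)$ is exactly the union of the $\sim$-classes constituting $\Phi(L)$, and it is $\sim$-saturated. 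Consequently ``$\Phi(L)$ is $\MSOW$-definable'' means precisely that the word language $\Psip^{-1}(L)$ over $\Omegaswb$ is $\MSOW$-definable; here one uses that coherence (``$T_{P_i}=S_{P_{i+1}}$ for consecutive letters'') and sparseness (``consecutive proper starters, resp.\ terminators, alternate'') are conditions on pairs of consecutive letters, hence $\MSOW$-definable, as $\Omegaswb$ is finite. Finally, by Lemma~\ref{le:sparseuniq} together with the same identities, the sparse words in $\Psip^{-1}(L)$ are precisely the $\Phip(P)$ for $P\in L$, so $\Psip^{-1}(L)\cap\SCoh=\Phip(L)$ and $\Psip(\Phip(L))=L$.

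Then the three implications are one line each. $(1)\Rightarrow(3)$: if $L=L(\phi)$ for $\phi\in\MSOP$, then $L\subseteq\iiPoms_{\le k}$ forces $L(\phi)_{\le k}=L$, and Theorem~\ref{th:MSOeq}(1) produces $\widehat\phi\in\MSOW$ with $L(\widehat\phi)=\Psip^{-1}(L(\phi)_{\le k})=\Psip^{-1}(L)$, i.e.\ $\widehat\phi$ defines $\Phi(L)$. $(3)\Rightarrow(2)$: if $\phi\in\MSOW$ defines $\Psip^{-1}(L)$ and $\sigma\in\MSOW$ defines sparseness, then $\phi\wedge\sigma$ defines $\Psip^{-1}(L)\cap\SCoh=\Phip(L)$. $(2)\Rightarrow(1)$: if $\phi\in\MSOW$ defines $\Phip(L)$, then $L(\phi)=\Phip(L)\subseteq\SCoh$, hence $L(\phi)\cap\SCoh=\Phip(L)$, and Theorem~\ref{th:MSOeq}(2) produces $\overline\phi\in\MSOP$ with $L(\overline\phi)=\Psip(L(\phi)\cap\SCoh)=\Psip(\Phip(L))=L$. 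Effectiveness of all constructions is inherited from the effective translations of Theorem~\ref{th:MSOeq}.

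The main obstacle is conceptual rather than computational: being precise about what the three assertions mean — in particular that $\Phi(L)\subseteq\Cohsim$, so its $\MSOW$-definability must be routed through the $\sim$-saturated word language $\Psip^{-1}(L)$ — and then verifying the dictionary above without slips. Two care points deserve attention: an $\MSOW$ sentence may also hold on incoherent words, so in the implications producing a formula for $(2)$ or $(3)$ one should conjoin the coherence formula; and the width bound is indispensable, as it is what makes $\Omegaswb$ finite (hence coherence and sparseness $\MSOW$-definable) and what keeps $\Psip$ surjective onto $\iiPoms_{\le k}$. If one prefers, $(2)\Rightarrow(3)$ can alternatively be obtained without Theorem~\ref{th:MSOeq} by going through automata: $\Phip(L)$ is then regular (classical B\"uchi-Elgot-Trakhtenbrot), the sparsification map $w\mapsto$ (the unique sparse word $\sim$-equivalent to $w$) is a rational function over the finite alphabet $\Omegaswb$ (merge maximal blocks of consecutive starters, resp.\ terminators, into their gluings and delete identity letters), and $\Psip^{-1}(L)$ is the preimage of $\Phip(L)$ under this map, hence regular, hence $\MSOW$-definable.
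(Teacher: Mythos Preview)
Your argument is correct. The paper does not give a detailed proof of this corollary; it only positions it as a consequence of Theorem~\ref{th:MSOeq}, adding the hint ``by passing through ST-automata and classical automata''. Your cycle $(1)\Rightarrow(3)\Rightarrow(2)\Rightarrow(1)$ derives the corollary directly from Theorem~\ref{th:MSOeq} together with the $\MSOW$-definability of sparseness and coherence over the finite alphabet $\Omegaswb$, and your dictionary (in particular, reading ``$\Phi(L)$ is $\MSOW$-definable'' via the $\sim$-saturated word language $\Psip^{-1}(L)$, consistently with the paper's $\widetildeL$) is exactly the right way to make the statement precise. The only place automata would naturally enter is the direction $(2)\Rightarrow(3)$, which your chosen cycle sidesteps; your optional remark about obtaining that direction via regularity and the rational sparsification map is presumably what the paper's hint alludes to. So your approach is essentially the intended one, arguably tidier in that it avoids the automata detour altogether.
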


\begin{remark}
    The reader familiar with MSO transductions (see e.g.\ \cite{DBLP:books/daglib/0030804}) may notice that the next two subsections essentially show that $\Psip$ and  $\Phip$ can be defined through MSO transductions (see in particular Lemmas~\ref{lem:baserel}, \ref{lem:compare} and \ref{lem:StTe(x)Def}).
\end{remark}

\subsection{From $\iiPoms_{\le k}$ to $\Coh_{\le k}$}

Here, we prove the first item of Theorem~\ref{th:MSOeq}.
For now, we restrict ourselves to words without any occurrence of the empty 
ipomset $\id_{\emptyset}$.
Our goal is to prove the following:

\begin{lemma}
	\label{lem:msodec}
	For every $\phi \in \MSOP$ and every $k$, there exists 
	$\widehat \phi \in \MSOW$ such that for all
	$w \in (\Omegaswb \setminus \{\id_\emptyset\})^+ $, we have
	$w \models \widehat \phi$ if and only if $w \in \Coh$ and 
	$\Psip(w) \models \phi$.
\end{lemma}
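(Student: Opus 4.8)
The plan is to build $\widehat\phi$ from $\phi$ by an MSO interpretation (transduction) that, given a coherent word $w = P_1\dots P_n \in (\Omega_{\le k}\setminus\{\id_\emptyset\})^+$, reconstructs inside $w$ the relational structure of the ipomset $P = \Psip(w) = P_1 * \dots * P_n$. The key observation is that, by Lemma~\ref{le:isoeqglu} applied iteratively, the underlying set of $P$ can be identified with a subset of the "event occurrences" visible along $w$: each event of $P$ is born at some starter position (or is in the source interface, i.e.\ lives from position $1$) and dies at some terminator position (or is in the target interface). So the domain of $P$ should be encoded by a definable set of positions of $w$ — concretely, I would represent each event of $P$ by the \emph{position at which it is first active}, i.e.\ the first position $i$ such that the event belongs to the conclist $S_{P_i}$ or $T_{P_i}$ and (if $i>1$) was not already active at position $i-1$.

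First I would write an $\MSOW$ formula $\mathrm{Coh}$ asserting that $w$ is coherent: using the predicates $D(x)$ for $D \in \Omega_{\le k}$ together with the successor relation $x \psucc y$ (definable from $<$), require that for consecutive positions labelled $P_i, P_{i+1}$ we have $T_{P_i} = S_{P_{i+1}}$ — a finite disjunction over the finitely many pairs in $\Omega_{\le k}$. Next, the heart of the construction: since an event of $P$ is active over a contiguous block of positions of $w$, but \emph{which} track of the conclist it occupies can shift as other events start and terminate around it, I need to define, for each pair of positions $i \le j$, a binary relation "the $r$-th track of $\mathrm{conclist}$ at position $i$ is the same event as the $s$-th track at position $j$". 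Because widths are bounded by $k$, there are only finitely many track indices, so I can introduce $O(k)$ auxiliary set variables $X_1,\dots,X_k$ (or a single tuple) encoding, for each position, which track carries a fixed event — and quantify over them. Tracking a single event from its birth to its death is an MSO-definable "threading" constraint: at a starter $\starter{U}{A}$ one track is new and the others are inherited in event order; at a terminator $\terminator{U}{B}$ one track disappears and the rest are inherited. From this threading relation one recovers: the domain of $P$ (= birth positions), $\lambda_P$ (= the label on that event's track), $S_P$ and $T_P$ (= born at $1$ resp.\ never dying), and crucially the precedence order $x <_P y$ (= $x$'s death position is strictly before $y$'s birth position) and the event order $x \evord_P y$ (= $x,y$ coexist at some position and $x$'s track precedes $y$'s track there). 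All of these are first-order-plus-bounded-set-quantifier expressible.

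Given these building blocks, I would define the interpretation by relativising: the universe becomes $\{x \mid \mathrm{born}(x)\}$, and each atomic $\MSOP$ predicate is replaced by its $\MSOW$ definition above; first-order quantifiers $\exists x.\psi$ become $\exists x.(\mathrm{born}(x) \wedge \psi^\ast)$, and set quantifiers $\exists X.\psi$ become $\exists X.(X \subseteq \{x \mid \mathrm{born}(x)\} \wedge \psi^\ast)$. Then $\widehat\phi \eqdef \mathrm{Coh} \wedge \phi^\ast$, where $\phi^\ast$ is the translation of $\phi$ under this interpretation (existentially prefixed by the auxiliary threading-set variables). By construction, for coherent $w$ the structure interpreted in $w$ is isomorphic to $\Psip(w)$ (here I invoke Lemma~\ref{le:UniqueIso} so that this "the" ipomset is well-defined up to unique isomorphism), hence $w \models \widehat\phi \iff w \in \Coh \wedge \Psip(w) \models \phi$; for non-coherent $w$, $\mathrm{Coh}$ fails and $w \not\models \widehat\phi$, matching the statement.

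\textbf{The main obstacle} I expect is making the "threading" of events across positions precise and verifiably MSO: one must check that the constraints linking the conclist label at position $i$ to the one at $i+1$ (which track is new, which disappears, and that the surviving tracks preserve their relative event order) genuinely pin down, for each event, a unique path of track-indices through $w$, and that this is expressible with only finitely many (width-bounded-many) set variables rather than unbounded quantification. This is exactly where the hypothesis $\wid \le k$ is used, and where the bookkeeping — though routine in spirit — is most delicate; Lemma~\ref{le:isoeqglu} is the tool that guarantees the reconstructed relations $<_R, \evord_R, S_R, T_R$ really are those of the iterated gluing. The restriction to $w$ without occurrences of $\id_\emptyset$ keeps positions in bijection with "steps" and avoids degenerate empty conclists; the general case is handled afterwards (as the excerpt's phrasing suggests) by allowing and then ignoring such positions.
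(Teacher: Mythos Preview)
Your overall strategy---define an $\MSOW$ coherence predicate, then give an MSO interpretation of the ipomset $\Psip(w)$ inside the word $w$---is exactly the paper's approach. The paper likewise writes $\texttt{Coh}_k$ as a local successor condition over the finitely many compatible pairs in $\Omega_{\le k}^2$, and then builds the translation inductively via formulas that recover $<_P$, $\evord_P$, the labels, and the interfaces from $w$.

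There is, however, a concrete gap in your domain encoding. You propose to represent each event of $P$ by ``the position at which it is first active'' and then relativise quantifiers to $\{x \mid \mathrm{born}(x)\}$. But distinct events may share the same birth position: any non-elementary starter $\starter{U}{A}$ with $|A|\ge 2$ starts several events at once, and all events of $S_P$ are already active at position~$1$. A single position variable therefore cannot stand for an event; the track index is needed as well. Your threading discussion shows you are aware tracks matter, but the interpretation scheme you actually write down (``$\exists x.(\mathrm{born}(x)\wedge\psi^*)$'') discards them.

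The paper avoids this by working directly with \emph{pairs} $(x,i)$, where $x$ is a position and $i\in\{1,\dots,k\}$ a track index. It defines a ``same event'' equivalence $(x,i)\newSim(y,j)$ as an MSO transitive closure (via universal second-order quantification over $X_1,\dots,X_k$) of a local step-to-step matching $\glue i j (x,y)$, and then translates each atomic predicate using these pairs: $(x,\tau(x))<(y,\tau(y))$, $a(x,\tau(x))$, $\srci(x,\tau(x))$, etc., where $\tau$ records the track assigned to each free first-order variable. The quantifier $\exists x.\psi$ becomes the finite disjunction $\bigvee_{i\le k}\exists x.\,\phidom(x,i)\wedge\psi'_{\tau[x\mapsto i]}$, and $\exists X.\psi$ becomes $\exists X_1,\dots,X_k.\,\psi'_\tau$. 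No canonical representative is chosen and no outer existential block of ``threading sets'' is required: the relation $\newSim$ is defined once and for all by a single formula with two free first-order variables (and meta-level indices $i,j$). Your plan becomes correct if you adopt this pair-based encoding---equivalently, a $k$-copying MSO transduction---in place of bare birth positions.
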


Prior to proving the lemma, we introduce some notation.
We want  a word $P_1\dots P_n$ of $(\Omegaswb \setminus \{\id_\emptyset\})^+$ to satisfy $\widehat \phi$ if and only if the gluing composition $P = P_1 \comp \cdots \comp P_n$ is a model for $\phi$.
Thus $\widehat \phi$ must accept only coherent words. This is $\MSOW$-definable by:
\[
\texttt{Coh}_k  \eqdef \exists z\, \forall x\, \forall y.\, x \psucc y \implies \!\!\bigvee_{D_1 D_2 \in \Coh \cap \smash{\Omegaswb^2}}\!\! D_1(x) \land D_2(y).
\]
That is, the word is non-empty ($\exists z$) and discrete ipomsets of $\Omegaswb$ at consecutive positions $x$ and $y$ may be glued.

Hence,  $\widehat{\phi}$ will be the conjunction of $\texttt{Coh}_k$ and an $\MSOW$ formula $\phi'$ which we will build by induction on $\phi$.
To construct $\phi'$, the intuition is that, given an ipomset $P$ and a coherent word $w = P_1\dots P_n \in (\Omegaswb \setminus \{\id_\emptyset\})^+$ such that $\Psip(w) =P$, an event $e$ of $P$  may appear in several consecutive  $<$-positions $1 \leq \ell_1,\dots,\ell_m \leq n$ in $w$ and, in each $<$-position $\ell_j$, it occurs once in some $\evord$-position $i_j$.
The goal in $\phi'$ is to select some convenient interval of pairs $(\ell_j,i_j)$ when $\phi$ selects $e$.

More formally, let $w = P_1\dots P_n \in \Coh_{\le k}$ and $P= P_1 * \dots * P_n$.
Let $E = \{1,\ldots,n\} \times \{1,\ldots,k\}$.
Our construction is built on a partial function $\evt\colon E \to P$ defined as follows:
if $P_\ell$ consists of events $e_1 \evord \cdots \evord e_r$, then for every $i \le r$, $\evt(\ell,i) = e_i$.
Our first step towards proving Lemma~\ref{lem:msodec} is to show that all
atomic predicates $x = y$, $x < y$, $a(x)$, etc.\ of $\MSOP$
can be translated into formulas in $\MSOW$:

\begin{lemma}\label{lem:baserel}
	For every $k \in \Nat$ and $1 \le i,j \le k$, one can define
	$\MSOW$ formulas
	$\phidom(x,i)$,
	$(x,i) \newSim (y,j)$,
	$(x,i) < (y,j)$,
	$(x,i) \evord (y,j)$,
	$a(x,i)$,
	$\srci(x,i)$,
	and $\tgti(x,i)$, such that for all $w \in \Coh_{\le k}$ with $\Psip(w) = (P,<_P,\evord_P,S_P,T_P,\lambda_P)$ and for any valuation $\nu$ over $w$:
	\[
	\begin{aligned}
		w & \models_\nu \phidom(x,i)
		&& \quad\text{if and only if}\qquad \evt(\nu(x),i) \text{ is defined} \\
		w & \models_\nu (x,i) \newSim (y,j)
		&& \quad\text{if and only if}\qquad \evt(\nu(x),i) = \evt(\nu(y),j) \\
		w & \models_\nu (x,i) < (y,j)
		&& \quad\text{if and only if}\qquad \evt(\nu(x),i) <_P \evt(\nu(y),j) \\
		w & \models_\nu (x,i) \evord (y,j)
		&& \quad\text{if and only if}\qquad \evt(\nu(x),i) \evord_P \evt(\nu(y),j) \\
		w & \models_\nu a(x,i)
		&& \quad\text{if and only if}\qquad \lambda_P(\evt(\nu(x),i)) = a \\
		w & \models_\nu \srci(x,i)
		&& \quad\text{if and only if}\qquad \evt(\nu(x),i) \in S_P \\
		w & \models_\nu \tgti(x,i)
		&& \quad\text{if and only if}\qquad \evt(\nu(x),i) \in T_P .
	\end{aligned}
	\]
\end{lemma}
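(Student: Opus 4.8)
The plan is to reduce every atomic $\MSOP$ predicate to two kinds of $\MSOW$ building blocks: \emph{local} predicates, which depend only on the single letter $D \in \Omegaswb$ occupying a position, and a single \emph{same-event} predicate $(x,i) \newSim (y,j)$, expressing that $\evt(x,i)$ and $\evt(y,j)$ are the same event of $P$. Since $\Omegaswb$ is finite for a fixed $k$, every local predicate is a finite disjunction $\bigvee_D D(x)$ over the letters $D$ with the desired property; this gives $\phidom(x,i)$ directly (take the $D$ with at least $i$ events) and $a(x,i)$ directly (take the $D$ whose $i$-th event is labelled $a$), and also yields auxiliary predicates ``the $i$-th event of $P_x$ lies in $S_{P_x}$'', ``\dots\ lies in $T_{P_x}$'', and their negations (``born here'', ``dies here''), which are used below.

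The crux is $\newSim$. In the gluing $P = P_1 * \dots * P_n$ an event of $P$ is exactly a maximal run of consecutive positions: an event occurring at position $m$ with $\evord$-index $i'$ continues to position $m+1$ iff it lies in the target interface $T_{P_m}$, and then -- since by coherence $T_{P_m}$ and $S_{P_{m+1}}$ are equal as conclists -- the continuing events of $P_m$ are canonically identified with events of $P_{m+1}$, inducing a partial ``index map'' that depends only on the pair $(P_m, P_{m+1})$. Hence there is an $\MSOW$-definable ``continuation'' relation on index-positions: for each of the finitely many pairs $(D_1, D_2) \in \Coh \cap \Omegaswb^2$ and each index pair $(i',j')$ compatible with that map, one disjunct $D_1(x) \wedge D_2(y) \wedge x \psucc y$. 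Then $(x,i) \newSim (y,j)$ is the reflexive--transitive closure of this relation along the linear order of positions, which is $\MSOW$-definable -- for instance by existentially quantifying sets $Z_1, \dots, Z_k$ meant to record, for each index $i' \le k$, the positions of the run that carry index $i'$, and requiring that the $Z_{i'}$ partition the interval between $x$ and $y$, that $x \in Z_i$ and $y \in Z_j$, and that any two consecutive positions of that interval satisfy the continuation relation with the corresponding indices (plus the extra disjunct $x = y$ when $i = j$). One conjoins $\phidom(x,i) \wedge \phidom(y,j)$ throughout, to respect the partiality of $\evt$.

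The remaining predicates then follow quickly from the structural description of gluing (Definition~\ref{de:gluing}, Lemma~\ref{le:isoeqglu}). The event order of $P$ is the union of the event orders of the $P_m$'s, and on a conclist the event order is total and coincides with the index order, so
\[
(x,i) \evord (y,j)\ \eqdef\ \exists m.\ \bigvee_{1 \le i' < j' \le k} \bigl( (x,i) \newSim (m,i') \wedge (y,j) \newSim (m,j') \bigr).
\]
For precedence I would use that $(\xst_w, \xen_w)$ is an interval representation of $P$, so that $e <_P e'$ iff the last position of $e$ in $w$ is strictly below the first position of $e'$; ``$m$ is the last position of the event at $(x,i)$'' reads $\exists i'.\, (x,i) \newSim (m,i') \wedge (\text{the $i'$-th event of $P_m$ dies here})$, dually for the first position of $(y,j)$, and $(x,i) < (y,j)$ just compares the two positions. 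Finally, since $S_P$ is the image of $S_{P_1}$ and $T_P$ the image of $T_{P_n}$ under the gluing, $\srci(x,i)$ says that tracing the event back along $\newSim$ one reaches the minimal position and there lies in the source interface, and $\tgti(x,i)$ is dual.

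The main obstacle is getting $\newSim$ right: one must encode correctly how the $\evord$-index of an event is transported across two consecutive letters -- which is dictated by the canonical identification of the relevant interfaces -- and then check that the transitive-closure formula captures exactly the runs of $P$ and nothing spurious. Once $\newSim$ is available, the other clauses are routine, using only the explicit description of gluing and the finiteness of $\Omegaswb$.
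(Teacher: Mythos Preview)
Your argument is correct and follows the paper's strategy: local letter predicates for $\phidom$ and $a$, a one-step ``glue'' relation between adjacent positions, its MSO transitive closure for $\newSim$, and then $\evord$ and $<$ derived from $\newSim$. The implementation details differ harmlessly --- the paper encodes the closure by universal set quantification rather than your existential path encoding, and defines $(x,i) < (y,j)$ as ``every occurrence of the first event precedes every occurrence of the second'' rather than your last/first-occurrence comparison via the interval representation.

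The one substantive divergence is $\srci$ and $\tgti$. The paper defines $\srci(x,i)$ as the purely local check ``the $i$-th event of the letter at position $x$ lies in that letter's source set'', while you trace the event along $\newSim$ to the minimal position of $w$ and test the interface there. Your version is the one that matches the semantics stated in the lemma: the local test is not equivalent to $\evt(\nu(x),i)\in S_P$, since for instance every event of a terminator lies in that terminator's source interface, so at any terminator position the paper's formula reports $\srci$ regardless of whether the event belongs to $S_P$. Your trace-to-the-boundary formulation avoids this.
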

\begin{proof}
	The formula $\phidom(x,i)$ simply checks that the discrete ipomset labeling
	$x$ is of size at least $i$:
	\[
		\phidom(x,i) \eqdef \bigvee_{D \in \Omega_{\le k} \setminus \Omega_{\le i-1}} D(x).
	\]
	Let us now define the formula $(x,i) \newSim (y,j)$.
	Let $w = P_1\dots P_n$.
	Notice that an event $e \in P$ may occur in several consecutive
	$P_\ell$'s within $w$.
	So the formula $(x,i) \newSim (y,j)$ is meant to determine
	when $\evt(\ell,i) = \evt(\ell',j)$ for some positions $\ell,\ell'$ within $w$.
	We first consider the case where $\ell' = \ell+1$.
	For all $i',j' \le k$, let $M_{i',j'} = \{\discrete_1\discrete_2 \in \Omega_{\leq k}^2 \mid \evt(1,i') = \evt(2,j') \}$, and
	\[
	\glue i j (x,y) \eqdef x \psucc y \land \!\!\bigvee_{\discrete_1\discrete_2 \in M_{i,j}}\!\! \discrete_1(x) \land \discrete_2(y).
	\]
	Then $w \models_\nu \glue i j (x,y)$ if and only if $\nu(y) = \nu(x)+1$
	and $\evt(\nu(x),i) = \evt(\nu(y),j)$.
	We can then define $\newSim$ as a kind of reflexive transitive closure
	of these relations:
	\begin{multline*}
		(x,i) \newSim (y,j) \eqdef
		\forall X_1, \ldots, X_k.\, 
		\Big( x \in X_i \land \textstyle \bigwedge_{i,j \le k} \forall x,y.\, \\
		x \in X_i \land ((x,i) = (y,j) \lor \glue i j (x,y) \lor \glue j i (y,x)) \implies y \in X_j
		\Big)
		\implies y \in X_j,
	\end{multline*}
	where $(x,i) = (y,j)$ stands for the formula $x = y$ when $i = j$ and false otherwise.
	
	We rely on the formula $(x,i) \newSim (y,j)$ to define $(x,i) < (y,j)$ and $(x,i) \evord (y,j)$.
	
	Notice that given two events $e,e'$ in $P$,
	$e < e'$ iff for all $P_\ell$ and $P_{\ell'}$ in which $e$ and $e'$
	occur, we have $\ell < \ell'$. In other terms, iff for all $(\ell,i')$ and 
	$(\ell',j')$ such that $\evt(\ell,i') = e$ and $\evt(\ell',j') = e'$,
	we have $\ell < \ell'$. This can be expressed as follows:
	\[
	(x,i) < (y,j) \eqdef  \bigwedge_{1 \le i',j' \le k}\! \forall x',y'.
	\big( (x',i') \newSim (x,i) \land (y',j') \newSim (y,j) \big) \implies x' < y'.
	\]
	
	To define $(x,i) \evord (y,j)$, notice that $e \evord e'$ in $P$ when they appear 
	together in one $P_\ell$ in this order, that is, if there exists $\ell,i',j'$
	such that $i'$ is smaller than $j'$ and $\evt(\ell,i') = e$, $\evt(\ell,j') = e'$. This leads us to
	\[
	(x,i) \evord (y,j)
	\eqdef \bigvee_{1 \le i' < j' \le k}\! \exists z\, 
	(z,i') \newSim (x,i) \land (z,j') \newSim (y,j).
	\]
	
	For the unary predicates, we let
	\[
	a(x,i) \eqdef \bigvee_{\discrete \in \Omega_{a,i}} \discrete(x) \, ,
	\qquad
	\srci(x,i) \eqdef \bigvee_{\discrete \in \Omega_{\srci,i}} \discrete(x) \, ,
	\qquad
	\tgti(x,i) \eqdef \bigvee_{\discrete \in \Omega_{\tgti,i}} \discrete(x) \, ,
	\]
	where $\Omega_{a,i}$ (resp.~$\Omega_{\srci,i}$, resp.~$\Omega_{\tgti,i}$)
	is the (finite) set of all $\discrete \in \Omegaswb$ with events 
	$e_1 \evord \cdots \evord e_r$ such that $r \ge i$ and 
	$\lambda_D(e_i) = a$ (resp.~$e_i \in S_\discrete$, resp.~$e_i \in T_\discrete$).
\end{proof}

Let us now see how to use these base formulas in a translation
from $\MSOP$ to $\MSOW$ and prove Lemma~\ref{lem:msodec}.
As mentioned before, $\widehat{\phi}$ will be the conjunction of
$\texttt{Coh}_k$ and an $\MSOW$ formula $\phi'$ which we build
by induction on $\phi$.
Since we proceed by induction, we have to consider formulas $\phi$ that contain
free variables.
We construct $\phi'$ so that its free first-order variables are the same as $\phi$,
and it has second-order variables $X_1,\ldots,X_k$ for every free second-order 
variable $X$ of $\phi$.
In addition, every first-order variable of $\phi'$ is paired to some
$i \in \{1,\ldots,k\}$ by a function $\tau$, given as a parameter in the
translation.
Intuitively, we want to replace $x$ with the pair $(x,\tau(x))$, and $X$
with the union $\bigcup_{1 \le i \le k} \{(x,i) \mid x \in X_i\}$.
The next lemma expresses this more precisely, and Example~\ref{ex:msotohda}
illustrates it.

\begin{lemma}
	For every $\MSOP$ formula $\phi$ and function $\tau$ from
	the free first-order variables of $\phi$ to $\{1,\ldots,k\}$, one can construct
	a formula $\phi'_\tau \in \MSOW$ such that for every $w \in \Coh_{\le k}$
	and $P = \Psip(w)$, 
	\[
	P \models_\nu \phi \quad\text{if and only if}\quad
	w \models_{\nu'} \phi'_\tau
	\]
	for any valuations $\nu$ and $\nu'$ satisfying the following conditions:
	\begin{enumerate}
		\item $\evt(\nu'(x),\tau(x)) = \nu(x)$ and 
		\item $\bigcup_{1 \le i \le k} \{\evt(e,i) \mid e \in \nu'(X_i)\} 
		= \nu(X)$.
	\end{enumerate}
\end{lemma}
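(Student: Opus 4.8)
The plan is a structural induction on $\phi$, with $\tau$ treated as a parameter that gets extended whenever we pass under a first-order quantifier. Throughout, I maintain as an invariant that the free first-order variables of $\phi'_\tau$ are exactly those of $\phi$ and that, for each free second-order variable $X$ of $\phi$, the formula $\phi'_\tau$ has free second-order variables $X_1,\dots,X_k$ (and no others). The atomic cases are handled directly by Lemma~\ref{lem:baserel}: I set $(a(x))'_\tau \eqdef a(x,\tau(x))$, $(\srci(x))'_\tau \eqdef \srci(x,\tau(x))$, $(\tgti(x))'_\tau \eqdef \tgti(x,\tau(x))$, $(x < y)'_\tau \eqdef (x,\tau(x)) < (y,\tau(y))$ and $(x\evord y)'_\tau \eqdef (x,\tau(x))\evord (y,\tau(y))$, whose correctness is immediate from the conclusions of Lemma~\ref{lem:baserel} together with condition~(1). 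For membership, recalling that condition~(2) reads $\nu(X) = \bigcup_{i}\{\evt(e,i)\mid e\in\nu'(X_i)\}$, I set
\[
  (x\in X)'_\tau \eqdef \bigvee_{1\le i\le k}\exists z.\,\big(z\in X_i \wedge (z,i)\newSim (x,\tau(x))\big),
\]
so that, by Lemma~\ref{lem:baserel}, $w\models_{\nu'}(x\in X)'_\tau$ holds iff some $e\in\nu'(X_i)$ satisfies $\evt(e,i)=\evt(\nu'(x),\tau(x))=\nu(x)$, that is, iff $\nu(x)\in\nu(X)$.

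For the Boolean connectives, I take $(\neg\phi)'_\tau \eqdef \neg(\phi'_\tau)$ and $(\phi_1\vee\phi_2)'_\tau \eqdef (\phi_1)'_\tau\vee(\phi_2)'_\tau$: negation is immediate because the induction hypothesis applies to the very same pair $(\nu,\nu')$ (the free variables of $\neg\phi$ being those of $\phi$), and for disjunction one simply restricts $\tau$, $\nu$ and $\nu'$ to the free variables of each disjunct. For first-order quantification I use that an event $e$ of $P=\Psip(w)$ occurs at some position $\ell$ of $w$ at some $\evord$-index $i\le k$, i.e.\ $\evt(\ell,i)=e$, and set
\[
  (\exists x.\,\phi)'_\tau \eqdef \bigvee_{1\le i\le k}\exists x.\,\big(\phidom(x,i)\wedge \phi'_{\tau[x\mapsto i]}\big).
\]
The guard $\phidom(x,i)$ forces the chosen position to have an $i$-th event, which is exactly what is needed for $\evt(\ell,i)$ to be defined and for the extended pair $(\nu[x\mapsto e],\nu'[x\mapsto\ell])$ to satisfy conditions~(1)--(2) with $\tau[x\mapsto i]$; the induction hypothesis then closes the case in both directions. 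For second-order quantification I set $(\exists X.\,\phi)'_\tau \eqdef \exists X_1\dots\exists X_k.\,\phi'_\tau$: from $S\subseteq P$ with $P\models_{\nu[X\mapsto S]}\phi$ one builds sets $U_1,\dots,U_k\subseteq W$ by choosing, for each $s\in S$, one pair $(\ell,i)$ with $\evt(\ell,i)=s$ and putting $\ell$ into $U_i$, so that $\bigcup_i\{\evt(e,i)\mid e\in U_i\}=S$; conversely, from arbitrary $U_1,\dots,U_k$ one reads off $S=\bigcup_i\{\evt(e,i)\mid e\in U_i\}$, using that $\evt$ is a partial function so positions without an $i$-th event contribute nothing. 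Either way $(\nu[X\mapsto S],\nu'[X_i\mapsto U_i])$ satisfies condition~(2), and the induction hypothesis applies.

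The main difficulty is not any single case but the bookkeeping that ties the two sides together: the assignment $\nu'\mapsto\nu$ determined by~(1)--(2) is partial (undefined exactly when $\evt(\nu'(x),\tau(x))$ is) and many-to-one (an event sits in several consecutive positions of $w$), so each quantifier step must be checked to preserve~(1)--(2) in both directions while introducing only the advertised free variables. Once the partial-function convention in condition~(2) and the $\phidom$-guards under first-order quantifiers are in place, every case reduces to a direct appeal to Lemma~\ref{lem:baserel} and the induction hypothesis. Lemma~\ref{lem:msodec} then follows by instantiating this for a sentence $\phi$ (so $\tau$ is empty, there are no free variables, and $\Psip(w)\models\phi$ iff $w\models\phi'_\emptyset$ for every $w\in\Coh_{\le k}$) and taking $\widehat\phi\eqdef \texttt{Coh}_k\wedge\phi'_\emptyset$, since $\texttt{Coh}_k$ restricts the $\MSOW$-models to coherent words.
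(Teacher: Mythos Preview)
Your proof is correct and follows essentially the same structural induction as the paper, with the same translations for atomic formulas, membership, Boolean connectives, and first-order quantification. The only minor difference is that the paper adds a guard $\bigwedge_i \forall x.\, x\in X_i \Rightarrow \phidom(x,i)$ in the second-order case, whereas you omit it and instead rely on the partial-function convention for $\evt$ in condition~(2); both choices yield the stated equivalence.
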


\begin{proof}
	We make use of the formulas from Lemma~\ref{lem:baserel}:
	\begin{align*}
		(x = y)'_\tau & \eqdef (x,\tau(x)) \newSim (y,\tau(y)) \\
		(x < y)'_\tau & \eqdef (x,\tau(x)) < (y,\tau(y)) \\
		(x \evord y)'_\tau & \eqdef (x,\tau(x)) \evord (y,\tau(y)) \\
		a(x)'_\tau & \eqdef a(x,\tau(x)) \\
		\srci(x)'_\tau & \eqdef \srci(x,\tau(x)) \\
		\tgti(x)'_\tau & \eqdef \tgti(x,\tau(x)).
	\end{align*}	
	The function~$\tau$ emerges in the case $\phi = \exists x.\, \psi$,
	where we let
	\[
	\phi'_\tau \eqdef \bigvee_{1 \le i \le k} \exists x.\, \phidom(x,i) \land \psi'_{\tau[x \mapsto i]}.
	\]
	For the second-order part, we let
	\begin{align*}
		(\exists X.\, \psi)'_\tau
		& \eqdef \exists X_1,\ldots, X_k.\, \psi'_\tau \land \bigwedge_{1 \le i \le k} \forall x.\, x \in X_i \implies \phidom(x,i) \\
		(x \in X)'_\tau
		& \eqdef \bigvee_{1 \le j \le k} \exists y\, (x,\tau(x)) \newSim (y,j) \land y \in X_j.
	\end{align*}
	Finally, when $\phi$ is $\psi_1 \lor \psi_2$ or $\neg \psi$, then we let $\phi'$ be $\psi'_1 \lor \psi'_2$ or $\neg \psi'$, respectively.
\end{proof}

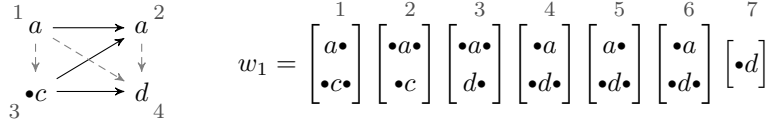
\begin{figure}[tbp]
	\centering
	\begin{tikzpicture}[y=1.2cm]
          \begin{scope}
		\node (a) at (0.4,1.4) {$a$};
		\node (c) at (0.4,0.7) {$\ibullet c\vphantom{d}$};
		\node (b) at (1.8,1.4) {$a$};
		\node (d) at (1.8,0.7) {$d$};
		\path (a) edge (b);
		\path (c) edge (d);
		\path (c) edge (b);
		\path[densely dashed, gray] (b) edge (d) (a) edge (d) (a) edge (c);
		\node[font=\scriptsize, black!70] at (a.north west) {1};
		\node[font=\scriptsize, black!70] at (b.north east) {2};
		\node[font=\scriptsize, black!70] at (c.south west) {3};
		\node[font=\scriptsize, black!70] at (d.south east) {4};
          \end{scope}
		\begin{scope}[shift={(3.5,1.1)},x=.92cm]
			\path[use as bounding box] (-.45,0) -- (7,-.2);
			\foreach \i in {1,2,3,4,5,6} \node[font=\scriptsize, black!70] at (\i, 0.5) {\i};
			\node[font=\scriptsize, black!70] at (6.9, 0.5) {7};
			
			\node at (3.35,-0.1) {
				$w_1 =
				\bigloset{ a \ibullet \\ \ibullet c \ibullet}
				\bigloset{\ibullet a \ibullet \\ \ibullet c}
				\bigloset{\ibullet a \ibullet \\ d \ibullet}
				\bigloset{\ibullet a \\ \ibullet d \ibullet}
				\bigloset{ a \ibullet \\ \ibullet d \ibullet}
				\bigloset{ \ibullet a \\ \ibullet d \ibullet}
				\bigloset{\ibullet d}
				$
			};
		\end{scope}
	\end{tikzpicture}
	\caption{Ipomset and corresponding coherent word.
          On the left, numbers indicate events; on the right, positions.}
	\label{fi:example}
\end{figure}

\begin{example}
	\label{ex:msotohda}
	Figure~\ref{fi:example} displays an ipomset $P$ and the coherent word $w_1 = P_1 \dots P_7$ such that $P_1 * \dots * P_7 = P$.
	Let $e_1,\dots,e_4$ be the events of $P$ labelled respectively by the left $a$, the right $a$, $c$, and $d$ and let $p_1, \dots,p_7$ the positions on $w_1$ from left to right.
	Assume that $P \models_\nu \phi(x,X)$ for some $\MSOP$-formula $\phi$  and the valuation $\nu(x) = e_1$ and $\nu(X) =\{e_2,e_3\}$.
	Then,  $w_1 \models_{\nu'}\phi'_{[x \mapsto 1]}(x,X_1,X_2)$ when, for example, $\nu'(x) = p_2$, $\nu'(X_1) = \{p_6\}$ and $\nu'(X_2) = \{p_3\}$ since this valuation satisfies the invariant property. 
	For $\newSim$ we have 
	$(p_1,1) \newSim \dots \newSim (p_4,1)$, $(p_1,2) \newSim (p_2,2)$, $(p_3,2) \newSim \dots \newSim (p_6,2) \newSim (p_7,1)$ and $(p_5,1) \newSim (p_6,1)$.
	In particular $(p_1,1)\not\newSim (p_5,1)$ since neither $\glue 1 1 (p_4,p_5)$ nor $\glue 2 1 (p_4,p_5)$ hold.
\end{example}

We have now proven Lemma~\ref{lem:msodec}. The first assertion of Theorem~\ref{th:MSOeq} follows almost directly:
\begin{proposition}
	For every sentence $\phi \in \MSOP$ and every $k$, there exists
	a sentence $\widehat \phi \in \MSOW$ such that
	$L(\widehat \phi) = \Psip^{-1}(L(\phi)_{\le k})$.
\end{proposition}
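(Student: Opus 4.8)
The plan is to derive the statement directly from Lemma~\ref{lem:msodec} by a short wrap-up that accounts for the letter $\id_\emptyset$, which that lemma excludes, and for the empty word. Fix $\phi\in\MSOP$ and $k$, and let $\widehat\phi_0\in\MSOW$ be the sentence produced by Lemma~\ref{lem:msodec}, so that for every $w\in(\Omegaswb\setminus\{\id_\emptyset\})^+$ we have $w\models\widehat\phi_0$ iff $w\in\Coh$ and $\Psip(w)\models\phi$. I want a sentence $\widehat\phi$ with $w\models\widehat\phi$ iff $w\in\Coh$ and $\Psip(w)\in L(\phi)_{\le k}$, now for \emph{all} $w\in\Omegaswb^*$. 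The width clause is automatic: every letter of a word over $\Omegaswb$ has width $\le k$, and $\wid(\Psip(w))$ is the largest width among the letters of $w$ (the underlying conclist of each letter is a $<_{\Psip(w)}$-antichain), so $\wid(\Psip(w))\le k$ for free, and $\Psip^{-1}(L(\phi)_{\le k})\subseteq\Coh_{\le k}\subseteq\Omegaswb^*$, which is what makes the claimed set equality type-check.

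Write $w'$ for the word obtained from $w$ by deleting every occurrence of the letter $\id_\emptyset$. Inspecting the coherence condition $T_{P_i}=S_{P_{i+1}}$ shows that $w\in\Coh$ if and only if (a) each $\id_\emptyset$-position of $w$ whose $\psucc$-predecessor (resp.\ $\psucc$-successor) exists has that neighbour labelled by a letter with empty target (resp.\ empty source), and (b) $w'\in\Coh$ or $w'$ is empty; moreover in that case $\Psip(w)=\Psip(w')$ when $w'$ is non-empty (gluing absorbs the identity morphisms $\id_\emptyset$) and $\Psip(w)=\id_\emptyset$ when $w'$ is empty. Condition (a) is local and is an $\MSOW$ sentence
\[
  \gamma_\emptyset\eqdef\forall x.\,\id_\emptyset(x)\implies\Bigl(\bigl(\forall y.\,y\psucc x\implies\!\!\bigvee_{D\in\Omegaswb,\,T_D=\emptyset}\!\!D(y)\bigr)\land\bigl(\forall z.\,x\psucc z\implies\!\!\bigvee_{D\in\Omegaswb,\,S_D=\emptyset}\!\!D(z)\bigr)\Bigr).
\]

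Finally I would set
\[
  \widehat\phi\eqdef\gamma_\emptyset\land\Bigl(\bigl((\exists x.\,\lnot\id_\emptyset(x))\land\widehat\phi_0^{\mathrm{rel}}\bigr)\lor\bigl((\exists x.\,x=x)\land(\forall x.\,\id_\emptyset(x))\land c_\phi\bigr)\Bigr),
\]
where $\widehat\phi_0^{\mathrm{rel}}$ is the relativization of $\widehat\phi_0$ to the positions satisfying $\lnot\id_\emptyset(x)$ --- the standard syntactic transformation replacing $\exists x.\,\psi$ by $\exists x.\,\lnot\id_\emptyset(x)\land\psi$ and $\exists X.\,\psi$ by $\exists X.\,(\forall x.\,x\in X\implies\lnot\id_\emptyset(x))\land\psi$, leaving $<$, the letter predicates and the connectives untouched (the derived successor $\psucc$, being built from $<$, is thereby correctly reinterpreted as ``next non-$\id_\emptyset$ position''), and $c_\phi$ is a fixed $\MSOW$ sentence true in all words if $\id_\emptyset\models\phi$ and false in all words otherwise. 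By the relativization property, for $w$ with $w'$ non-empty one has $w\models\widehat\phi_0^{\mathrm{rel}}$ iff $w'\models\widehat\phi_0$ iff (Lemma~\ref{lem:msodec}, applicable since $w'\in(\Omegaswb\setminus\{\id_\emptyset\})^+$) $w'\in\Coh$ and $\Psip(w')\models\phi$; combined with $\gamma_\emptyset$ and the displayed characterization this yields $w\in\Coh$ and $\Psip(w)\models\phi$. The second disjunct handles the words consisting only of $\id_\emptyset$'s, which are coherent with $\Psip(w)=\id_\emptyset$, and the conjunct $\exists x.\,x=x$ rejects the empty word, in accordance with $\varepsilon\notin\Coh$ (under the opposite convention one drops it and adjusts $c_\phi$). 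The equality $L(\widehat\phi)=\Psip^{-1}(L(\phi)_{\le k})$ then follows by a routine case split according to whether $w$ is empty, consists only of $\id_\emptyset$'s, or contains a letter $\ne\id_\emptyset$; effectiveness is clear since all transformations above are syntactic and ``$\id_\emptyset\models\phi$'' is decidable.

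I do not expect a genuine obstacle here: the real work is Lemma~\ref{lem:msodec}, with its $\evt$- and $\newSim$-based encoding. The only thing needing a little care is the interaction of $\id_\emptyset$ with coherence and with $\Psip$ --- establishing the characterization via $w'$ and observing it is $\MSOW$-expressible --- together with the harmless empty-word corner case.
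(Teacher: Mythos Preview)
Your proof is correct. Both your argument and the paper's derive the proposition from Lemma~\ref{lem:msodec}, and the only work left is to absorb the letter $\id_\emptyset$ (and the empty word); the difference lies in how that absorption is carried out.

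The paper proceeds \emph{semantically}: from Lemma~\ref{lem:msodec} it extracts the regular word language $L=L(\widehat\phi')\subseteq(\Omegaswb\setminus\{\id_\emptyset\})^+$, then forms $(L\shuffle\id_\emptyset^*)\cap\Coh$ and possibly unions with $\id_\emptyset^*$, relying on closure of regular languages under shuffle, intersection and union, and finally invokes the classical B\"uchi--Elgot--Trakhtenbrot theorem to return to an $\MSOW$ sentence. Your argument is \emph{syntactic}: you relativize $\widehat\phi_0$ to the non-$\id_\emptyset$ positions, add the local coherence guard $\gamma_\emptyset$, and treat the all-$\id_\emptyset$ words by a separate disjunct. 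Your route is slightly more self-contained (no appeal to regular-language closure or to the word-level BET theorem), and gives an explicit formula; the paper's route is shorter to state and reuses standard machinery. The crucial observation you spell out --- that $w\in\Coh$ iff $\gamma_\emptyset$ holds and the $\id_\emptyset$-erased word $w'$ is coherent, with $\Psip(w)=\Psip(w')$ --- is exactly what the paper's shuffle-and-intersect encodes, so the two proofs are morally the same computation presented in different formalisms.
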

\begin{proof}
	Let $L = \{w \in (\Omegaswb \setminus \{\id_\emptyset\})^+ \cap \Coh \mid \Psip(w) \models \phi \}$.
	By Lemma~\ref{lem:msodec}, we have $\widehat \phi' \in \MSOW$   such that $L = L(\widehat{\phi}')$.
	That is $L$ is a regular language of finite words.
	Thus $L' = (L \shuffle \id_\emptyset^\ast) \cap \Coh$
	and $L'' = L' \cup \id_\emptyset^\ast$ are effectively regular.
	(Here $L_1 \shuffle L_2$ denotes the shuffle of $L_1$ and $L_2$, that is, all interleavings of words in $L_1$ and $L_2$.)
	Finally, note that the question whether $\id_\emptyset \models \phi$ is decidable.
	We conclude by picking an $\MSOW$ sentence $\widehat{\phi}$ for $L'$ (if $\id_\emptyset \not \models \phi$) or $L''$ (if $\id_\emptyset \models \phi$) by the classical B\"uchi-Elgot-Trakhtenbrot theorem.
\end{proof}
As a consequence:
\begin{corollary}
	$\widehat \phi$ is $\sim$-invariant and $\widetildeL(\widehat \phi) = \Phi(L(\phi)_{\le k})$.
\end{corollary}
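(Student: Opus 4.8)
The plan is to read off both statements from the preceding proposition, using only that $\Psip$ is constant on $\sim$-classes (so that it factors through $\Cohsim$ as $\Psi$) and that $\Phi$ and $\Psi$ are mutually inverse isomorphisms (Theorem~\ref{th:genipoms}). The key observation, valid for every coherent word $s$, is the chain
\begin{align*}
  s \models \widehat\phi
  &\iff s \in \Psip^{-1}(L(\phi)_{\le k}) \\
  &\iff \Psip(s) \in L(\phi)_{\le k} \\
  &\iff \Psi([s]_\sim) \in L(\phi)_{\le k},
\end{align*}
where the first equivalence is the preceding proposition and the last uses $\Psip(s) = \Psi([s]_\sim)$.

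For $\sim$-invariance, I would take $w, w' \in \Coh$ with $w \sim w'$; then $\Psip(w) = \Psip(w')$, and the chain above immediately gives $w \models \widehat\phi$ iff $w' \models \widehat\phi$. For the language identity, recall that $\widetildeL(\widehat\phi) = \{[s]_\sim \mid s \in \Coh,\ s \models \widehat\phi\}$; by the chain above a step sequence $c$ lies in $\widetildeL(\widehat\phi)$ exactly when $\Psi(c) \in L(\phi)_{\le k}$, that is, $\widetildeL(\widehat\phi) = \Psi^{-1}(L(\phi)_{\le k}) = \Phi(L(\phi)_{\le k})$, the last equality because $\Phi = \Psi^{-1}$. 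The inclusion ``$\supseteq$'' can also be checked by hand: for $P \in L(\phi)_{\le k}$ the word $s = \Phip(P)$ is coherent and satisfies $\Psip(s) = P$, hence $s \models \widehat\phi$ and $\Phi(P) = [s]_\sim \in \widetildeL(\widehat\phi)$.

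The only point that is not a pure definition chase is the width bookkeeping needed for the statements to be well-typed: one must know that $\Psip^{-1}(L(\phi)_{\le k}) \subseteq \Coh_{\le k}$, so that $\widehat\phi$ (a formula over the finite alphabet $\Omegaswb$) really captures it, and that $\Phip(P) \in \Coh_{\le k}$ whenever $\wid(P) \le k$. Both follow from the fact that in any step decomposition $P = P_1 * \dots * P_n$ the events of each factor $P_i$ form a $<$-antichain of $P$, whence $\wid(P_i) = |P_i| \le \wid(P)$. This is the ``main obstacle'', but it is entirely routine; the content of the corollary is already in the proposition.
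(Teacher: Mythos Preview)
Your proof is correct and is essentially a careful unpacking of what the paper leaves implicit: the paper states the corollary with no proof (``As a consequence:''), and your argument---reading $\sim$-invariance off the identity $L(\widehat\phi)=\Psip^{-1}(L(\phi)_{\le k})$ via $\Psip(w)=\Psip(w')$ for $w\sim w'$, and then using $\Phi=\Psi^{-1}$ for the language identity---is exactly the intended routine verification. The width bookkeeping you flag is real but, as you say, already needed for the proposition itself to be well-typed, so there is nothing new to check at the level of the corollary.
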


\subsection{From $\Coh_{\le k}$ to $\iiPoms_{\le k}$}\label{sec:HDAs-to-MSO}

In this section we prove the second assertion of Theorem~\ref{th:MSOeq}, claiming that there exists an $\MSOP$ sentence satisfied by the ipomsets obtained by gluing the sparse coherent words satisfying some $\MSOW$ sentence.

Our construction relies on the uniqueness of the sparse step decomposition (Lemma~\ref{le:ipomsparse}) $\bar\Phi(P)$, and the $\MSOP$-definability of the relations: ``event $x$ is started/terminated before event $y$ is started/terminated in $\bar\Phi(P)$" (Lemma~\ref{lem:compare} below).

More formally, let $P \in \iiPoms$ and $P_1 \ldots P_n = \bar\Phi (P)$.
Recall that this means that $P = P_1 \comp \cdots \comp P_n$ and $P_i$
alternates between starters and terminators.
Also recall the notation for starts and terminations of events of Section \ref{se:step-subsumptions}:
given $e \in P \setminus S_P$, $\xstart e$ is the (unique) step where $e$ is started in the decomposition.
For $e \in P \setminus T_P$, $\xend e$ is the (unique) step where $e$ is terminated.
For $x\in S_P$, $\xstart x = -\infty$, and for $x\in T_P$, $\xend x = +\infty$.

\label{pg:xstart}

\begin{example}
	\label{ex:stend}
	\quad
	Proceeding with Example~\ref{ex:msotohda},
	let
	$w_2 = P_1\dots P_6=
	\loset{\hphantom{\ibullet} a \ibullet \\ \ibullet c \ibullet} 
	\loset{\ibullet a \ibullet \\ \ibullet c \hphantom{\ibullet}}
	\loset{\ibullet a \ibullet \\ \hphantom{\ibullet} d \ibullet}
	\loset{\ibullet a \hphantom{\ibullet}\\ \ibullet d \ibullet}$
	$\loset{\hphantom{\ibullet} a \ibullet \\ \ibullet d \ibullet}
	\loset{ \ibullet a \\ \ibullet d } = \bar\Phi(P)$ (see also Example~\ref{ex:paths}).
	We have $\xstart {e_3} = -\infty, \xstart {e_1} = 1, \xstart {e_4} = 3$ and $\xstart {e_2} = 5$.
	Also, $\xend {e_3} = 2, \xend {e_1} = 4$ and $\xend {e_2} = \xend {e_4} = 6$.
	Further, $P_1$ contains $e_1$ since $\xstart {e_1} = 1$ and $e_3$ because $\xstart {e_3} \le 1 \le \xend {e_3}$;
	$P_4$ contains $e_1$ since $\xend {e_1} =4$ and $e_4$ because $\xstart {e_4} \le 4 \le \xend {e_4}$. 
\end{example}

Our construction relies on the following observation:

\begin{remark}\label{rem:keyObs}
  When $P_1 \ldots P_n = \bar\Phi (P)$ for some $P \in \iiPoms \setminus \Id$,
  then each starter $P_i$ contains precisely all~$e \in P$ such that $\xstart e \le i < \xend e$.
  That is, all events which are started before, at $P_i$ or never started, and are terminated after $P_i$ or never terminated. 
  In particular, $P_i$ starts all $e$ such that $\xstart e = i$.
  When it is a terminator, $P_i$ contains precisely all~$e \in P$ such that $\xstart e < i \le \xend e$, and terminates all $e$ such that $\xend e = i$.
  Note that $\xstart e<\xend e$ for all $e\in P$.
\end{remark}

These relations are $\MSOP$-definable. We first encode in the following lemma some of them:
\begin{lemma}
	\label{lem:compare}
	For $f,g \in \{\xst,\xen\}$ and ${\bowtie} \in \{{<}, {>}\}$,
	the relations $f(x) \bowtie g(y)$ are $\MSOP$-definable.
\end{lemma}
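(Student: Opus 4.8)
The goal is to define, for each of the four combinations $f(x)\bowtie g(y)$ with $f,g\in\{\xst,\xen\}$ and ${\bowtie}\in\{{<},{>}\}$, an $\MSOP$ formula over $P$ that holds exactly when the corresponding inequality between the step indices in the sparse decomposition $\bar\Phi(P)$ is true. The key idea is Remark~\ref{rem:keyObs}: the sparse decomposition is \emph{canonical}, so the relative order of starts and terminations is already determined by the precedence and event orders on $P$ together with the interface data — we never need to reconstruct the actual indices, only to express their comparisons. I would first recall the combinatorial meaning of each comparison, then write the formulas, handling the $\pm\infty$ conventions for interface events as special cases via $\srci$ and $\tgti$.

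\textbf{Key combinatorial translations.} Consider two events $x,y$ that are both non-interface (so all indices are finite); the interface cases are bolted on afterwards. In the sparse decomposition, each step is either a proper starter or a proper terminator, alternating. The basic observations are: (i) $\xend x < \xstart y$ iff $x<_P y$ (this is essentially the interval-representation characterization: one event is terminated strictly before the other is started exactly when they are $<_P$-related); (ii) $\xstart x < \xstart y$ iff $x$ is started no later than $y$ but not simultaneously — since a single proper starter starts \emph{all} events $e$ with $\xstart e$ equal to that index, and by sparseness consecutive starters are merged, two distinct events are started at the same step iff $x\evord_P y$ or $y\evord_P x$ \emph{and} neither of them is $<_P$-below anything terminated in between; more usefully, $\xstart x \le \xstart y$ iff there is no event terminated strictly between the start of $x$ and the start of $y$ that blocks them, which one shows is equivalent to: for every $z$, if $z <_P y$ then $z <_P x$ or $z=x$ or $z\evord$-comparable to $x$... . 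The cleanest route is to reduce \emph{all} comparisons to the single primitive $\xend x < \xstart y \iff x <_P y$ together with ``same step'' predicates, using that indices alternate: e.g. $\xstart x < \xend y$ always holds when $x,y$ are not $<_P$-ordered with $y$ first, etc. I would set up $\MSOP$ abbreviations $\texttt{startStep}(x,y)$ ("$x$ and $y$ are started at the same step") and $\texttt{endStep}(x,y)$, characterize them purely in terms of $<_P,\evord_P$ using Remark~\ref{rem:keyObs} (two non-$<_P$-comparable events sharing an event-order edge are started together iff no third event's termination is forced between them — formalizable by quantifying over the potential "blocker"), and then express each of the eight cases as a Boolean combination of these, the predicate $x<_P y$, equalities, and the interface predicates.

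\textbf{Handling interfaces and assembling.} For $x\in S_P$ we have $\xstart x = -\infty$, so $\xstart x < g(y)$ is \emph{always} true (as long as $g(y)$ is finite or $+\infty$; if $g(y)=\xstart y$ with $y\in S_P$ then both are $-\infty$ and the strict inequality is false, which we detect with $\srci(y)$). Symmetrically for $T_P$ and $+\infty$. So each formula $f(x)\bowtie g(y)$ becomes a finite case split on whether $x,y$ lie in $S_P$, $T_P$, or neither, using $\srci,\tgti$ — finitely many cases, each either trivially true/false or reducing to the finite-index formulas above. Combining, we get an $\MSOP$ formula for every $f(x)\bowtie g(y)$, which is exactly the claim.

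\textbf{Expected main obstacle.} The routine part is the interface bookkeeping. The delicate part is getting the "same step" characterization right: two events are started at the same step of the \emph{sparse} decomposition precisely when merging forces it, and expressing "no termination is forced between the start of $x$ and the start of $y$" purely in terms of $<_P$ and $\evord_P$ requires care — one must argue, using the uniqueness of the sparse decomposition (Lemma~\ref{le:ipomsparse}) and Remark~\ref{rem:keyObs}, that this blocking condition is captured by: there is no $z$ with ($x<_P z$ or $x$ and $z$ incomparable with a suitable $\evord$ relationship) and ($z<_P y$ or similar) forcing an intermediate terminator. I expect to spend most of the effort proving that these first-order blocking conditions exactly match the index comparison, by induction on the length of $\bar\Phi(P)$ or directly from the interval-order structure; the $\MSOP$ formulas themselves are then short.
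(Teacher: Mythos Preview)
Your primitive $\xend x < \xstart y \iff x <_P y$ is exactly where the paper starts, and you are right that $\xstart x < \xend y$ is simply its negation $\lnot(y<_P x)$, since a start index (a starter step or $-\infty$) and an end index (a terminator step or $+\infty$) can never coincide.

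Where you diverge is in the $\xstart/\xstart$ and $\xend/\xend$ comparisons. You set out to build a ``same step'' predicate and anticipate a delicate characterization of when two events are started together, possibly by induction on the length of $\bar\Phi(P)$. None of this is needed. The paper exploits alternation in one line: since proper starters and terminators strictly alternate in the sparse decomposition, two \emph{distinct} starter indices must have a terminator index strictly between them. Hence
\[
  \xstart x < \xstart y \ \eqdef\ \bigl(\srci(x)\land\lnot\srci(y)\bigr)\ \lor\ \exists z.\,\bigl(\xstart x < \xend z\ \land\ \xend z < \xstart y\bigr),
\]
where both conjuncts in the existential are formulas already defined (namely $\lnot(z<_P x)$ and $z<_P y$). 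Symmetrically,
\[
  \xend x < \xend y \ \eqdef\ \bigl(\tgti(y)\land\lnot\tgti(x)\bigr)\ \lor\ \exists z.\,\bigl(\xend x < \xstart z\ \land\ \xstart z < \xend y\bigr).
\]
That is the entire proof. Your ``expected main obstacle'' --- formalizing the blocking condition --- is literally $\exists z.\,\lnot(z<_P x)\land z<_P y$, with no reference to $\evord$ and no induction; its correctness is immediate from alternation. The interface bookkeeping also collapses to the single disjunct shown, since the base formulas already behave correctly at $\pm\infty$ (for instance $\xend z < \xstart y$, being $z<_P y$, is automatically false when $y\in S_P$). Your plan is not wrong, but it sets up machinery that the argument does not require.
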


\begin{proof}
	We first define $\xend x < \xstart y$ as the formula $x < y$, together with
	$\xstart x < \xend y \eqdef \lnot (\xend y < \xstart x)$.
	Because starters and terminators alternate in the sparse step decomposition,
	we can then let
	\begin{align*}
		\xstart x < \xstart y
		& \eqdef  \big(\srci(x) \land \lnot \srci(y)\big) \lor
		\big(\exists z.\, \xstart x < \xend z \land \xend z < \xstart y\big) \\
		\xend x < \xend y
		& \eqdef \big(\tgti (y) \land \lnot \tgti (x)\big) \lor  \big(\exists z.\, \xend x < \xstart z \land \xstart z < \xend y\big)
	\end{align*}
	where $\srci(x) \land \lnot \srci(y)$  notably holds when both $x$ and $y$ are minimal in an ipomset $P$, and $y$ is started while $x$ is a source.
	Similarly  $\tgti (y) \land \lnot \tgti (x)$ holds in particular when both $x$ and $y$ are maximal, $x$ is terminated while $y$ is a target. 
\end{proof}


Observe that $\xend x < \xstart y$ implies $\lnot \tgti(x) \land \lnot \srci(y)$,
given that the end of the $x$-event precedes the beginning of the $y$-event.
As a consequence $\xstart x < \xstart y$ implies $\lnot \srci(y)$.
On the other hand $\xstart x < \xend y$  holds in particular if $\srci(x)$ or $\tgti(y)$
holds.

\begin{example}
	Continuing Example~\ref{ex:stend},
	observe that $P \models \xstart {e_3} <  \xstart {e_1}$ even if $e_3$ and $e_1$ are minimal and both occur in $P_1$.
	We have also that $P \not\models \xstart {e_1} <  \xstart {e_3}$, $P \models \xstart {e_1} <  \xstart {e_2}$, $P \not\models \xend {e_2} <  \xend {e_4}$ and $P \models \xend {e_3} <  \xend {e_4}$.	
	Observe that $P\models \xstart {e_1} < \xend {e}$ for all $e \in P$.
\end{example}

Then, using Lemma~\ref{lem:compare}, one can encode each step of $\Phip(P)$ with $\MSOP$: 
\begin{lemma}\label{lem:StTe(x)Def}
	For all $D \in \Omegaswb \setminus \Id$, there are $\MSOP$-formulas $D_\xst(x)$ and $D_\xen(x)$
	such that for all $P \in \iiPoms_{\le k}$ with $P_1 \ldots P_n = \Phip (P)$
	and valuations $\nu(x) = e \in P$, we have
	\begin{itemize}
		\item $P \models_\nu D_\xst(x)$ if and only if $\xstart e \neq - \infty$ and $P_{\xstart e} = D$;
		\item $P \models_\nu D_\xen(x)$ if and only if $\xend e \neq + \infty$ and $P_{\xend e} = D$.
	\end{itemize}
\end{lemma}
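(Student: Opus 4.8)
The plan is to write the formulas $D_\xst(x)$ and $D_\xen(x)$ down explicitly, translating into $\MSOP$ — via the comparisons $f(x) \bowtie g(y)$ of Lemma~\ref{lem:compare} — the description of the individual steps of the sparse decomposition $\Phip(P)$ given by Remark~\ref{rem:keyObs}. Since $D \in \Omegaswb \setminus \Id$, exactly one of ``$D$ is a proper starter'' and ``$D$ is a proper terminator'' holds, so I would first dispose of the cross cases: if $D$ is a proper terminator set $D_\xst(x) \eqdef \lnot(x = x)$, and dually $D_\xen(x) \eqdef \lnot(x = x)$ if $D$ is a proper starter. These are correct because the step $P_{\xstart e}$ at which an event $e$ is started is always a starter and $P_{\xend e}$ always a terminator, and if $P \in \Id$ then $\xstart e = -\infty$ for every $e$, so $\lnot\srci(x)$ — present in the nontrivial $D_\xst$ below — fails (and likewise $\lnot\tgti(x)$ for $D_\xen$), matching the required falsity. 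It then remains to produce $D_\xst(x)$ for $D$ a proper starter; the terminator case is completely symmetric.

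So fix a proper starter $D = \starter{U}{A}$, enumerate its (totally $\evord$-ordered) events $d_1 \evord \cdots \evord d_r$, write $a_m \in \Sigma$ for the label of $d_m$, and put $I_A = \{m : d_m \in A\} \neq \emptyset$. By Remark~\ref{rem:keyObs}, when $\xstart e = i \neq -\infty$ the step $P_i$ is the starter whose underlying conclist is $\{e' : \xstart{e'} \le i < \xend{e'}\}$, with source interface $\{e' : \xstart{e'} < i\}$ and started events $\{e' : \xstart{e'} = i\}$. As only strict comparisons are available, I would render ``$\xstart{e'} \le \xstart x$'' as $\lnot(\xstart x < \xstart{e'})$ and ``$\xstart{e'} = \xstart x$'' (for $e'$ already known active at $\xstart x$) as $\lnot(\xstart{e'} < \xstart x)$, obtaining
\begin{multline*}
  D_\xst(x) \;\eqdef\; \lnot \srci(x) \;\land\; \exists y_1 \cdots \exists y_r.\; \Big[\, \textstyle\bigwedge_{m} a_m(y_m) \;\land\; \textstyle\bigwedge_{m < m'} y_m \evord y_{m'} \\
  {}\land\; \textstyle\bigwedge_m \big( \lnot(\xstart x < \xstart{y_m}) \land \xstart x < \xend{y_m} \big) \;\land\; \textstyle\bigwedge_{m \in I_A} \lnot(\xstart{y_m} < \xstart x) \;\land\; \textstyle\bigwedge_{m \notin I_A} \xstart{y_m} < \xstart x \\
  {}\land\; \forall z.\, \big( (\lnot(\xstart x < \xstart z) \land \xstart x < \xend z) \implies \textstyle\bigvee_m z = y_m \big) \;\Big].
\end{multline*}
The dual formula, for a proper terminator $D = \terminator{U}{B}$ with events $d_1 \evord \cdots \evord d_r$, labels $a_m$, and $I_B = \{m : d_m \in B\} \neq \emptyset$, comes from the mirrored half of Remark~\ref{rem:keyObs} (a terminator step $i$ has conclist $\{e' : \xstart{e'} < i \le \xend{e'}\}$ and terminates the $e'$ with $\xend{e'} = i$):
\begin{multline*}
  D_\xen(x) \;\eqdef\; \lnot \tgti(x) \;\land\; \exists y_1 \cdots \exists y_r.\; \Big[\, \textstyle\bigwedge_{m} a_m(y_m) \;\land\; \textstyle\bigwedge_{m < m'} y_m \evord y_{m'} \\
  {}\land\; \textstyle\bigwedge_m \big( \xstart{y_m} < \xend x \land \lnot(\xend{y_m} < \xend x) \big) \;\land\; \textstyle\bigwedge_{m \in I_B} \lnot(\xend x < \xend{y_m}) \;\land\; \textstyle\bigwedge_{m \notin I_B} \xend x < \xend{y_m} \\
  {}\land\; \forall z.\, \big( (\xstart z < \xend x \land \lnot(\xend z < \xend x)) \implies \textstyle\bigvee_m z = y_m \big) \;\Big].
\end{multline*}

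Next I would verify correctness in both directions. For the backward direction, assume $\xstart e = i \neq -\infty$ (resp.\ $\xend e = i \neq +\infty$) with $P_i \simeq D$; fix an isomorphism, let $y_m$ be the image of $d_m$, and check each conjunct using Remark~\ref{rem:keyObs} and the semantics of the $\xstart/\xend$-comparisons. For the forward direction, $\lnot\srci(x)$ forces $\xstart e = i \neq -\infty$, so $P_i$ is a proper starter $\starter{U'}{A'}$; by Remark~\ref{rem:keyObs} the predicate $\lnot(\xstart x < \xstart z) \land \xstart x < \xend z$ holds for exactly the $z \in U'$, so the last line together with the ``active'' conjunct forces $\{y_1, \dots, y_r\} = U'$ (and $r = |U'|$); since the events of $U'$ form a $<_P$-antichain, $\evord_P$ restricts to a total order on it, so the conjuncts $y_m \evord y_{m'}$ ($m < m'$) — which also force the $y_m$ pairwise distinct — make $m \mapsto y_m$ the order isomorphism $(d_1 \evord \cdots \evord d_r) \to (U', \evord_{P_i})$; the conjuncts $a_m(y_m)$ fix the labels; and the $\xstart$-conjuncts identify $A'$ with $A$. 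Hence $P_i \simeq D$. (Note $x$ is active at step $i$ with $\xstart x = i$, so it is one of the $y_m$ and lands in the ``$A$''-part, as it must since $P_i$ starts $x$.) The terminator case is symmetric.

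The step I expect to be the main obstacle — or rather, the one demanding real care — is the bookkeeping around the $\pm\infty$ conventions for interface events, given that Lemma~\ref{lem:compare} only provides strict comparisons. Concretely, one must check that ``$\xstart y \le \xstart x$'' rendered as $\lnot(\xstart x < \xstart y)$ is correct when $y \in S_P$ (it is: $\xstart x < \xstart y$ is then false, matching $-\infty \le i$); that ``$\xstart x < \xend z$'' comes out true when $z \in T_P$; that ``$i \le \xend z$'' must be encoded as $\lnot(\xend z < \xend x)$ rather than $\lnot(\xend x < \xend z)$; and that the ``active at step $i$'' predicate really carves out the conclist of $P_i$, being off neither at the boundaries of antichains nor through a slip between $\le$ and $<$. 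Each such check is short once isolated, and the remaining verifications are routine.
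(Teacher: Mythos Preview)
Your proposal is correct and follows essentially the same approach as the paper: handle the cross cases by setting the formula to false, then for a proper starter existentially quantify over the events $y_1,\dots,y_r$ of the step, pin down the active events via $\lnot(\xstart x < \xstart{y_m}) \land \xstart x < \xend{y_m}$, separate source from started events via $\xstart{y_m} < \xstart x$ or its negation, fix labels and event order, and close off with a universal clause forcing completeness. The only cosmetic differences are that the paper packages the ``active'' conjunct and the universal clause into a single biconditional $\phi(x,x_1,\dots,x_\ell)$, and requires $x_i \evord x_{i+1}$ only for consecutive pairs (which suffices since $\evord$ is a total order on the antichain), whereas you assert it for all pairs $m<m'$; both are equivalent.
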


\begin{proof}
	We give the definition of $D_\xst(x)$, the one for $D_\xen(x)$
	is similar.
	We obviously set $D_\xst(x) \eqdef \bot$ when $D \notin \St_+$.
	Now assume $D \in \St_+$.
	Let $d_1 \evord \cdots \evord d_\ell$ the events in~$D$, and $a_i$
	the label of $d_i$.
	We first define a formula $\phi(x,x_1,\ldots,x_\ell)$ which is true when
	$x_1,\ldots,x_\ell$ are precisely the events occurring in the $\xstart x^{\text{th}}$ step of $\Phip(P)$ (see Remark~\ref{rem:keyObs}):
	\begin{multline*}
          \qquad
          \phi(x,x_1,\ldots,x_\ell) \eqdef {} \\
          \lnot \srci(x) \land \forall y.
          \Big( \lnot (\xstart x < \xstart y) \land (\xstart x < \xend y) \Big) \iff
          \bigvee_{1 \le i \le \ell} y = x_i
          .
          \qquad
	\end{multline*}
	Note that the condition $\lnot \srci(x)$ is here to ensure that $\xstart x \neq -\infty$.
	Note also that, given that $D$ is a starter, for any $i$, $d_i \in T_D$.
	We also need to ensure that sources of $D$ are preserved, \ie to know among $x_1,\dots,x_\ell$ which ones are started strictly before $x$.
	For all $i \in \{1,\ldots,\ell\}$, we let
	\[
	\psi_i(x,y) \eqdef
	\begin{cases}
	\xstart y < \xstart x & \text{if } d_i \in S_D \\
	\lnot (\xstart y < \xstart x )& \text{otherwise }
	\end{cases}
	\]
	Then, we use these formulas to specify which events should be started 
	together with $x$ in $D$ or not.
	We then define
	\[
	D_\xst(x) \eqdef
	\exists x_1, \ldots, x_\ell.\, 
	\phi(x,x_1,\ldots,x_\ell) \land
	\bigwedge_{1 \le i \le \ell} a_i(x_i) \land \psi_i(x,x_i) \land
	\bigwedge_{1 \le i < \ell} x_i \evord x_{i+1}.
	\] 
\end{proof}

The following lemma proves the second part of Theorem~\ref{th:MSOeq} when identities are not taken into account.

\begin{lemma}\label{prop:WtoPlabels}
	For every sentence $\phi \in \MSOW$,
	there exists a sentence $\overline \phi \in \MSOP$ 
	such that $L(\overline \phi) \cap \iiPoms_{\le k} \setminus \Id= \Psip(L(\phi) \cap \SCoh \setminus \Id)$.
\end{lemma}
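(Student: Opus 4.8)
The plan is to realise the map $\Phip$ as a monadic second-order transduction (in the sense of \cite{DBLP:books/daglib/0030804}) from $\iiPoms_{\le k}\setminus\Id$ to $\Omegaswb^*$, and then to let $\overline\phi$ be the backwards translation of $\phi$ along it; equivalently, one can build $\overline\phi$ directly by induction on $\phi$, mirroring the previous subsection but in the opposite direction. Fix $P\in\iiPoms_{\le k}\setminus\Id$ and write $\Phip(P)=P_1\dotsm P_n$. Since $P$ is not an identity, this is a non-empty alternating sequence of \emph{proper} starters and terminators, so each step $P_i$ starts at least one event (when $P_i\in\St_+$) or terminates at least one event (when $P_i\in\Te_+$), and I would represent the position $i$ by the $\evord$-minimal such event. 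The one complication is that an event $e$ is started at one step, $\xstart{e}$, and terminated at another, $\xend{e}$, so $e$ may be the chosen representative of \emph{two} positions at once; I therefore use \emph{two copies} of each event, a start-copy and an end-copy, as is standard for MSO transductions with two-element fibres.

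First I would record the ingredients, all $\MSOP$-definable by Lemmas~\ref{lem:compare} and~\ref{lem:StTe(x)Def}: a guard ``$x$ is the $\evord$-minimal event started at step $\xstart{x}$ and $\xstart{x}\neq-\infty$'' for start-copies together with its dual for end-copies; the linear order between two represented positions, given by $\xstart{x}<\xstart{y}$ for two start-copies, by $\xstart{x}<\xend{y}$ for a start-copy and an end-copy, and symmetrically for the remaining two combinations (all four comparisons come from Lemma~\ref{lem:compare}, and since a starter step and a terminator step are never equal this yields a strict \emph{total} order on the represented positions); and the alphabet label, namely that the start-copy of $x$ carries the letter $D\in\St_+$ iff $D_\xst(x)$ holds and the end-copy of $x$ carries $D\in\Te_+$ iff $D_\xen(x)$ holds. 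By Remark~\ref{rem:keyObs} applied to $\Phip(P)$, these formulas describe inside $P$ exactly the word $P_1\dotsm P_n$: the guards pick out $n$ copies, one per step; the order formula arranges them as $1<\dotsm<n$; and the label formula assigns $P_i$ to the $i$-th copy.

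Next I would obtain $\overline\phi$ from $\phi$ by the standard backwards translation along this transduction: a first-order variable of $\phi$ becomes a first-order variable of $\overline\phi$ paired with a choice of copy (handled by a finite disjunction), a second-order variable $X$ becomes a pair $X_{\mathrm{st}},X_{\mathrm{en}}$ of second-order variables restricted to the respective guards (as in the treatment of $\exists X$ in the previous subsection), and each atom $D(x)$, $x<y$ or $x\in X$ of $\phi$ is replaced by the corresponding $\MSOP$ formula above according to the chosen copies. An induction on $\phi$ then shows that $P\models\overline\phi$ iff $\Phip(P)\models\phi$ for every $P\in\iiPoms_{\le k}\setminus\Id$. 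Since $\Psip$ restricts to a bijection between $\SCoh\setminus\Id$ and $\iiPoms\setminus\Id$ inverse to $\Phip$ (Theorem~\ref{th:genipoms} together with Lemma~\ref{le:sparseuniq}), this gives $L(\overline\phi)\cap\iiPoms_{\le k}\setminus\Id=\{P\in\iiPoms_{\le k}\setminus\Id\mid\Phip(P)\in L(\phi)\}=\Psip(L(\phi)\cap\SCoh\setminus\Id)$, as required.

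I expect the main obstacle to be bookkeeping rather than a genuinely new difficulty, since the two substantive facts --- that the relative order of starts and terminations in the sparse decomposition is $\MSOP$-definable, and that the letter of each step can be read off from the event it starts or terminates --- are already established in Lemmas~\ref{lem:compare} and~\ref{lem:StTe(x)Def}. What needs care is that word positions must be matched with \emph{copies of} events rather than with events: a naive one-event-per-position encoding breaks because a single event can be the $\evord$-minimal event of both its start-step and its end-step, and one must check that the order formula behaves correctly across the two copies (it does, because the four $\xst/\xen$-comparisons exhaust the cases and starter steps are disjoint from terminator steps). The hypotheses $P\notin\Id$ and $\wid(P)\le k$ are exactly what guarantees that every step is a proper starter or terminator (so the representative exists) and a letter of the finite alphabet $\Omegaswb$ (so Lemma~\ref{lem:StTe(x)Def} applies to it).
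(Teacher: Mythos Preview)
Your proposal is correct and follows essentially the same route as the paper: represent each position of the sparse word $\Phip(P)$ by a pair $(e,b)$ with $e\in P$ and $b\in\{\xst,\xen\}$, use Lemmas~\ref{lem:compare} and~\ref{lem:StTe(x)Def} to define the order and the letter predicates, and then translate $\phi$ inductively, splitting each first-order variable into a disjunction over the two copies and each second-order variable into a pair $X_\xst,X_\xen$. The only difference is cosmetic: you single out the $\evord$-minimal started (resp.\ terminated) event as a \emph{canonical} representative of each step, which makes the transduction injective on positions, whereas the paper lets \emph{any} event started (resp.\ terminated) at a step stand for it and relies on the fact that all such choices yield the same values of $\xst(\cdot)$ and $\xen(\cdot)$; your normalisation is a harmless refinement that sidesteps having to argue this invariance for the membership atom.
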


\begin{proof}
	The key idea is that, given a sparse step decomposition
	$w = P_1 \dots P_n \in \Omegaswb^+ \setminus \Id$, every $P_i$ within $w$ contains some event $e$ which is either started or terminated in $P = \Psip(w)$.
	Thus, we associate to every position $i \in \{1,\ldots ,n\}$  a pair $(e,b)$ where $e \in P$ and $b \in \{\xst,\xen\}$ indicates whether we are looking for a starter or a terminator.
	The other events of $P$ are captured by the formulas of Lemma~\ref{lem:StTe(x)Def} and the other conditions are obtained inductively.

	More formally, we proceed similarly to the translation in the previous section.
	For every function $\tau$ from free first-order variables to $\{\xst,\xen\}$ 
	and every $\MSOW$ formula $\phi$, 
	we define $\overline \phi_\tau$ as follows:
	\begin{align*}
		\overline{\exists x.\, \phi}_\tau
		& := (\exists x.\, \lnot \srci(x) \land \overline{\phi}_{\tau[x \mapsto \xst]}) \lor 
		(\exists x.\, \lnot \tgti(x) \land \overline{\phi}_{\tau[x \mapsto \xen]}) \\
		\overline {x < y}_\tau
		& := \tau(x)(x) < \tau(y)(y) \quad \text{(see Lemma~\ref{lem:compare})} \\
		\overline {P(x)}_\tau
		& := P_{\tau(x)}(x) \quad \text{(see Lemma~\ref{lem:StTe(x)Def})} \\
		\overline{\exists X.\, \phi}_\tau
		& := \exists X_\xst, X_\xen. \, (\forall x.\, x \in X_\xst \implies \lnot \srci(x)) \land (\forall x.\, x \in X_\xen \implies \lnot \tgti(x)) \land \overline{\phi}_\tau \\
		\overline{x \in X}_\tau & := x \in X_{\tau(x)} \\
		\overline{\phi \lor \psi}_\tau & := \overline{\phi}_\tau \lor \overline{\psi}_\tau \\
		\overline{\lnot \phi}_\tau & := \lnot \overline{\phi}_\tau.
	\end{align*}
\end{proof}

Again, the second assertion of Theorem~\ref{th:MSOeq} follows almost directly:
\begin{proposition}
	For every sentence $\phi \in \MSOW$,
	there exists a sentence $\overline \phi \in \MSOP$
	such that $L(\overline \phi) = \Psip(L(\phi) \cap \SCoh)$.
\end{proposition}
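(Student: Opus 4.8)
The plan is to obtain $\overline\phi$ from the sentence provided by Lemma~\ref{prop:WtoPlabels} by patching the two places where that lemma stays silent: the behaviour on ipomsets of width ${>}\,k$, and the behaviour on identity ipomsets. Let $\overline\phi_0\in\MSOP$ be the sentence of Lemma~\ref{prop:WtoPlabels}, so that $L(\overline\phi_0)\cap\bigl(\iiPoms_{\le k}\setminus\Id\bigr)=\Psip(L(\phi)\cap\SCoh\setminus\Id)$.

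Next I would record two structural facts about the target $\Psip(L(\phi)\cap\SCoh)$. By Theorem~\ref{th:genipoms} and Lemma~\ref{le:sparseuniq}, $\Phip$ is a two-sided inverse of $\Psip$ on $\SCoh$, so $\Psip$ restricts to a bijection $\SCoh\to\iiPoms$, and under this bijection the preimage of $\Id$ is exactly the set of empty coherent words $\{\id_U\}$: the sparse step decomposition of an identity ipomset is empty, whereas every non-empty sparse word begins with a proper starter or terminator and hence has a non-identity $\Psip$-image. Therefore $\Psip(L(\phi)\cap\SCoh)$ splits as $\Psip(L(\phi)\cap\SCoh\setminus\Id)\cup\bigl(\Psip(L(\phi)\cap\SCoh)\cap\Id\bigr)$, where the second set equals $\Id$ if the empty word satisfies $\phi$ and is empty otherwise. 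Moreover, since $L(\phi)\subseteq\Omegaswb^*$, every non-empty $w\in L(\phi)\cap\SCoh$ is a gluing of width-$\le k$ letters, so the first set is contained in $\iiPoms_{\le k}$, and Lemma~\ref{prop:WtoPlabels} already computes it as $L(\overline\phi_0)\cap(\iiPoms_{\le k}\setminus\Id)$.

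It then remains to guard $\overline\phi_0$ so that outside $\iiPoms_{\le k}\setminus\Id$ it contributes nothing, and to re-inject the identity part by hand. Let $\mathtt{Wid}_{\le k}$ be the first-order sentence $\lnot\exists x_1,\dots,x_{k+1}.\ \bigwedge_{1\le i<j\le k+1}(x_i\evord x_j\lor x_j\evord x_i)$, which asserts width $\le k$ (distinct $<$-incomparable events are $\evord$-comparable, and acyclicity of $\evord$ forces the $x_i$ distinct), and let $\mathtt{Id}$ be $(\lnot\exists y\,\exists z.\ y<z)\land(\forall x.\ \srci(x)\land\tgti(x))$, defining the identity ipomsets. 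Whether the empty word satisfies $\phi$ is decidable (it is an MSO question over finite words), so I can put
\[
  \overline\phi\ \eqdef\
  \begin{cases}
    (\overline\phi_0\land\mathtt{Wid}_{\le k}\land\lnot\mathtt{Id})\ \lor\ \mathtt{Id}
      & \text{if the empty word satisfies }\phi,\\[4pt]
    \overline\phi_0\land\mathtt{Wid}_{\le k}\land\lnot\mathtt{Id}
      & \text{otherwise.}
  \end{cases}
\]
Using the decomposition of the previous paragraph, $L(\overline\phi)=\Psip(L(\phi)\cap\SCoh\setminus\Id)\cup\bigl(\Psip(L(\phi)\cap\SCoh)\cap\Id\bigr)=\Psip(L(\phi)\cap\SCoh)$, and every step of the construction is effective.

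The delicate point is the identity/empty-word bookkeeping of the second paragraph: one must check that $\overline\phi_0$'s unconstrained behaviour on identities and on wide ipomsets really is suppressed (hence the conjuncts $\mathtt{Wid}_{\le k}\land\lnot\mathtt{Id}$), and that the empty coherent words — the only $\Psip$-preimages of identity ipomsets inside $\SCoh$ — are re-injected correctly; this parallels the treatment of $\id_\emptyset$ in the previous subsection. Everything else is routine on top of Lemma~\ref{prop:WtoPlabels}, Lemma~\ref{le:sparseuniq}, and decidability of MSO over finite words.
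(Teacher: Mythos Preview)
Your overall strategy—take the sentence from Lemma~\ref{prop:WtoPlabels}, guard it by a width-and-non-identity formula, and handle identities separately—is exactly the paper's. The non-identity part and the guards $\mathtt{Wid}_{\le k}$, $\lnot\mathtt{Id}$ are fine. The gap is in your treatment of identities.

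You claim that the sparse representative of every $\id_U$ is ``the empty word'' and therefore reduce the identity contribution to the single question ``does the empty word satisfy $\phi$?'', after which you include either all of $\Id$ or nothing. This is not correct. The paper states that ``the sparse step decomposition of an identity is the identity itself'' and then explicitly reasons about ``identities in $\Omegaswb$'': the sparse representative of $\id_U$ is the length-$1$ word whose single letter is $\ilo{U}{U}{U}\in\Omegaswb$, not the empty word. Different identities are thus represented by \emph{different} length-$1$ words, and an $\MSOW$ sentence can accept some and reject others—take $\phi:=\exists x.\,D(x)$ for one fixed $D\in\Id\cap\Omegaswb$: then $D\models\phi$ but $\epsilon\not\models\phi$, so $\id_D\in\Psip(L(\phi)\cap\SCoh)$ while your $\overline\phi$ rejects every identity. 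Even under your own reading, the empty word carries no conclist, so ``$\Psip$ of the empty word'' is not well-defined and cannot justify including identities of \emph{arbitrary} width (your $\mathtt{Id}$ has no width bound).

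The paper's fix is the one you should adopt: for each of the finitely many $D\in\Id\cap\Omegaswb$, decide whether the length-$1$ word $D$ satisfies $\phi$; let $\psi_1$ be the disjunction of $\MSOP$ sentences pinning down exactly those identity ipomsets; and set $\overline\phi=\psi_1\lor(\overline\phi_0\land\psi_3)$ where $\psi_3$ defines $\iiPoms_{\le k}\setminus\Id$. With this per-identity check in place of your single empty-word test, your argument goes through.
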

\begin{proof}
	Recall that the sparse step decomposition of an identity is the identity itself.
	In addition, it is decidable whether an identity satisfies $\phi$ and there are finitely many identities in $\Omegaswb$.
	Moreover, identities are trivially $\MSOP$-definable.
	Let $\psi_1$ be the disjunction of the $\MSOP$ formulas that hold for the identities satisfying $\phi$, and $\psi_2$ the formula obtained from Lemma~\ref{prop:WtoPlabels}.
	Notice that Lemma~\ref{prop:WtoPlabels} doesn't specify what $\psi_2$ evaluates to on identities or on ipomsets of width $> k$, so it could be satisfied by ipomsets that we want to exclude.
	However, the set $\iiPoms_{\le k} \setminus \Id$ is easily definable in $\MSOP$ by
	\[
		\psi_3 \eqdef (\exists x. \lnot \srci(x) \lor \lnot \tgti(x)) \land \forall x_1,\ldots, x_{k+1}.\, 
		\bigvee_{1 \le i,j \le k+1} x_i < x_j.
	\]
	We then let $\overline \phi = \psi_1 \lor (\psi_2 \land \psi_3)$.
\end{proof}
In addition, as an immediate corollary, we have:
\begin{corollary}
	For every $\sim$-invariant sentence $\phi \in \MSOW$, 
	$L(\overline \phi) = \Psip(L(\phi) \cap \Coh) = \Psi(\widetilde L(\phi))$.
\end{corollary}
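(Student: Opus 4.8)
The plan is to deduce all three equalities by chaining the preceding Proposition, which asserts $L(\overline\phi) = \Psip(L(\phi)\cap\SCoh)$ for every $\MSOW$ sentence $\phi$ over $\Omegaswb^*$, with two further identities; only one of the two will actually use $\sim$-invariance. I would first fix notation: since everything lives over the finite alphabet $\Omegaswb$ determined by the context, ``$\Coh$'' here abbreviates $\Coh_{\le k}$ and ``$\SCoh$'' the corresponding set of sparse coherent words, so that $L(\phi)$, $\Coh$ and $\SCoh$ are comparable subsets of $\Omegaswb^*$.

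Step one is to show $\Psip(L(\phi)\cap\SCoh) = \Psip(L(\phi)\cap\Coh)$. The inclusion ``$\subseteq$'' is immediate from $\SCoh\subseteq\Coh$. For ``$\supseteq$'', take $w\in\Coh$ with $w\models\phi$; by Lemma~\ref{le:sparseuniq} the step sequence $[w]_\sim$ contains a unique sparse representative $w'$, so $w'\in\SCoh$ and $w'\sim w$, and $\sim$-invariance of $\phi$ gives $w'\models\phi$, i.e.\ $w'\in L(\phi)\cap\SCoh$. Since $\Psip$ is invariant under $\sim$ (the straightforward lemma that $w\sim w'$ implies $\Psip(w)=\Psip(w')$, equivalently $\Psip(u)=\Psi([u]_\sim)$), we conclude $\Psip(w)=\Psip(w')\in\Psip(L(\phi)\cap\SCoh)$. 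Step two is the purely definitional identity $\Psi(\widetilde L(\phi)) = \Psip(L(\phi)\cap\Coh)$: by definition $\widetilde L(\phi)=\{[s]_\sim\mid s\in\Coh,\ s\models\phi\}$ and $\Psi([s]_\sim)=\Psip(s)$, whence $\Psi(\widetilde L(\phi)) = \{\Psip(s)\mid s\in\Coh,\ s\models\phi\} = \Psip(L(\phi)\cap\Coh)$. Chaining gives $L(\overline\phi)=\Psip(L(\phi)\cap\SCoh)=\Psip(L(\phi)\cap\Coh)=\Psi(\widetilde L(\phi))$, as claimed.

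There is no real obstacle; the work is organisational. The one point requiring care is that $\sim$-invariance be invoked exactly at the passage from a coherent word to its sparse normal form (via Lemma~\ref{le:sparseuniq}) and nowhere else, and that the width bound $k$ be kept fixed so the three sets really live over $\Omegaswb$. It is also worth remarking that $\sim$-invariance is precisely what guarantees $\widetilde L(\phi)$ depends only on $\sim$-classes; the identity $\Psi(\widetilde L(\phi)) = \Psip(L(\phi)\cap\Coh)$ itself, however, holds for every $\phi$.
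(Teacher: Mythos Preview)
Your proof is correct and matches the paper's intent: the paper states this as an ``immediate corollary'' of the preceding Proposition without giving any argument, and your two steps (passing from $\SCoh$ to $\Coh$ via the sparse normal form and $\sim$-invariance, then unfolding the definition of $\widetilde L(\phi)$) are precisely the natural way to make that immediacy explicit.
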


\subsection{A B\"uchi-Elgot-\!Trakhtenbrot theorem}
\label{sec:buchi}

Recall that  languages of HDAs have bounded width and are closed
under subsumption, unlike $\MSOP$-definable languages (see Example~\ref{ex:notwb}).
Therefore, we can only translate an $\MSOP$ formula into
an equivalent HDA if it has these two properties.
We have the following.

\begin{theorem}
	\label{th:main}
	Let $L \subseteq \iiPoms$.
	\begin{enumerate}
        \item If $L$ is $\MSOP$-definable, then $L_{\le k}\down$ is regular for all $k \in \Nat$.
        \item If $L$ is regular, then it is $\MSOP$-definable. 
	\end{enumerate}
	Moreover, the constructions are effective in both directions.
\end{theorem}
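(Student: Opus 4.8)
The plan is to prove the two implications separately, relying on Theorem~\ref{th:MSOeq} to move between $\MSOP$ and $\MSOW$, on Theorem~\ref{th:HDA-STA} to move between HDAs and ST-automata, on the classical Büchi-Elgot-Trakhtenbrot theorem to move between $\MSOW$ and finite automata over $\Omegaswb$, and on the Kleene theorem for HDAs (Theorem~\ref{th:kleene}) to translate rational expressions back into HDAs.

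For the first assertion, suppose $L$ is $\MSOP$-definable, say $L = L(\phi)$ for some $\MSOP$ sentence $\phi$, and fix $k$. By Theorem~\ref{th:MSOeq}(1) there is an $\MSOW$ sentence $\widehat\phi$ over $\Omegaswb^*$ with $L(\widehat\phi) = \Psip^{-1}(L(\phi)_{\le k})$. Since $\Omegaswb$ is finite, the classical Büchi-Elgot-Trakhtenbrot theorem over words yields a finite automaton $A$ over $\Omegaswb$ with $\Lang(A) = L(\widehat\phi)$. Intersecting with the regular language $\Coh_{\le k}$ of coherent words, we may regard $A$ (after the obvious state-labelling consistency check of Definition~\ref{de:staut}, which only restricts the transition relation) as a finite ST-automaton whose language is $\Phi(L(\phi)_{\le k})$, using the corollary stating $\widetildeL(\widehat\phi) = \Phi(L(\phi)_{\le k})$. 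Applying $\Psi$ and Theorem~\ref{th:genipoms} gives $L(\phi)_{\le k}$ as the image of this ST-automaton under $\Psi$. What remains is to turn this into a finite HDA for $L_{\le k}\down$; I would obtain a rational expression for $L(\phi)_{\le k}$ directly from the finite automaton $A$ by the classical Kleene theorem over $\Omegaswb$, rewrite each letter $D\in\Omegaswb$ (a starter or terminator, hence a discrete ipomset) as the corresponding discrete ipomset, replace word concatenation by gluing $*$ and Kleene star by Kleene plus, and close under subsumption; by Theorem~\ref{th:kleene} this rational language of ipomsets is regular, and one checks that it equals $L(\phi)_{\le k}\down = L_{\le k}\down$ (subsumption closure is forced anyway, and gluing of discrete ipomsets recovers exactly the step decompositions).

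For the second assertion, suppose $L$ is regular, i.e.\ $L = \Lang(\hda H)$ for a finite HDA $\hda H$. By Lemma~\ref{lem:finitewidth}, $L$ has finite width, say $\wid(L)\le k$, so $L = L_{\le k}$. By Theorem~\ref{th:HDA-STA}, $\ST(\hda H)$ is a finite ST-automaton with $\Lang(\ST(\hda H)) = \Phi(L)$; forgetting the state labels gives a finite automaton over $\Omegaswb$ whose word language $M$ satisfies $M\subseteq\Coh_{\le k}$ and $[M]_\sim = \Phi(L)$, and moreover $M\cap\SCoh$ maps onto $\Phi(L)$ as well (every step sequence contains a sparse word, and $\Phip(L)\subseteq M$ by construction of $\ST(\hda H)$ together with Lemma~\ref{le:sparseuniq}). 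By the classical Büchi-Elgot-Trakhtenbrot theorem there is an $\MSOW$ sentence $\phi$ with $L(\phi) = M$. By Theorem~\ref{th:MSOeq}(2) there is an $\MSOP$ sentence $\overline\phi$ with $L(\overline\phi) = \Psip(L(\phi)\cap\SCoh) = \Psip(M\cap\SCoh) = \Psi(\Phi(L)) = L$. Hence $L$ is $\MSOP$-definable.

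The main obstacle is bookkeeping rather than a deep new idea: one must be careful that the finite automaton produced by the classical theorem over $\Omegaswb$ really is (or can be massaged into) an ST-automaton in the sense of Definition~\ref{de:staut}, that restricting to coherent words and to sparse representatives does not lose any elements of $\Phi(L)$ (this is exactly where Lemma~\ref{le:sparseuniq} and the $\sim$-invariance corollary are needed), and that in the first assertion the passage from an $\MSOW$-definable regular word language to a \emph{subsumption-closed} rational ipomset language genuinely yields $L_{\le k}\down$ and not something larger or smaller. Effectiveness is immediate since every step — the classical Büchi-Elgot-Trakhtenbrot translations, the automaton/rational-expression conversions, the HDA/ST-automaton translations of Section~\ref{se:hda-st-a}, and the constructions in Theorem~\ref{th:MSOeq} — is explicitly effective.
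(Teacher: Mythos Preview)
Your proposal is correct and follows essentially the same route as the paper. Two minor points: the ST-automaton detour you sketch in Part~1 (massaging the finite automaton over $\Omegaswb$ into an ST-automaton by a ``state-labelling consistency check'') does not work in general---an arbitrary automaton over $\Omegaswb$ need not admit a consistent conclist labelling of states---but you correctly abandon it for the rational-expression argument, which is exactly what the paper does (citing \cite[Proposition~21]{DBLP:journals/lmcs/FahrenbergJSZ24} for the passage from word concatenation to gluing); and in Part~2 the paper explicitly separates the identities $I=L\cap\Id$, builds the classical automaton from an HDA for $L\setminus\Id$ (so that no initial state is accepting and the empty-word ambiguity never arises), and adds $I$ back as single-letter words afterwards, whereas you lean on the $A=\emptyset$ self-loops in $\ST(\hda H)$ to ensure $\Phip(\id_U)\in M$---both approaches work.
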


\begin{proof}
	Assume that $L$ is $\MSOP$-definable.
	Then by Theorem~\ref{th:MSOeq}, $\Psi^{-1}(L_{\le k})$ is also 
	$\MSOW$-definable.
	By the standard B\"uchi-Elgot-Trakhtenbrot and Kleene theorems,
	we can construct a rational expression $E$ over $\Omegaswb$ such
	that $L(E) = \Psi^{-1}(L_{\le k})$.
	By replacing concatenation of words by gluing composition in $E$ (see \cite[Proposition~21]{DBLP:journals/lmcs/FahrenbergJSZ24} or \cite[Lemmas~28-31]{DBLP:journals/corr/abs-2503-07881} for a detailed construction),
	we get that $L_{\le k}\down$ is rational and thus effectively regular by 
	Theorem~\ref{th:kleene}.
	
%

	Conversely, assume that $L$ is regular and let $L' = L \setminus \Id$ and $I = L \setminus L'$.
	Obviously $L'$ is also accepted by some HDA $\hda{H}$.
	Let $k$ be the dimension of $\hda{H}$.
	Note that $I \subseteq \Omegaswb$ is finite.
	Let $A$ be the ST-automaton built from $\hda{H}$. 
	Note that none of the initial states of $\hda{H}$ (hence $A$) are accepting.
	We have by Theorem~\ref{th:HDA-STA}, $L(A) = \Phi(L')$.	
	From $A$ we can build a classical automaton $B$ by just forgetting state labels such that $\Psip(L(B)) = L'$.
	In addition, one can easily build another classical automaton $C$ such that $L(C) = L(B) \cup I$.
	Thus $\Psip(L(C)) = L$.
	By construction, $\Phip(L) \subseteq L(C)$.
	Since $L(C)$ is regular, it is $\MSOW$-definable.
	By Theorem~\ref{th:MSOeq}, $\Psip(L(C) \cap \Phip(L)) = L$ is $\MSOP$-definable.
\end{proof}

As a special case, the following corollary holds for (downward-closed) languages of $\iiPoms_{\le k}$.

\begin{corollary}
	For all $k \in \Nat$, a language $L \subseteq \iiPoms_{\le k}$ is regular iff it is $\MSOP$-definable.
\end{corollary}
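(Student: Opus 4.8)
The plan is to derive the corollary directly from Theorem~\ref{th:main}, the only point being to observe that the closure operations appearing there become vacuous once the width is already bounded by $k$. Throughout, recall that a \emph{language} is by definition subsumption-closed, so the hypothesis means $L\subseteq\iiPoms_{\le k}$ and $L\down=L$.

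First I would treat the implication from $\MSOP$-definability to regularity. Assuming $L$ is $\MSOP$-definable, Theorem~\ref{th:main}(1) gives that $L_{\le k}\down$ is regular. Now I would note that $L_{\le k}=L\cap\iiPoms_{\le k}=L$ because $L\subseteq\iiPoms_{\le k}$, and that $L\down=L$ because $L$ is a language; hence $L_{\le k}\down=L$, so $L$ itself is regular.

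For the converse I would simply invoke Theorem~\ref{th:main}(2): a regular $L$ is $\MSOP$-definable, with the width bound playing no role here (it is automatic by Lemma~\ref{lem:finitewidth}, and in any case imposed by the hypothesis).

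I do not expect any real obstacle, since all the substantial work — the back-and-forth translations between $\MSOP$ and $\MSOW$ of Theorem~\ref{th:MSOeq}, the detour through ST-automata (Theorem~\ref{th:HDA-STA}), and the Kleene theorem for HDAs (Theorem~\ref{th:kleene}) — is already packaged inside the proof of Theorem~\ref{th:main}. The single thing worth spelling out is the equality $L_{\le k}\down=L$, which is precisely where restricting to $\iiPoms_{\le k}$ (rather than allowing arbitrary $\MSOP$-definable sets, cf.\ Example~\ref{ex:notwb}) lets us drop the width-truncation and subsumption-closure that are unavoidable in the general statement.
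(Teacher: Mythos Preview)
Your proposal is correct and matches the paper's approach exactly: the paper presents this corollary as an immediate special case of Theorem~\ref{th:main} without further proof, and your argument spelling out that $L_{\le k}\down = L$ (using $L\subseteq\iiPoms_{\le k}$ and the standing convention that a language is subsumption-closed) is precisely the unpacking that justifies it.
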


Since emptiness of HDAs is decidable \cite{AMRANE2025115156}, the following are also decidable:
\begin{enumerate}
\item Given $\phi \in \MSOP$ such that $L(\phi) = L(\phi)_{\le k}\down$,
  does there exist $P \in \iiPoms$ such that $P \models \phi$?
\item Given $\phi \in \MSOP$ such that $L(\phi) = L(\phi)_{\le k}\down$
  and an HDA $\hda{H}$, is $L(\hda{H}) \subseteq L(\phi)$?
\end{enumerate}
That is to say the following.

\begin{corollary}
	For $\MSOP$ sentences $\phi$ such that
	$L(\phi) = L(\phi)_{\le k}\down$, the satisfiability problem and the model-checking problem for HDAs  are both decidable.
\end{corollary}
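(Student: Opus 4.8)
The corollary merely repackages the two decidability claims stated just above, so the plan is to establish each of them. \emph{Satisfiability.} Let $\phi\in\MSOP$ with $L(\phi)=L(\phi)_{\le k}\down$. Since subsumptions never increase width, $L(\phi)_{\le k}\down\subseteq\iiPoms_{\le k}$, hence $L(\phi)\subseteq\iiPoms_{\le k}$ and ``$\phi$ is satisfiable'' is the same as ``$L(\phi)\neq\emptyset$''. By Theorem~\ref{th:main}(1) applied to $L=L(\phi)$ we obtain effectively a finite HDA $\hda{H}_\phi$ with $L(\hda{H}_\phi)=L(\phi)_{\le k}\down=L(\phi)$, and then $L(\phi)\neq\emptyset$ iff $\hda{H}_\phi$ admits an accepting path, which is decidable since emptiness of finite HDAs is decidable~\cite{AMRANE2025115156}.

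\emph{Model checking.} Fix in addition a finite HDA $\hda{H}$, let $d=\dim(\hda{H})$ and $m=\max\{d,k\}$. I would reduce the query $L(\hda{H})\subseteq L(\phi)$ to an inclusion of regular word languages over the finite alphabet $\Omega_{\le m}$. On the one side, following the ``regular implies $\MSOP$-definable'' direction of the proof of Theorem~\ref{th:main} (pass to $\ST(\hda{H})$, forget the state labels, and re-insert the finitely many identities of $L(\hda{H})$), one obtains effectively a finite word automaton $C$ over $\Omega_{\le d}$ with $L(C)\subseteq\Coh$ and $\Psip(L(C))=L(\hda{H})$. On the other side, by Theorem~\ref{th:MSOeq}(1) there is effectively an $\MSOW$ sentence over $\Omega_{\le k}^*$ with word language $\Psip^{-1}(L(\phi)_{\le k})$, hence by the classical B\"uchi-Elgot-Trakhtenbrot theorem a finite word automaton $D$ over $\Omega_{\le k}$ with $L(D)=\Psip^{-1}(L(\phi)_{\le k})=\{w\in\Coh_{\le k}\mid\Psip(w)\in L(\phi)\}$, using $L(\phi)=L(\phi)_{\le k}$. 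The crux is then the equivalence
\[
  L(\hda{H})\subseteq L(\phi)\quad\Longleftrightarrow\quad L(C)\subseteq L(D),
\]
whose right-hand side is decidable, being an inclusion of regular word languages over a finite alphabet. For ``$\Leftarrow$'': if $P\in L(\hda{H})=\Psip(L(C))$, write $P=\Psip(w)$ with $w\in L(C)\subseteq L(D)$, so $P=\Psip(w)\in L(\phi)$. For ``$\Rightarrow$'': given $w=P_1\cdots P_n\in L(C)\subseteq\Coh$ we have $\Psip(w)\in L(\hda{H})\subseteq L(\phi)\subseteq\iiPoms_{\le k}$; since gluing never decreases width, $\wid(P_i)\le\wid(\Psip(w))\le k$ for every $i$, so $w\in\Coh_{\le k}$, and as $\Psip(w)\in L(\phi)$ this gives $w\in L(D)$.

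What is routine here is the plumbing: combining Theorems~\ref{th:main}, \ref{th:MSOeq} and~\ref{th:HDA-STA} with decidability of HDA emptiness and of regular inclusion. The delicate point, and where I expect the actual care to be needed, is making the word-level picture \emph{exact}: one must ensure that $C$ captures $L(\hda{H})$ on the nose -- including the empty ipomset and the identity ipomsets, not merely up to $\sim$ or up to the non-identity part -- which is why the precise construction from the proof of Theorem~\ref{th:main} is quoted rather than $\ST(\hda{H})$ alone; and one must check that narrowing $D$ to the alphabet $\Omega_{\le k}$ introduces no false negatives, which is exactly the place where the monotonicity $\wid(P*Q)\ge\max\{\wid(P),\wid(Q)\}$ of width under gluing is used.
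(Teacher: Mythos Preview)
Your proof is correct. For satisfiability you do exactly what the paper's one-sentence justification (``since emptiness of HDAs is decidable\ldots'') intends: Theorem~\ref{th:main}(1) effectively yields an HDA with language $L(\phi)_{\le k}\down=L(\phi)$, and one tests it for emptiness.

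For model checking the routes differ. The paper's phrasing points to a reduction to HDA emptiness: write $L(\hda{H})=L(\psi)$ via Theorem~\ref{th:main}(2), observe that $L(\hda{H})\subseteq L(\phi)$ iff $L(\psi\land\lnot\phi)_{\le m}\down=\emptyset$ for $m\ge\dim(\hda{H})$, and apply Theorem~\ref{th:main}(1) once more to obtain an HDA whose emptiness is then tested. You instead descend to the word level, producing word automata $C$ (from $\hda{H}$ via $\ST$ and the construction in the proof of Theorem~\ref{th:main}) and $D$ (from $\phi$ via Theorem~\ref{th:MSOeq}(1) and the classical B\"uchi--Elgot--Trakhtenbrot theorem), and decide the regular inclusion $L(C)\subseteq L(D)$. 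This is a valid alternative: it avoids re-encoding $\hda{H}$ in $\MSOP$ and makes the alphabet and width bookkeeping explicit --- in fact it is precisely the ST-automaton route the paper alludes to in the paragraph \emph{after} the corollary when it drops the downward-closure hypothesis. The paper's route is more uniform (both questions literally become HDA emptiness) but needs an extra pass through Theorem~\ref{th:main}.
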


Actually, looking more closely at our construction which
goes through ST-automata, we get the same result
for $\MSOP$ formulas even without the assumption that $L(\phi)$
is downward-closed (but still over $\iiPoms_{\le k}$, and not $\iiPoms$).
This could also be shown alternatively by observing that $\iiPoms_{\le k}$ has bounded treewidth (in fact, even bounded pathwidth), and applying 
Courcelle's theorem \cite{Courcelle90}. 
In fact our implied proof of decidability is relatively similar, using coherent words instead of path decompositions. 

Finally, using both directions of Theorem~\ref{th:main} and the closure properties of HDAs, we also get the following.

\begin{corollary}
	\label{cor:downMSOdef}
	For all $k\in \Nat$ and $\MSOP$-definable $L\subseteq \iiPoms_{\le k}$, $L\down$ is $\MSOP$-definable.
\end{corollary}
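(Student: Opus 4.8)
The plan is to obtain the corollary as an immediate consequence of the two directions of Theorem~\ref{th:main}. The first thing I would record is that subsumption never increases width: if $P\subseteq Q$ then, viewing both on the same underlying set, $<_Q\subseteq{<_P}$ (subsumption only adds precedence), so every $<_P$-antichain is a $<_Q$-antichain and hence $\wid(P)\le\wid(Q)$. Consequently $L\subseteq\iiPoms_{\le k}$ forces $L\down\subseteq\iiPoms_{\le k}$, which means $L=L_{\le k}$ and therefore $L_{\le k}\down=L\down$. I would also note in passing that $L\down$ is indeed a language in the sense of Section~\ref{se:hda-st-a}, i.e.\ downward closed, since subsumptions compose and hence $(L\down)\down=L\down$.

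Given this, the argument runs as follows. Since $L$ is $\MSOP$-definable, Theorem~\ref{th:main}(1), applied with this particular $k$, tells us that $L_{\le k}\down$ is regular; by the first step this set is exactly $L\down$. Now Theorem~\ref{th:main}(2) says that every regular language is $\MSOP$-definable, so applying it to the regular language $L\down$ yields that $L\down$ is $\MSOP$-definable, as desired. Moreover, because both implications of Theorem~\ref{th:main} are effective, an $\MSOP$ sentence for $L\down$ can be computed from one for $L$.

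There is essentially no obstacle here beyond having Theorem~\ref{th:main} available; the only point that deserves a moment's attention is the width bound. One must check that taking the downward closure does not leave $\iiPoms_{\le k}$, so that the same $k$ may be used when invoking Theorem~\ref{th:main}(1) --- and this is precisely what monotonicity of width under subsumption provides. (If instead $L$ were merely $\MSOP$-definable inside all of $\iiPoms$, the statement would fail, cf.\ Example~\ref{ex:notwb}, which is why the hypothesis $L\subseteq\iiPoms_{\le k}$ is essential.)
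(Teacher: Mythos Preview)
Your proof is correct and follows exactly the route the paper indicates: apply Theorem~\ref{th:main}(1) to obtain that $L_{\le k}\down=L\down$ is regular, then Theorem~\ref{th:main}(2) to conclude it is $\MSOP$-definable. Your explicit verification that subsumption does not increase width (so that $L\down\subseteq\iiPoms_{\le k}$) is the only detail the paper leaves implicit, and you handle it correctly.
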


Note that this property does \emph{not} hold for the class of \emph{all} pomsets \cite{DBLP:conf/apn/FanchonM09}.
Indeed, \cite[Example 37]{DBLP:conf/apn/FanchonM09} exposes a pomset language $L$
of width $2$ (and not interval)
such that $L$ is MSO-definable but $L\down$ is not.
Whether there is a class strictly in-between interval (i)pomsets and general pomsets
for which Corollary \ref{cor:downMSOdef} remains true is a question we leave open.
	
\section{Conclusion}

In this paper we have explored interval pomsets with interfaces (ipomsets) from both an algebraic and a logical perspective. 
We have introduced two categorically equivalent definitions of ipomsets.
We have also shown that to every ipomset
corresponds an equivalence class of words, called step sequences.
This implies that interval ipomsets are freely generated by certain discrete ipomsets
(starters and terminators) under the relation which composes subsequent starters and subsequent terminators.
We have transferred this isomorphism to cover subsumption by characterizing subsumption of ipomsets in terms of the swapping of starters and terminators in step sequences.
Finally, this isomorphism also holds operationally:
we have demonstrated that higher-dimensional automata (HDAs) can be translated into ST-automata, which accept the step sequences corresponding to the ipomsets of the original HDA.


We have also shown that the correspondence between step sequences and interval pomsets materializes logically when a width bound is fixed.
Indeed, we have shown using an MSO interpretation of interval pomsets into sparse step sequences and from step sequences into interval pomsets, that a set of step sequences is MSO-definable if and only if the elements of these equivalence classes are MSO-definable, if and only if their corresponding interval pomsets are MSO-definable (Corollary~\ref{cor:MSOdef}).
This induces a Büchi-Elgot-\!Trakhtenbrot theorem for HDAs.
The constructions use in particular the Kleene theorems for HDAs and words, and the Büchi-Elgot-\!Trakhtenbrot theorem for words.
As corollaries, the satisfiability and model checking problems for HDAs are both decidable.

Another corollary of our BET-like theorem is that, unlike for non-interval pomsets,
subsumption closures of MSO formulas over (interval) ipomsets are effectively computable;
that is, from a formula $\phi$ we may compute a formula $\phi\down$
such that an ipomset satisfies $\phi\down$ if and only if it is subsumed by one which satisfies $\phi$.
However, the construction of $\phi\down$ is not efficient, as the current workflow is to transform $\phi$ to an HDA and then back to get $\phi\down$.
We leave open the question whether the characterization of subsumptions by elementary operations on step sequences
(Theorem~\ref{th:subsumption})
may
lead to
more efficient constructions.

The decidability of the model checking problem has motivated further research into the expressive power of first-order logic over ipomsets.
An initial step in this direction appears in \cite{DBLP:journals/corr/abs-2410-12493}.
Along similar lines, another operational model has begun to receive attention: $\omega$-HDAs, that is, HDAs over infinite interval ipomsets \cite{DBLP:journals/corr/abs-2503-07881}.
In both areas, substantial work remains to be done.


\newcommand{\Afirst}[1]{#1} \newcommand{\afirst}[1]{#1}

\end{document}